\def\final{1}
\newtheorem{definitionenv}{Definition}
\newtheorem{lemmaenv}[definitionenv]{Lemma}
\newtheorem{theoremenv}[definitionenv]{Theorem}
\newtheorem{corollaryenv}[definitionenv]{Corollary}
\newtheorem{propositionenv}[definitionenv]{Proposition}
\newtheorem{conjectureenv}[definitionenv]{Conjecture}
\newtheorem{exampleenv}[definitionenv]{Example}
\newtheorem{app-lemmaenv}[section]{Lemma}
\newenvironment{lemma}{\begin{lemmaenv}\rm}{\end{lemmaenv}}
\newenvironment{theorem}{\begin{theoremenv}\rm}{\end{theoremenv}}
\newenvironment{example}{\begin{exampleenv}\rm}{\end{exampleenv}}
\newenvironment{conjecture}{\begin{conjectureenv}\rm}{\end{conjectureenv}}
\newenvironment{app-lemma}{\begin{app-lemmaenv}\rm}{\end{app-lemmaenv}}
\newcommand{\ba}{{\bf a}}
\newcommand{\bb}{{\bf b}}
\newcommand{\bc}{{\bf c}}
\newcommand{\e}{{\bf e}}
\newcommand{\f}{{\bf f}}
\newcommand{\bg}{{\bf g}}
\newcommand{\bh}{{\bf h}}
\newcommand{\bs}{{\bf s}}
\newcommand{\bu}{{\bf u}}
\newcommand{\bv}{{\bf v}}
\newcommand{\bw}{{\bf w}}
\newcommand{\x}{{\bf x}} 
\newcommand{\by}{{\bf y}}
\newcommand{\bz}{{\bf z}}
\newcommand{\cA}{{\cal A}}
\newcommand{\cC}{{\cal C}}
\newcommand{\cE}{{\cal E}}
\newcommand{\cF}{{\cal F}}
\newcommand{\cG}{{\cal G}}
\newcommand{\cN}{{\cal N}}
\newcommand{\cS}{{\cal S}}
\newcommand{\mF}{{\mathbb F}}
\newcommand{\be}{\begin{equation}}
\newcommand{\ee}{\end{equation}}
\newcommand{\bea}{\begin{eqnarray*}}
\newcommand{\eea}{\end{eqnarray*}}
\newcommand\wt{\mbox{{\rm wt}\,}}
\newcommand\Tr{\mbox{{\rm Tr}\,}}
\newcommand\dist{\mbox{{\rm dist}\,}}
\newcommand{\remove}[1]{}
\renewcommand{\le}{\leqslant}\renewcommand{\ge}{\geqslant}
\newcommand{\ket}[1]{|#1\rangle}
\newcommand{\ds}{_{\mathrm{DS}}}
\newcommand{\tD}{_{\mathrm{D}}}
\newcommand{\tS}{_{\mathrm{S}}}
\newcommand{\dsp}{_{\mathrm{DS}}^\perp}
\newcommand{\QED}{\qed}
\newcommand{\mynote}[2]{{\color{#1} \marginpar{\tiny #2}}}
\newcommand{\mybignote}[2]{{\color{#1} $\langle \langle$ #2$\rangle \rangle$}}
\newcommandx{\rednote}[2][1=]{\todo[linecolor=red,backgroundcolor=red!25,bordercolor=red,#1]{#2}}
\newcommandx{\bluenote}[2][1=]{\todo[linecolor=blue,backgroundcolor=blue!25,bordercolor=blue,#1]{#2}}
\newcommandx{\yellownote}[2][1=]{\todo[linecolor=yellow,backgroundcolor=yellow!25,bordercolor=yellow,#1]{#2}}
\newcommandx{\greennote}[2][1=]{\todo[inline,linecolor=olive,backgroundcolor=green!25,bordercolor=olive,#1]{#2}}
\newcommand{\mynote}[2]{}
\newcommand{\mybignote}[2]{}
\newcommand{\rednote}[2][1=]{}
\newcommand{\bluenote}[2][1=]{}
\newcommand{\greennote}[2][1=]{}
\newcommand{\yellownote}[2][1=]{}
\newcommand{\cnote}[1]{\greennote[inline]{Ching-Yi: {#1}}}
\newcommand{\anote}[1]{\yellownote[inline]{Alexei: {#1}}}
\begin{document}
\title{Quantum  Data-Syndrome Codes}


\author{Alexei Ashikhmin, 	Ching-Yi Lai, 	and
	Todd A. Brun
	\thanks{
		This work
		was presented in part at ISIT 2014 and in part at ISIT 2016.
		
		AA is with the Nokia Bell Labs, 600 Mountain Ave,
		Murray Hill, NJ 07974.		 (email: alexei.ashikhmin@nokia-bell-labs.com)
		
		CYL is with the Institute of Communications, National Chiao Tung University,
		Hsinchu 30010,  Taiwan.  		(email: cylai@nctu.edu.tw)
		
TAB is with the Electrical Engineering Department, University of Southern California,
Los Angeles, CA 90089, USA. (email: tbrun@usc.edu) 
}	
}

%
%
%
%


\maketitle


\thispagestyle{empty}

\begin{abstract}
Performing active quantum error correction to protect fragile quantum states highly depends on the correctness of error information--error syndromes.
To obtain reliable error syndromes using imperfect physical circuits,
we propose the idea of quantum data-syndrome (DS) codes that are capable of correcting both data qubits and syndrome bits errors.
We study fundamental properties of quantum DS codes, including split weight enumerators, generalized MacWilliams identities, 
 and linear programming bounds. In particular, we derive Singleton and Hamming-type upper bounds on degenerate quantum DS codes.
 Then we study random DS codes and show that random DS codes with a relatively small additional syndrome measurements achieve the Gilbert-Varshamov bound of stabilizer codes. Constructions of quantum DS codes are also discussed. A family of quantum DS codes is based on classical linear block codes, called syndrome measurement codes,  so that syndrome bits are encoded in additional redundant stabilizer measurements.
 Another family of quantum DS codes is  CSS-type quantum DS codes based on classical cyclic codes,  and this includes the Steane code and the quantum Golay code.

\end{abstract}

\section{Introduction}

Quantum error-correcting codes provide a method of actively protecting quantum information~\cite{LB13}.
In a quantum error-correcting code, quantum information is
stored in the joint $+1$ eigenspace of a set of Pauli operators,
called \emph{stabilizers}.
To perform quantum error correction, we have to learn knowledge of errors, the \emph{error syndromes} (in bits), through quantum measurements.
More precisely, the error syndrome are given by measuring a generating set of the stabilizers.
Realistically, the quantum gates used to perform quantum
error correction are themselves faulty and thus the measurement outcomes
for the error syndrome can be wrong due to faulty
measurements or newly introduced errors from faulty gates.

In this article, we are interested in eliminating the effect
of faulty syndrome measurement.
Typically, this can be done with the syndromes being measured repeatedly
in the case of Shor’s syndrome
extraction~\cite{Shor96}. In other words, more redundant measurements than necessary are required to
determine a reliable error syndrome.
Other protocols have also been proposed to handle measurement errors for color codes and other topological codes~s~\cite{Bom15,BNB16,BDMT17}, following~Bombin's seminal work on the so-called single-shot fault-tolerant quantum error correction on color codes~\cite{Bom15}. Very recently Campbel has proposed a theory for this one-shot error correction~\cite{Cam19}.
Herein we consider a general scheme of quantum stabilizer codes that are capable of correcting data qubit errors and syndrome bit errors simultaneously with the help of additional stabilizer measurements.
These codes are called {\em quantum data-syndrome (DS)  codes}~\cite{ALB14, ALB16}.
This idea is also independently studied by Fujiwara in \cite{Fuji14}.
Constructions and  simulations of LDPC and Convolutional DS codes have been studied in
\cite{ALB14,ZAWP19}. 

\cnote{Please check this literature review. I am not confident with these references.}

In this paper, we give a comprehensive study of quantum DS codes, which completes our previous work~\cite{ALB14,ALB16}.
We first define quantum DS codes.
In addition to quantum data errors, a measurement outcome suffers a measurement error depending on the weight of the measured stabilizer. One may use redundant stabilizer measurements to decode the error syndrome first and then do quantum error correction. The overall procedure can be seen as decoding a larger code that is a concatenation of a binary code (for syndrome bits) and a quaternary code (for qubits). Thus we call such scheme a quantum DS code.
Given an $[[n,k,d]]$ stabilizer code that encodes $k$ information qubits in $n$ physical qubits with minimum distance $d$, we denote the corresponding DS code with additiona $r$ redundant stabilizer measurements by the parameters  $[[n,k,d:r]]$.
To get familiar with how quantum DS codes work,  we introduce a family of quantum DS codes
such that $r$ additional stabilizer measurements are based on classical error-correcting codes.  The idea of repeated syndrome measurements  is similar to using a classical repetition code.
We generalize this approach by introducing the idea of syndrome measurement
(SM) codes based on classical linear block codes. 
Examples show that syndrome decoding can be improved using SM codes than repeted  syndrome measurements.

Then we define  notions of minimum distance and split weight enumerators.
Naturally, the generalized MacWilliams  identities hold for DS codes, which lead to the linear programming bounds on the minimum distance of small  DS codes. 
 Armed with the MacWilliams  identities for DS codes,
we further analyze algebraic linear programming bounds for DS codes, generalizing  the approach proposed in \cite{AL99}.
In the case of $r=0$, we derive Singleton and Hamming type upper bounds on the code size of degenerate quantum DS codes. Especially, we demonstrate that the Hamming bounds for nondegnerate codes and for degenerate codes will merge  for sufficiently large $n$.

Next we study the properties of random quantum DS codes for the case of $0\leq r\leq n-k$.
We will   show that the minimum distance of random DS codes with a
relatively small $r\leq n-k$ achieves the Gilbert-Varshamov bound of stabilizer codes.
 Along the way average weight enumerators are also derived, which may be of independent interest, since in classical coding theory, those numerators also lead to upper bound on the probability of decoding error.

 Finally, we consider  CSS-type (\cite{CS96,Ste96a}) quantum DS codes and provide DS code constructions from CSS-type quantum cyclic codes such that their minimum distances are as high as their underlying stabilizer codes.
 The quantum Golay code and Steane code, which are shown to have fault-tolerant error thresholds~\cite{PR12,AGP06}, are included
as two examples.



\section{Preliminaries}
In this paper we focus on two-dimenstional quantum systems: qubits.
An $n$-qubit state space is a $2^n$-dimensional complex Hilbert space $\mathbb{C}^{2^n}$ and a pure quantum state is a unit vector in the state space.
A basis of  the linear operators on the $n$-qubit state space is the $n$-fold Pauli operators
 $\{M_1\otimes \cdots \otimes M_n: M_j\in\{I,X,Y,Z\} \},$
where
$I=\begin{bmatrix}1 &0\\0&1\end{bmatrix}, X=\begin{bmatrix}0 &1\\1&0\end{bmatrix},  Z=\begin{bmatrix}1 &0\\0&-1\end{bmatrix}, Y=iXZ$
are the Pauli matrices. The $n$-fold Pauli group $\cG_n$ is the set of $n$-fold Pauli operators with possible phases $\pm 1, \pm i$.
Note that Pauli operators either commute or anticommute with each other.
We can define an inner product in ${\cG}_n$: for $g,h\in\cG_n$,
\begin{align}
\langle g,h\rangle_{{\cG}_n}=\begin{cases}
0, & \mbox{if  $gh=hg$};\\
 1, & \mbox{otherwise}.
\end{cases}
\end{align}

Often it is convenient to consider the quantum coding problem via codes over the Galios filed of four elements $\mF_4=\{0,1,\omega,{\omega^2}\}$~\cite{CRSS98}. We can define a homomorphism $\tau$ on $\cG_1$ that maps $I,X,Z,Y$ to
$0,1,\omega,\omega^2$, respectively, regardless of a possible phase $\pm 1,\pm i$ in front of a Pauli matrix, and this homomorphism extends to an $n$-fold Pauli operator naturally.
For example, $\tau(\pm i X\otimes Y\otimes Z\otimes I\otimes I)= 1\omega^2\omega00$.
 Throughout the text for an $n$-fold Pauli operator $g$,  we denote by $\bg\in \mF_4^{n}$ the corresponding vector and vice versa (up to an appropriate phase).
Likewise, we define a trace inner product on $\mathbb{F}_4^n$: for $\x=(x_1,x_2,\dots, x_n), \by=(y_1,y_2,\dots, y_n)\in \mathbb{F}_4^n$,
\begin{align}
\x*\by=\Tr^{\mF_4}_{\mF_2}\left(\sum_{i=0}^n x_i\overline{y}_i\right),  \label{eq:trace_ip}
\end{align}
where $\overline{y}_i$ denotes the conjugation of $y_i$ in $\mF_4=\{0,1,\omega,{\omega^2}\}$ with $\bar{0}=0,\bar{1}=1,
\bar{\omega}=\omega^2,$ and $\bar{\omega^2}=\omega$.
It can be checked that $\langle g,h\rangle_{{\cG}_n} = \bg * \bh$ for $g,h\in\cG_n$ and $\bg=\tau(g),\bh=\tau(h)$.


Suppose $\cS=\langle g_1, \dots, g_{n-k}\rangle$ is an Abelian subgroup of ${\cG}_n$, where $g_j$ are independent generators of $\cS$, such that the minus identity $-I^{\otimes n}\notin \cS$.
Then $\cS$ defines a quantum \emph{stabilizer} code 
$Q=\{\ket{\psi}\in \mathbb{C}^{2^n}: g\ket{\psi}=\ket{\psi}, \forall g\in \cS\}$
of dimension $2^k$.
The vectors $\ket{\psi}\in \cC(\cS)$ are called the \emph{codewords} of $\cC(\cS)$ and the operators $g\in\cS$  are called the \emph{stabilizers} of $\cC(\cS)$.
By the quantum error correction conditions~\cite{BDSW96,KL97}, it suffices to consider error correction on a discrete set of error operators. Thus we only treat errors that are Pauli operators in this paper.

In the following we will use the corresponding codes over $\mathbb{F}_4$ to discuss the quantum error correction procedure of $Q$.
Define a check matrix
\begin{align}
H=\left[\begin{array}{c}
\bg_1\\
\vdots\\
\bg_{n-k}\end{array}\right],
\end{align}
where $\bg_1,\ldots,\bg_{n-k}\in \mF_4^n$ are the  corresponding vectors of  $g_1,g_2,\dots,g_{n-k}\in\cG_n$.
Let $C$ be the classical $[n,n-k]$ code over $\mF_4$ generated  by the rows of $H$, and $C^\perp$ be its dual with respect
to the trace inner product $*$ defined in (\ref{eq:trace_ip}).
We have $C\subseteq C^{\perp}$ since $\bg_i *\bg_j=0$ for all $i,j$.
Suppose a quantum codeword $|\psi \rangle\in Q$
is corrupted by a Pauli error $e\in\cG_n$ and let $\e=\tau(e)\in \mF_4^n$  be the corresponding vector. Then the syndrome of $e$ is
$\bs=(s_1,\dots, s_{n-k})\in \mF_2^{n-k}$, where
$
s_j=\bg_j*\e.
$
More explicitly, the syndrome $\bs$ has the commutation relations between the Pauli error $e$ and the stabilizers $g_1,g_2,\dots, g_{n-k}$ and it can be obtained by measuring the observables $g_j$'s on $e\ket{\psi}$.



\section{Quantum Data-Syndrome Codes}\label{sec:quantum_data-syndrome}
Using a quantum stabilizer code with a corresponding check matrix $H$, quantum error correction can be done according to the error syndrome $\bs\in\mF_2^{n-k}$.
 However,  the syndrome $\bs$ itself
 could be measured with an error. So instead of the true vector $\bs\in\mF_2^{n-k}$, we may get, after measurement, a vector
 $\widehat{\bs}=\bs+\bz,$ for syndrome error $\bz\in \mF_2^{n-k}$.
 In other words, syndrome bits could be flipped.
 Herein we discuss stabilizer codes
that  are capable of correcting  both data errors and syndrome errors. 
To shorten notation, we will use $m\triangleq n-k$ in the following. 

 The central idea is that the error syndrome of a measurement error on the $i$th syndrome bit is the vector
 $(0\cdots0 \underbrace{1}_i 0\cdots 0)$.
 Thus we define a new parity-check matrix
 \begin{equation}\label{eq:HDS}
 \hat{H}=[H\ I_{m}],
 \end{equation}
 where $I_{m}$ is considered as a matrix over $\mF_2$.
Define codes
 \begin{align*}
 D&=\{\bw=\bu \hat{H}: \bu\in \mF_2^m\}\subset \mF_4^n\times\mF_2^m, \mbox{ and }
 D^\perp=\{\bv: \bw \star \bv=0, \forall \bw\in D\}\subset \mF_4^n\times\mF_2^m,
 \end{align*}
where  $D^\perp$ is the dual code of $D$ with respect to the inner product:
 $
 \x\star\by=\Tr^{\mF_4}_{\mF_2}\left(\sum_{i=0}^n x_i\overline{y}_i\right)+\sum_{j=0}^m x_{n+j}y_{n+j},
$
  for $ \x,\by\in \mF_4^n\times \mF_2^m.$ Therefore, a quantum stabilizer code is inherently capable of handling both data and syndrome errors.
  Fujiwara, \cite{Fuji14}, noticed that the error-correcting capabilities of $D^\perp$ depend on the choice of generators in~(\ref{eq:HDS}).
Choosing generators properly, we can get a code capable of correcting simultaneously  multiple
data and syndrome errors.

In addition, the error-correcting capabilities of $D^\perp$ can be further inhanced.
The standard approach to reduce the probability of syndrome measurement error is repeated syndrome measurement. That is, we repeat the syndrome measurement several times and take a majority vote. This is the same idea as in classical repetition codes. We propose a generalization of this idea by measuring additional stabilizers according to more powerful linear classical codes.

Let $C$ be an $[m+r,m]$ linear binary code with a generator matrix in the systematic form
\begin{equation}\label{eq:G(C)}
G_C=\left[I_{m}\ A\right],
\end{equation}
where $A=[a_{i,j}]$ is an $m\times r$ binary matrix. We define a new set of $r$ stabilizers ${\bf f}_j$ by
\begin{equation}\label{eq:f}
\f_j=a_{1,j}\bg_1+\cdots +a_{m,j}\bg_m, \mbox{ for }j=1,\dots,r.
\end{equation}
These $\f_j$ belong to the stabilizer group $\cS$, and
 can be measured without disturbing the underlying quantum codewords.
For this reason, we  call $C$ {\em the syndrome measurement (SM) code}.
 Let
 $
H'^T=\left[\begin{array}{ccc}
\f_1^T&
\cdots&
\f_r^T\end{array}\right].
$ 
With additional $r$ stabilizers being measured, it is equivalent to considering the code defined by the following parity-check matrix
 $
 \left[\begin{array}{ccc}
 H & I_{m} & 0 \\
H' & 0 & I_r
\end{array}\right].
$
This parity-check matrix can be transformed into the form
\begin{equation}\label{eq:HImAIr}
H\ds=\left[\begin{array}{ccc}
 H & I_{m} & 0 \\
0 & A^T & I_r
\end{array}\right].
\end{equation}
We will say that (\ref{eq:HImAIr}) defines a quantum {\em data-syndrome code} $Q\ds$. It is convenient to define codes
\begin{align*}
 C\ds&=\{\bw=\bu H\ds: \bu\in \mF_2^{m+r}\}\subset \mF_4^n\times\mF_2^{m+r}, \mbox{ and }\\
 C\dsp&=\{\bv: \bw \star \bv=0, \forall \bw\in C\ds\}\subset \mF_4^n\times\mF_2^{m+r},
 \end{align*}
where  $C\dsp$ is the dual code of $C\ds$ with respect to the inner product:
\begin{equation}\label{eq:inner_prod}
 \x\star\by=\Tr^{\mF_4}_{\mF_2}\left(\sum_{i=1}^n x_i\overline{y}_i\right)+\sum_{j=1}^{m+r} x_{n+j}y_{n+j}.
 \end{equation}
Slightly abusing terminology, we will call $C\ds^\perp$ also a
 {\em data-syndrome} code. We will say that $Q\ds$ (or $C\ds^\perp$) has {\em length} $n$, {\em dimension} $k$, and {\em size} $2^k$. Note that a DS code of length $n$ and dimension $k$ encodes $k$ (information) qubits into $n$ (code) qubits.

 It is easy to see that $|C\ds|=2^{n-k+r}$ and $|C\dsp|=2^{2n+n-k+r}/|C\ds|=2^{2n}$, so the size of $C\dsp$ does not depend on a choice of $k$ and $r$. For a given matrix (\ref{eq:HImAIr}), we always can find vectors $\bg_{n-k+1},\ldots,\bg_n$ and 
 $\bh_1, \ldots,\bh_{n}$ over $\mathbb{F}_4$ such that $\bg_i*\bg_j=0$, $\bg_i*\bh_j=0$ for $i\neq j$ and  $\bg_i*\bh_i=1$. These vectors allow us to write down a generator matrix of $C\dsp$ in the following explicit form
\begin{equation}\label{eq:gen_matr_Cdsp}
G_{C\dsp}=\left[ \begin{array}{lll}
G & {\bf 0} & {\bf 0} \\
H_1 & {\bf 0} & {\bf 0} \\
H_2 & I_{n-k} & A
\end{array}\right], \mbox{ where }
G=\left[\begin{array}{c}
\bg_1\\
\vdots\\
\bg_{n}
\end{array}\right],
H_1=\left[\begin{array}{c}
\bh_{n-k+1}\\
\vdots\\
\bh_{n}
\end{array}\right],
H_2=\left[\begin{array}{c}
\bh_{1}\\
\vdots\\
\bh_{n-k}
\end{array}\right],
\end{equation}
and ${\bf 0}$s are all zero matrices of appropriate sizes.

We will say that a code defined by (\ref{eq:HImAIr}) is an $[[n,k:r]]$ DS-code. If the minimum distance $d$ (defined in   Section~\ref{sec:swe}) is known we will say that it is an $[[n,k,d:r]$ code.

\section{Syndrome measurement codes}\label{sec:SMcodes}
In this section, we consider some examples of stabilizer codes for which it is easy to find a efficient SM code that beat the repetitive syndrome measurement approach.

Suppose we are using an $[[n,k]]$ stabilizer code defined by a stabilizer group $\mathcal{S}$ with
generators $\bg_1,\ldots,\bg_{n-k}$.  Let $\bs = (s_1,\ldots, s_{n-k})$ be the correct syndrome
measurement outcomes. Let $\hat{\bs} = (\hat{s}_1,\ldots, \hat{s}_{n-k})$ be the (imperfect) syndrome bits output by Shor's syndrome extraction. Assume that the probability that an  $X$ or $Z$ measurement error occurs with probability $p_m$. Then the probability of incorrect measurement outcome on $\bg_j$ is
\begin{align}
p_{err}(
\bg_j)=\Pr(\widehat{s}_j\not=s_j)
&=\sum_{i \mbox{ \small{is odd}}} {\small{\wt}(\bg_j)\choose i}
p_m^i (1-p_m)^{\small{\wt}(\bg_j)-i} .
\label{eq:sj_error}
\end{align}

Now suppose that  an $[m+r,m]$ SM code $C$  is used.
Denote by $s_j$ and $z_j$ the results of correct measurement of
$\bg_j$ and $\f_j$, respectively.
It is not difficult to see that
\begin{equation}\label{eq:x}
\x=(s_1,\ldots, s_m,z_1,\ldots,z_{r})
\end{equation}
is a valid codeword of $C$. After the measurement of $\bg_1,\ldots,\bg_m$ and $\f_1,\ldots,\f_{r}$, we obtain a vector
\begin{equation}\label{eq:xhat}
\widehat{\x}=(\widehat{s}_1,\ldots,\widehat{s}_m,\widehat{z}_1,\ldots,\widehat{z}_{r}).
\end{equation}
\noindent We can correct quantum and syndrome errors simultaneously by decoding vector
$(\underbrace{0,\ldots,0}_{n \mbox{\small times}},\widehat{\x})$ using a decoder of $C\dsp$. Alternatively we can first correct syndrome errors by decoding vector $\widehat{\x}$ using a decoder of the SM code $C$, and next correct quantum errors. The latter  approach is typically simpler, though its performance is always suboptimal. In this section we consider this type of decoding.

Applying a decoding algorithm of $C$ to $\widehat{\x}$, we obtain bits $\tilde{s}_1,\ldots,\tilde{s}_m$.
For a given $C$ and its decoding algorithm, we define the syndrome decoding error and average syndrome decoding error, respectively, as
\begin{equation}\label{Pse1}
P_{se}=\Pr((s_1,\ldots,s_m)\not=(\tilde{s}_1,\ldots,\tilde{s}_m)),
\end{equation}
\begin{equation}\label{eq:Psynderr1}
P_{SBER}={1\over m} \sum_{j=1}^m \Pr(\tilde{s}_j\not=s_j).
\end{equation}


The $l$-fold repeated syndrome measurement can be considered as a
particular case of an encoded syndrome measurement. It corresponds
to the SM code with generator matrix
$
G=[\underbrace{I_m \cdots I_m }_{l \mbox{ times}}].
$
Choosing a good SM code is not equivalent to finding a good $[m+r,m]$ linear code in the usual sense.
This is because for a typical $[m+r,m]$ code with a large minimum distance, the matrix $A$ in (\ref{eq:G(C)})
will have ``heavy" columns. This may result in that $\wt(\f_j)>> \wt(\bg_l)$ and therefore $p_{err}({\bf f}_j)>>p_{err}(\bg_i)$,
which, in turn, will lead to large $P_{se}$ and $P_{SBER}$.

Below we present several families of high rate quantum codes with
the property that all their stabilizers $\bg\in {\cal S}$ have the
same or almost the same weight and therefore any good linear codes
can be used for robust syndrome measurement.

Let $S_a$ be a generator matrix of the $[2^a-1,a,2^{a-1}]$ simplex
code. Let ${\cal S}_a$ be the $[[2^a-1,2^a-1-2a,3]]$ CSS code
defined by the generators
$\left[\begin{array}{cc} S_a& 0\\
0 & S_a\end{array}\right].$
Any liner combination of the first (second) $a$
generators is a vector of weight $2^{a-1}$. Thus we can use any
good $[a+r,a]$
linear code for syndrome measurement of the first (second) $a$
syndrome bits. For example,   consider  ${\cal S}_3$, which  is the $[[7,1,3]]$ Steane
code. Let us use as an SM code the $[15,3]$ code $C$
defined by:
$$
G_C=\left[\begin{array}{l}
1 0 0     0        0     1     1     1     1     1     1   1     0     0     0\\
0 1 0     0          1     0     0     1     1     1     1   0     1     1     0 \\
0 0 1     1          0     1     1     0     0     1     1   0     1     1     1
\end{array}
\right].
$$
The code $C$ requires $15$ measurements, which is the same as for
the 5-fold repeated measurements of 3 bits. The corresponding
probabilities $P_{se}$ are shown in Fig.\ref{fig:Steane Code}. One
can see that the code $C$ has significantly lower $P_{se}$.

\begin{figure}[htb]
	\[	\includegraphics[scale=0.26]{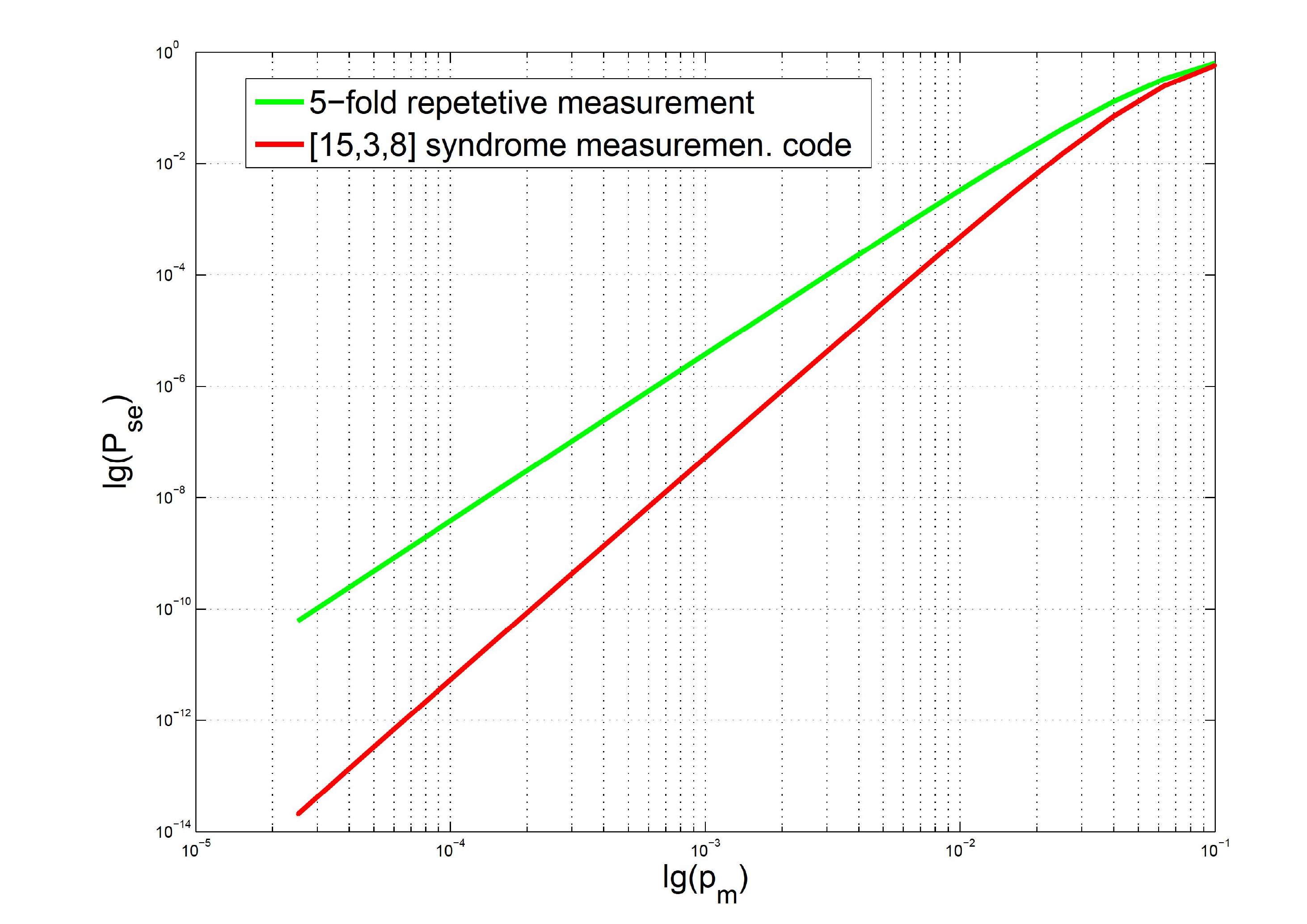}
	\]  
	\caption{The probabilities $P_{se}$ for the $[15,3]$
		code  $C$ and 5-fold Repeated Measurement of the syndrome of the
		 Steane code}  \label{fig:Steane Code}
\end{figure}

Another important family is the $[[n,n-2a,3]]$ quantum
Hamming codes ${\cal H}_a$ for $n=(4^a-1)/3$~\cite[V]{CRSS98}. It is not difficult to
prove that all generators of ${\cal H}_a$ have  weight $4^{a-1}$.

In \cite[Thm 11]{CRSS98} a family of $[[n,n-a-2,3]]$ codes with
$n=\sum_{i=1}^{(a-1)/2} 2^{2i+1}$ is defined for odd $a$. The generators of
these codes can have only weights $2^a-2$ and $2^a$.

\section{Mimimum Distance and Split Weight Enumerators}\label{sec:swe}


Let $\e\ds=(\bg,{\bf 0})\in \mF_4^n\times \mF_2^{m+r},$ with $\bg \in C$.
 Since $\bg\in C$, we have $g\in \cS$ and thus $\e\ds$ is harmless.
 If $\e\ds=(\e,\bz)\in C\dsp\setminus \{(\bg,{\bf 0}): \bg\in C\}$, then $\e\not\in C$. Therefore the operator $e$ does not belong to $\cS$ and
 acts on $Q$ nontrivially. Since $H\ds\star \e\ds={\bf 0}^T$ by definition, we conclude that such $\e\ds$ is an undetectable and
 harmful error.
Naturally, the weight $\wt(\e,\bz)$ is defined as the number of its nonzero entries.
We define the {\em minimum distance} $d$ of $Q\ds$ (equivalently $C^\perp\ds$) as the minimum weight of any element in
 $$
C\dsp\setminus \{(\bg,{\bf 0}): \bg\in C\}.
 $$
It is not difficult to see that $Q\ds$ (or equivalently $C\dsp$) can correct any error  $\e\ds=(\e,\bz)$ (here we do not assume $\e\ds\in C\dsp$) with $\wt(\e)=t_{\tD},\wt(\bz)=t_{\tS}$ if
$
t_{\tD}+t_{\tS}<{d\over 2}.
$
Apparently the minimum distance of a DS code cannot be greater than that of the underlying stabilizer code.

%

  Define the split weight enumerators of $C\ds$ and $C\dsp$ by
 \begin{align}
B_{i,j}= B_{i,j}(C\ds)=|\{ & \bw\in C\ds: \wt(w_1,\ldots,w_n)=i,
 \wt(w_{n+1},\ldots,w_{n+m+r})=j\}|, \label{def:Bij}\\
B_{i,j}^\perp= B_{i,j}(C\dsp)=|\{ & \bw\in C\dsp: \wt(w_1,\ldots,w_n)=i, \wt(w_{n+1},\ldots,w_{n+m+r})=j\}|,  \label{def:Bijp}
 \end{align}
 respectively.
 The minimum distance $d$ of $Q\ds$ implies that
\begin{align}
B_{i,0}^{\perp}= \sum_{j=0}^{m+r} B_{i,j}, \text{ for } i=1, \dots, d-1. \label{eq:stabilizer_number}
\end{align}
We will say that $Q\ds$ is a {\em degenerate} quantum DS code if there exists $B_{i,j}>0$ for $i<d$.
Otherwise, it is a {\em nondegenerate} quantum DS code.
Clearly, we also have
 \begin{align}
 B^\perp_{i,0}\ge &\sum_{j=1}^{m+r} B_{i,j}, i=d,\ldots,n, \mbox{ and } \label{eq:Bperp>B} \\
 B_{0,0}=B_{0,0}^\perp=1 \mbox{ and } &B_{i,0}=0 \mbox{ for } i\ge 1.\label{eq:Bi0=0}
\end{align}
For $0\le i\le m+r$, let us define $d(i)$ as the smallest integer such that
\begin{equation}\label{eq:d_min_def_via_Bij1}
 B_{d(0),0}^\perp>\sum_{j=1}^{m+r} B_{d(0),j}, \mbox{ and } B_{d(i),i}^\perp>0, \mbox{ for } i=1,\ldots,m+r.
\end{equation}
Then the minimum distance of $Q\ds$ is
 \begin{equation}\label{eq:d_min_def_via_Bij2}
 d=\min_{0\le i\le m+r} d(i)+i.
 \end{equation}
Denote the $q$-ary Krawtchouk polynomial of degree $i$ by
\begin{equation}\label{eq:Kraw}
K_i(x;n,q)=\sum_{j=0}^i (-1)^j (q-1)^{i-j}{x\choose j}{n-x\choose i-j}.
\end{equation}
 We list  the properties of Krawtchouk polynomials needed in this work in Appendix~\ref{app:Kraw}. Their proof and other information on these polynomials can be found in~\cite{MS77,Lev95,AL99}.
Let $f(x,y)$ be a two-variable polynomial and its maximal degrees of $x$ and $y$ be $d_x\le n$ and $d_y\le m+r$, respectively. Then the following Krawtchouk expansion of this polynomial holds:
\begin{equation}\label{eq:KrawExpension}
f(x,y)=\sum_{i=0}^{d_x} \sum_{j=0}^{d_y} f_{i,j} K_i(x;n,q_1) K_j(y;m+r,q_2),
\end{equation}
where
\begin{equation}\label{eq:f_{i,j}}
f_{i,j}={1\over q_1^n q_2^{m+r}} \sum_{x=0}^n \sum_{y=0}^{m+r} f(x,y)K_x(i;n,q_1)K_y(j;m+r,q_2).
\end{equation}
Proofs of these equalities are straightforward generalizations of the proofs (see \cite[Chapter 5]{MRRW77}) for equivalent expressions for single variable polynomials.

In what follows we will need the following generalization of MacWilliams identities \cite{MS77}.
\begin{theorem}\label{thm:MaWil}
\begin{equation}\label{eq:MacW}
B_{x,y}={1\over {4^n}}\sum_{i=0}^n\sum_{j=0}^{m+r} B_{i,j}^\perp K_x(i;n,4) K_y(j;m+r,2).
\end{equation}
\end{theorem}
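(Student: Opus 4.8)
The plan is to prove the homogeneous (functional) form of the MacWilliams identity by a character-sum/Poisson-summation argument and then read off coefficients. Introduce the four-variable split weight enumerators
\begin{align*}
W\ds(X_1,Y_1;X_2,Y_2)&=\sum_{\bw\in C\ds}X_1^{\,n-a(\bw)}Y_1^{\,a(\bw)}X_2^{\,m+r-b(\bw)}Y_2^{\,b(\bw)},\\
W\dsp(X_1,Y_1;X_2,Y_2)&=\sum_{\bv\in C\dsp}X_1^{\,n-a(\bv)}Y_1^{\,a(\bv)}X_2^{\,m+r-b(\bv)}Y_2^{\,b(\bv)},
\end{align*}
where $a(\cdot)=\wt(w_1,\dots,w_n)$ and $b(\cdot)=\wt(w_{n+1},\dots,w_{n+m+r})$. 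By definition $B_{x,y}$ is the coefficient of $X_1^{n-x}Y_1^{x}X_2^{m+r-y}Y_2^{y}$ in $W\ds$, and $B^\perp_{i,j}$ the analogous coefficient in $W\dsp$. Hence it suffices to establish the functional identity
$$W\ds(X_1,Y_1;X_2,Y_2)=\frac{1}{4^n}\,W\dsp\!\left(X_1+3Y_1,\;X_1-Y_1;\;X_2+Y_2,\;X_2-Y_2\right),$$
after which expanding $(X_1+3Y_1)^{n-i}(X_1-Y_1)^i$ and $(X_2+Y_2)^{m+r-j}(X_2-Y_2)^j$ by the binomial theorem and matching against~\eqref{eq:Kraw} shows that the coefficient of $X_1^{n-x}Y_1^{x}$ in the first is exactly $K_x(i;n,4)$ and the coefficient of $X_2^{m+r-y}Y_2^{y}$ in the second is exactly $K_y(j;m+r,2)$; multiplying and summing over $i,j$ gives~\eqref{eq:MacW}.

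To prove the functional identity I would regard $V=\mF_4^n\times\mF_2^{m+r}$ as a finite abelian group on which $\star$ from~\eqref{eq:inner_prod} is a nondegenerate, $\mF_2$-valued, symmetric bilinear form; note that $C\ds$ is an $\mF_2$-linear subgroup with $\star$-annihilator $C\dsp$ and that $[V:C\ds]=|C\dsp|=2^{2n}=4^n$. The maps $\bv\mapsto(-1)^{\bv\star\bw}$, $\bw\in V$, run over all characters of $V$ and identify $\widehat V$ with $V$, carrying the annihilator of $C\ds$ onto $C\dsp$. Apply Poisson summation to the multiplicative function $f(\bv)=\prod_{a=1}^n\phi(v_a)\,\prod_{b=1}^{m+r}\psi(v_{n+b})$ with $\phi(0)=X_1$, $\phi(c)=Y_1$ for $c\in\mF_4\setminus\{0\}$, and $\psi(0)=X_2$, $\psi(1)=Y_2$; then $\sum_{\bw\in C\ds}f(\bw)=W\ds$ and Poisson summation yields $\sum_{\bw\in C\ds}f(\bw)=\tfrac{|C\ds|}{|V|}\sum_{\bv\in C\dsp}\widehat f(\bv)=\tfrac{1}{4^n}\sum_{\bv\in C\dsp}\widehat f(\bv)$, where $\widehat f(\bv)=\sum_{\bw\in V}f(\bw)(-1)^{\bw\star\bv}$ factors coordinatewise. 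The local Fourier transforms are evaluated using nondegeneracy of the trace form: $\sum_{c\in\mF_4}(-1)^{\Tr^{\mF_4}_{\mF_2}(c\bar v)}$ equals $4$ if $v=0$ and $0$ otherwise, and $\sum_{c\in\mF_2}(-1)^{cv}$ equals $2$ if $v=0$ and $0$ otherwise; consequently the $\mF_4$-local transform is $X_1+3Y_1$ at $0$ and $X_1-Y_1$ at every nonzero value, and the $\mF_2$-local transform is $X_2+Y_2$ at $0$ and $X_2-Y_2$ at $1$. Therefore $\widehat f(\bv)=(X_1+3Y_1)^{n-a(\bv)}(X_1-Y_1)^{a(\bv)}(X_2+Y_2)^{m+r-b(\bv)}(X_2-Y_2)^{b(\bv)}$, so $\sum_{\bv\in C\dsp}\widehat f(\bv)=W\dsp(X_1+3Y_1,X_1-Y_1;X_2+Y_2,X_2-Y_2)$, which is the functional identity.

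The final step is purely formal: extracting the coefficient of $X_1^{n-x}Y_1^{x}X_2^{m+r-y}Y_2^{y}$ from both sides of the functional identity and substituting the Krawtchouk interpretations of the inner binomial coefficients gives precisely $B_{x,y}=\tfrac{1}{4^n}\sum_{i,j}B^\perp_{i,j}K_x(i;n,4)K_y(j;m+r,2)$. I do not expect a deep obstacle here; the two points that require care are (i) that $C\ds$ is only $\mF_2$-linear, not $\mF_4$-linear, so one must run the additive-group/character version of MacWilliams with the $\mF_2$-valued pairing $\star$ rather than a naive $\mF_4$-coefficient version — but this is legitimate precisely because $\star$ is nondegenerate on $V$ — and (ii) correctly pinning down the normalizing constant $1/4^n$, which is $1/[V:C\ds]$ and equals $1/2^{2n}$ since $|C\ds|=2^{m+r}$ and $|V|=4^n\cdot 2^{m+r}$. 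One could alternatively phrase the whole argument as a two-block instance of the standard single-variable MacWilliams proof in~\cite[Chapter~5]{MRRW77}, which is the route the surrounding text already alludes to for~\eqref{eq:KrawExpension}.
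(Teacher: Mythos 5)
Your proof is correct and takes essentially the same route as the paper: the paper's (very terse) proof likewise defines a Fourier transform with respect to the inner product (\ref{eq:inner_prod}), invoking the techniques of \cite{LHL14}, to get the MacWilliams identity relating the two split weight enumerators, from which (\ref{eq:MacW}) follows by coefficient extraction against Krawtchouk polynomials. You have simply written out in full the character-sum/Poisson-summation details (local transforms, the normalization $|C\ds|/|V|=1/4^n$, and the $\mF_2$-linearity caveat) that the paper delegates to the citation.
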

 A proof of this theorem can be found in Appendix~\ref{app:MaWil}.

Like \cite{CRSS98,LBW13,LA18}, for small $n$ one could apply linear programming techniques to obtain upper  bounds on the minimum distance of $[n,k:r]]$ DS codes. 
More explicitly,  we have the following linear  program: given $n,k,d,$ and $r$,
\begin{align*}
\mbox{Find \hspace{0.5cm}    }& \mbox{nonnegative intergers }B_{i,j}, B^{\perp}_{i,j} \notag\\
\mbox{subject to    \hspace{0.5cm} }& (\ref{eq:stabilizer_number}), (\ref{eq:Bperp>B}), (\ref{eq:Bi0=0}), \mbox{ and }(\ref{eq:MacW}).
\end{align*}
If there is no solution to this feasibility problem, it means that no $[[n,k,d:r]]$ DS code exists.
\begin{example}
Let $n=7, m=6, r=0$,  and $d=3$. 
Using MAPLE, we find out that the liner program has solutions.
This means that a $[[7,1,3]]$ code may be capable of fighting a syndrome bit error by measuring only six stabilizer generators.
Indeed, it is the case that a $[[7,1,3:6]]$ DS code exists as  shown by Fujiwara~\cite{Fuji14}.
\end{example}

\section{Upper Bounds on Unrestricted (Degenerate and Non-Degenerate) DS codes}\label{sec:bounds}
In this Section we propose a general method defined in Theorem \ref{thm:gen_upper_bound} for deriving upper bounds on the minimum distance of both non-degenerate and degenerate DS codes. Next we use this method for obtaining  several explicit bounds for DS codes with $r=0$. Theorem \ref{thm:gen_upper_bound} can be used for deriving bounds in the case of $r>0$, but this will be done in future work.

Let $1\le d\tD\le n$
be an integer and
$
\cN=\{(i,j): 0\le i\le n, 1\le j\le m+r\}.
$
Let also $\cA\subset \cN$ and $\overline{\cA}=\cN\setminus \cA$. We would like to upper bound {\em quantum code rate} $R=k/n$  under the conditions:
\begin{align}
B_{i,0}^\perp&=\sum_{j=0}^{m+r} B_{i,j}, i=0,\ldots,d\tD-1, \label{bound_cond1}\\
B_{i, j}^\perp&=0, (i,j)\in \cA.\label{bound_cond2}
\end{align}
\begin{theorem}
	Let $f(x,y)$ be an {\em arbitrary} polynomial with nonnegative coefficients $f_{i,j}$ that satisfies the conditions:
	\begin{align}
	f(x,0)\le 0, &\mbox{ if } x\ge d\tD, \mbox{ and} \label{eq:polin_cond1}\\
	f(x,y)\le 0, &\mbox{ if } (x,y) \in \overline{\cA}\label{eq:polin_cond2}.
	\end{align}
	\label{thm:gen_upper_bound}
	\begin{enumerate}
		
		\item For non-degenerate $C^\perp\ds$, it must hold that
		\be\label{eq:non-degBound}
		f(0,0)/ f_{0,0}\ge 2^{2n}.
		\ee
		\item For unrestricted $C^\perp\ds$, it must hold that
		\begin{equation}\label{eq:general_bound}
		\hspace{-0.5cm} \max\left\{ {f(0,0)\over f_{0,0}}, \max_{1\le x\le d\tD-1} {f(x,0)\over \min_{1\leq j\leq {m+r}} f_{x,j}}\right\} \ge 2^{2n}.
		\end{equation}
	\end{enumerate}
\end{theorem}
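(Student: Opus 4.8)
The plan is to mimic the classical polynomial (linear programming) method for bounding code sizes, now applied to the bivariate split weight enumerators via the generalized MacWilliams identities of Theorem~\ref{thm:MaWil}. First I would start from the Krawtchouk expansion~(\ref{eq:KrawExpension}) of the test polynomial $f(x,y)=\sum_{i,j} f_{i,j} K_i(x;n,4)K_j(y;m+r,2)$, and evaluate the sum $\sum_{x,y} B_{x,y}^{\perp} f(x,y)$ in two ways. On one hand, by substituting~(\ref{eq:MacW}) (or its dual form) and using the orthogonality of Krawtchouk polynomials, this sum collapses to $4^n \sum_{i,j} f_{i,j} B_{i,j}$ up to the normalization; the term $f_{0,0}B_{0,0}=f_{0,0}$ is singled out, and all other contributions are nonnegative because $f_{i,j}\ge 0$ and $B_{i,j}\ge 0$. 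This yields a lower bound $\sum_{x,y} B_{x,y}^{\perp} f(x,y) \ge 4^n f_{0,0}$. On the other hand, I would split the left-hand side according to whether $y=0$ or $y\ge 1$, i.e. over $\cN$ versus the $y=0$ axis, and then bound each piece using the sign conditions~(\ref{eq:polin_cond1})–(\ref{eq:polin_cond2}) together with the structural constraints~(\ref{bound_cond1})–(\ref{bound_cond2}) on the $B_{i,j}^{\perp}$.

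For the non-degenerate case (part 1), the hypothesis of non-degeneracy forces $B_{i,j}=0$ for all $i<d$ (with $j\ge 1$) and, via~(\ref{eq:stabilizer_number}), makes the relation $B_{i,0}^{\perp}=\sum_j B_{i,j}$ trivial, so effectively only $B_{i,0}^{\perp}$ with $i\ge d$ and the $B_{i,j}^{\perp}$ with $(i,j)\in\overline{\cA}$, $j\ge 1$, survive in the upper estimate. On the $y=0$ axis, condition~(\ref{eq:polin_cond1}) gives $f(x,0)\le 0$ for $x\ge d\tD$, killing those terms; for $(x,y)\in\overline{\cA}$ with $y\ge 1$, condition~(\ref{eq:polin_cond2}) gives $f(x,y)\le 0$; and for $(x,y)\in\cA$ the enumerator vanishes by~(\ref{bound_cond2}). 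Hence the only positive contribution to $\sum_{x,y}B_{x,y}^{\perp}f(x,y)$ comes from the terms with $0\le x\le d\tD-1$, $y=0$, which by~(\ref{bound_cond1}) equal $\sum_{j} B_{x,j}$; but non-degeneracy kills $B_{x,j}$ for $j\ge1$, $x\le d-1$, so only $x=0$ remains, contributing $f(0,0)B_{0,0}^{\perp}=f(0,0)$. Comparing with the lower bound $4^n f_{0,0}$ and recalling $|C\dsp|=2^{2n}$ is absorbed into the right normalization gives $f(0,0)/f_{0,0}\ge 2^{2n}$.

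For the unrestricted case (part 2) the same bookkeeping goes through except that on the axis $y=0$ with $1\le x\le d\tD-1$ we can no longer discard the contribution $\sum_j B_{x,j}$ coming from degeneracy. The idea is then to re-absorb this term: replace each $B_{x,j}$ ($j\ge1$) appearing in this contribution by the corresponding term in the lower-bound expansion $4^n\sum_{i,j}f_{i,j}B_{i,j}$, using that $f_{x,j}\ge \min_{1\le j\le m+r} f_{x,j}$. This shows $\sum_{x,y}B^{\perp}_{x,y}f(x,y) \le f(0,0) + \sum_{x=1}^{d\tD-1}\frac{f(x,0)}{\min_j f_{x,j}}\cdot 4^n\sum_j f_{x,j}B_{x,j}$ (schematically), and combining with the lower bound and pulling out the maximum over $x$ gives~(\ref{eq:general_bound}). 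The main obstacle I anticipate is the careful handling of this re-absorption step: one must verify that the degenerate terms $B_{x,j}$ genuinely appear with large enough coefficients $f_{x,j}$ on the "good" side of the inequality and get the constant $2^{2n}$ (rather than $4^n$, say) exactly right — i.e. tracking the normalization $|C\dsp|=2^{2n}$ through the two-variable MacWilliams transform, where the factor $4^n$ for the quaternary part and $2^{m+r}$ for the binary part must cancel against $|C\ds|=2^{n-k+r}$ in just the right way. The sign conditions themselves are routine once the double-counting identity is set up correctly.
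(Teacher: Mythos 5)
Your proposal is correct and essentially reproduces the paper's own argument: both evaluate $\sum_{i,j} f_{i,j}B_{i,j}$ against $\sum_{x,y}B^{\perp}_{x,y}f(x,y)$ via the MacWilliams identity (\ref{eq:MacW}), discard the terms on $\cA$ and the nonpositive terms guaranteed by (\ref{eq:polin_cond1})--(\ref{eq:polin_cond2}), rewrite the surviving $y=0$, $x\le d\tD-1$ contributions through (\ref{bound_cond1}), and conclude with the same termwise (mediant-type) comparison that produces the maximum in (\ref{eq:general_bound}). The stray factor $4^n$ inside your schematic display is harmless and your anticipated normalization worry does not materialize, since $|C\dsp|=2^{2n}=4^n$ already supplies the constant directly in (\ref{eq:MacW}); otherwise your bookkeeping matches the paper's proof.
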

\proof
We prove the second claim.
Let $M=|C\dsp|=2^{2n}$. Using (\ref{eq:MacW}), (\ref{eq:KrawExpension}), and (\ref{eq:Bperp>B}),  we get
\begin{align}
&M\sum_{i=0}^{d\tD-1} \sum_{j=0}^{m+r} f_{i,j}B_{i,j}\le M \sum_{i=0}^n\sum_{j=0}^{m+r} f_{i,j}B_{i,j}\label{eq:gen_bound_line1}\\
=&M \sum_{i=0}^n\sum_{j=0}^{m+r} f_{i,j}{1\over M} \sum_{x=0}^n\sum_{y=0}^{m+r} B_{x,y}^\perp K_i(x;n,4)K_j(y;{m+r},2)\nonumber\\
=&\sum_{x=0}^{n} B_{x,0}^\perp f(x,0)+\sum_{(i,j)\in \cA} B_{i,j}^\perp f(i,j)+\sum_{(i,j)\in \overline{\cA}}
B_{i,j}^\perp f(i,j)\nonumber\\
\le & \sum_{x=0}^{d\tD-1} B_{x,0}^\perp f(x,0)=\sum_{x=0}^{d\tD-1} \sum_{j=0}^{m+r} B_{x,j} f(x,0) \label{eq:gen_bound_last line}
\end{align}
From this and (\ref{eq:Bi0=0}), we get
\begin{align}
2^{2n}\le &{\sum_{x=0}^{d\tD-1}\sum_{j=0}^{m+r} B_{x,j} f(x,0) \over \sum_{i=0}^{d\tD-1}\sum_{j=0}^{m+r} f_{i,j}B_{i,j}} 
\le  \max\left\{ {f(0,0)\over f_{0,0}}, \max_{1\le x\le d\tD-1} {f(x,0)\over \min_{1\leq j\leq {m+r}} f_{x,j}}\right\}.  \label{eq:max f(j)/fj}
\end{align}
\hfill\QED


For getting a bound on the size of DS codes with minimum distance $d$, it suffices to choose
\be\label{eq:cA for codes with min dist d}
d\tD=d, \ \cA=\{(i,j): j\geq 1,  0\leq i+j\le d-1\},
\ee
and a polynomial $f(x,y)$ that satisfies constraints (\ref{eq:polin_cond1}) and (\ref{eq:polin_cond2}).
In the following subsections, we discuss two polynomials and their corresponding bounds on DS codes with $r=0$.

On the other hand, we also have upper bounds for a DS code inherited from its underlying stabilizer code.
Let $Q\ds$
be a DS code defined by (\ref{eq:HImAIr}) ($r=0$) with minimum distance $d(Q\ds)$.
Let $C$ be the $[n,n-k]$ code with generator matrix $H$ used in (\ref{eq:HDS}) and $C^\perp$ be its dual code.
Let $Q$ be the $[[n,k]]$ stabilizer code defined by $C$, and let its minimum distance be $d(Q)$.
From (\ref{eq:gen_matr_Cdsp}) it is easy to see that vectors of the form
$
(\bv,{\bf 0}), \mbox{ where }\bv\in C^\perp,~{\bf 0}=(\underbrace{0,\ldots,0}_{n-k}),
$
form a subcode of $C\ds^\perp$ and therefore $d(Q\ds)\le d(Q)$. Thus any upper bound on degenerate $[[n,k]]$ stabilizer code $Q$ is also an upper bound on the minimum distance of degenerate $[[n,k;0]]$ DS code. The same is true for non-degenerate codes.

\subsection{Singleton Bound}\label{sec:SinB}

As we mentioned in Section \ref{sec:quantum_data-syndrome} code $C\dsp$ has size $2^{2n}$, and if $\bv=(v_1,\ldots,v_n,w_1,\ldots,w_{n-k})\in C\dsp$ then  $v_i\in \mathbb{F}_4$ and $w_i\in \mathbb{F}_2$.
This leads to the Singleton bound for nondegenerate DS codes.
\begin{theorem}\label{thm:SingletonBounNonDeg} In any nondegenerate $[[n,k,d:0]]$ DS code we have
\begin{equation}\label{eq:SingleNonDeg}
k\leq n-2(d-1).
\end{equation}
\end{theorem}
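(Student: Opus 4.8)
The plan is to derive the Singleton bound for nondegenerate DS codes by a shortening/puncturing argument on the code $C\dsp$, exploiting the explicit count $|C\dsp| = 2^{2n}$ established in Section~\ref{sec:quantum_data-syndrome}. Concretely, for $r=0$ a codeword of $C\dsp$ has the form $\bv = (v_1,\dots,v_n,w_1,\dots,w_{n-k})$ with $v_i \in \mathbb{F}_4$ and $w_i \in \mathbb{F}_2$, and the minimum distance of $Q\ds$ is the minimum weight over $C\dsp \setminus \{(\bg,{\bf 0}): \bg \in C\}$. First I would delete (puncture) $d-1$ of the $\mathbb{F}_4$-coordinates and all $n-k$ of the $\mathbb{F}_2$-coordinates from $C\dsp$, leaving a code on $n-(d-1)$ quaternary coordinates, hence of cardinality at most $4^{n-(d-1)}$.

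The key step is to argue that this puncturing map is injective on $C\dsp$ when the DS code is nondegenerate. Suppose two codewords $\bv, \bv'$ agree on the $n-(d-1)$ surviving quaternary coordinates; then $\bv - \bv'$ is supported on at most $(d-1) + (n-k) < $ (something) coordinates — more precisely, $\bu = \bv - \bv' \in C\dsp$ is zero on $n-(d-1)$ of the quaternary positions, so its quaternary weight is at most $d-1$ and its binary part is arbitrary. I need to rule out that $\bu \ne {\bf 0}$. Here the nondegeneracy enters: by (\ref{eq:stabilizer_number}) together with the definition of nondegeneracy (no $B_{i,j} > 0$ for $i < d$), any element of $C\dsp$ with quaternary weight $i$ in the range $1 \le i \le d-1$ must actually have binary weight $0$, i.e.\ be of the form $(\bg,{\bf 0})$ with $\bg \in C$; but such elements were excluded from the minimum-distance count, and moreover $\bg \in C$ means $g \in \cS$ acts trivially — crucially $C$ is a subcode of the quaternary code $C^\perp$ whose own minimum distance is $\ge d$ (since it is the stabilizer code, or rather since puncturing must preserve injectivity on that part too). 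The cleanest route: show that if $\bu \in C\dsp$ has quaternary support of size $\le d-1$, then either $\bu$ has nonzero binary part and hence $\wt(\bu) < d$ would contradict minimum distance $d$ (combining quaternary weight $\le d-1$ with binary weight $\ge 1$ gives total weight $\le d-1 + (\text{binary weight})$ — this does NOT immediately contradict $d$, so I must be more careful and instead puncture fewer binary coordinates or use a counting rather than injectivity argument).

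Reconsidering, the robust approach is a pure cardinality count rather than injectivity of a fixed puncturing. I would argue: project $C\dsp$ onto any fixed set of $n-(d-1)$ quaternary coordinates and the $0$ binary coordinates; nondegeneracy plus minimum distance $d$ force that two codewords with the same projection differ by an element of quaternary weight $\le d-1$ and binary weight $0$, which by (\ref{eq:stabilizer_number}) and nondegeneracy lies in $\{(\bg,{\bf 0}): \bg \in C\}$. Hence the fibers of the projection all have size exactly $|C| = 2^{n-k}$ (they are cosets of the subcode $\{(\bg,{\bf 0})\}$, and every fiber is nonempty only if it meets the image, but a standard argument shows all fibers have equal size $|C| \cdot (\text{number of binary completions})$). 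So $2^{2n} = |C\dsp| \le 4^{n-(d-1)} \cdot 2^{n-k} \cdot 2^{0}$... which after taking $\log_2$ gives $2n \le 2(n-(d-1)) + (n-k)$, i.e.\ $k \le n - 2(d-1)$ only if the binary-completion count is controlled. The main obstacle I anticipate is exactly this bookkeeping: correctly accounting for how many binary tails $(w_1,\dots,w_{n-k})$ can accompany a given quaternary prefix, and showing nondegeneracy makes the projection onto $n-(d-1)$ quaternary coordinates have all fibers of the same controlled size. Once that counting identity $|C\dsp| \le 4^{\,n-d+1}\cdot 2^{\,n-k}$ is pinned down, substituting $|C\dsp| = 2^{2n}$ and simplifying yields (\ref{eq:SingleNonDeg}) immediately.
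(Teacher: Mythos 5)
There is a genuine gap. Your overall strategy (apply a classical Singleton-type puncturing/counting argument to $C\dsp$, using $|C\dsp|=2^{2n}$) is exactly the intended route — the paper's proof is just the "simple generalization of the classical Singleton bound" — but your execution breaks at the central step. Puncturing the $d-1$ quaternary coordinates \emph{and} all $n-k$ binary coordinates destroys injectivity, as you yourself notice; your attempted repair then asserts that two codewords of $C\dsp$ with the same projection "differ by an element of quaternary weight $\le d-1$ and binary weight $0$, which by (\ref{eq:stabilizer_number}) and nondegeneracy lies in $\{(\bg,{\bf 0}):\bg\in C\}$." This does not follow: since the binary coordinates were projected away, the difference can have an arbitrary nonzero binary part, and an element of $C\dsp$ with quaternary weight $i\le d-1$ and binary weight $j\ge d-i$ is perfectly compatible with minimum distance $d$ and with nondegeneracy (nondegeneracy constrains $B_{i,j}$, i.e.\ the enumerator of $C\ds$, not of $C\dsp$; indeed the rows $(\bh_i,\e_i)$ in (\ref{eq:gen_matr_Cdsp}) typically have small quaternary weight and nonzero binary part). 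Likewise the claim that "all fibers have size exactly $|C|=2^{n-k}$" is unsubstantiated, and you explicitly leave the needed inequality $|C\dsp|\le 4^{\,n-d+1}2^{\,n-k}$ "to be pinned down," so the proof is not complete as written.

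The fix is simpler than your fiber bookkeeping: puncture only $d-1$ quaternary coordinates and \emph{keep} all $n-k$ binary coordinates. If two codewords of $C\dsp$ agree on the surviving coordinates, their difference $\bu\in C\dsp$ has quaternary weight at most $d-1$ and binary weight $0$, hence total weight at most $d-1$; by the definition of minimum distance it must be of the form $(\bg,{\bf 0})$ with $\bg\in C$, and nondegeneracy (no $B_{i,j}>0$ for $1\le i\le d-1$, with $H$ of full rank) forces $\bg={\bf 0}$, so $\bu={\bf 0}$ and the projection is injective. Then $2^{2n}=|C\dsp|\le 4^{\,n-d+1}\,2^{\,n-k}$, i.e.\ $2n\le 2(n-d+1)+(n-k)$, which is (\ref{eq:SingleNonDeg}). (Your fiber-counting variant can also be salvaged — the fiber subcode injects into $\mF_2^{n-k}$ via its binary part, because its kernel consists of weight-$\le d-1$ elements $(\bv,{\bf 0})\in C\dsp$, which the same argument kills — but that is not the justification you gave.)
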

The proof of this theorem is a simple generalization of the well known case of codes over\,$\mathbb{F}_q$~\cite{Sin64}. 

In \cite{AL99} several upper bounds for degenerate stabilizer codes have been derived. In particular, the Singleton bound $k\le n-2(d(Q)-1)$ has been proven. Thus we conclude that bound (\ref{eq:SingleNonDeg}) also holds for degenerate DS codes.

It is instructive to prove this result using (\ref{eq:general_bound}). To do this, we first note that
if $f(x,y)=0$ for $y\ge 1$, then the coefficients $f_{i,j}$ do not depend on $j$. Indeed, let $f(x,y)=g(x)\delta_{y,0}$ and
$
f(x,0)=g(x)=\sum_{i=0}^n g_i K_i(x; n,4).
$
Then, according to (\ref{eq:KrawExpension}) and (\ref{eq:K0=1}), we have
\begin{equation}\label{eq:f_i,j_const_in_j}
f_{i,j}={1\over 4^n 2^m} \sum_{x=0}^n g(x) K_x(i; n,4)\sum_{y=0}^m \delta_{y,0}K_y(j;m,2)={1\over 2^m} g_i K_0(j;m,2)={1\over 2^m} g_i.
\end{equation}

\begin{theorem}\label{thm:SingletonBoun} ({\bf Singleton Bound}) For an unrestricted (non-degenerate or degenerate)  DS code, we have
\begin{equation}\label{eq:SingleB_degen}
k\leq n-2(d-1).
\end{equation}
\end{theorem}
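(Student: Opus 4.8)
The plan is to apply the general machinery of Theorem~\ref{thm:gen_upper_bound} with a concretely chosen polynomial $f(x,y)$ that vanishes in the $y$-direction, so that the preceding observation~(\ref{eq:f_i,j_const_in_j}) applies and the coefficients $f_{i,j}$ reduce to $\frac{1}{2^m}g_i$, independent of $j$. With $r=0$ and the choice $d\tD=d$, $\cA=\{(i,j): j\ge 1, 0\le i+j\le d-1\}$ as in~(\ref{eq:cA for codes with min dist d}), the condition~(\ref{eq:polin_cond2}) on $\overline{\cA}$ becomes automatic once $f(x,y)=g(x)\delta_{y,0}$, because $f$ is then zero for all $y\ge 1$; only~(\ref{eq:polin_cond1}), namely $g(x)\le 0$ for $x\ge d$, together with nonnegativity of the Krawtchouk coefficients $g_i$ of $g$, needs to be arranged. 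So the whole problem collapses to exhibiting a single-variable polynomial $g(x)$ of degree at most $n$ with $g_i\ge 0$ in its Krawtchouk expansion over $\mathbb{F}_4$, with $g(x)\le 0$ for $d\le x\le n$, and for which the resulting bound $\max\{g(0)/g_0,\ \max_{1\le x\le d-1} g(0)\cdot(\text{ratio}) \} \ge 2^{2n}$ forces $k\le n-2(d-1)$.

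The natural candidate, exactly as in the classical Singleton argument and in~\cite{AL99}, is $g(x)=\binom{n-x}{d-1}$ (up to normalization), or equivalently a product of linear factors $\prod_{\ell=0}^{d-2}(n-\ell-x)$ scaled appropriately; this polynomial is nonnegative and decreasing on $\{0,\dots,n\}$, vanishes precisely on $x=n-d+2,\dots,n$ and is positive only on $x=0,\dots,n-d+1$ — one checks it is $\le 0$ for $x\ge d$ only in the regime where these ranges are compatible, which is where the bound bites. Its Krawtchouk coefficients over $\mathbb{F}_q$ with $q=4$ are known to be nonnegative (this is a standard fact about $\binom{n-x}{t}$, recorded in the Krawtchouk appendix referenced in the excerpt). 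Then I would compute $g(0)=\binom{n}{d-1}$, identify $g_0 = \frac{1}{q^n}\sum_x g(x)$ (the $i=0$ coefficient, using $K_0\equiv 1$), and evaluate $\sum_{x=0}^n \binom{n-x}{d-1} = \binom{n+1}{d}$, giving $g(0)/g_0 = q^n\binom{n}{d-1}/\binom{n+1}{d} = q^n \cdot \frac{d}{n+1-(d-1)} \cdot \ldots$ — in any case a quantity of the form $q^n$ times something bounded, from which $2^{2n}=4^n=q^n$ combined with the code-rate interpretation $k/n=R$ extracts $4^k\le 4^{n}/(\text{the bounded factor})$, i.e. $k\le n-2(d-1)$ after taking $\log_4$. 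I would also need to check the second term $\max_{1\le x\le d-1} g(x)/\min_j f_{x,j} = g(x)/(\frac{1}{2^m}g_x) = 2^m g(x)/g_x$ does not dominate — but since the argument only needs \emph{some} valid inequality $\ge 2^{2n}$ and we are deriving an upper bound, it suffices that the \emph{first} term alone cannot exceed $2^{2n}$ unless $k$ is small, which is the standard way these LP bounds are read off.

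The main obstacle I anticipate is bookkeeping rather than conceptual: making sure the reduction to a single-variable polynomial is airtight given the specific $\cA$ in~(\ref{eq:cA for codes with min dist d}) — in particular verifying that with $f(x,y)=g(x)\delta_{y,0}$ the hypothesis~(\ref{eq:polin_cond2}) is genuinely vacuous (it is, since $\overline{\cA}$ consists only of pairs with $j\ge 1$ or with $i+j\ge d$, and on all such pairs $f$ is either identically zero because $y\ge 1$, or we only need $f(x,0)\le 0$ which is~(\ref{eq:polin_cond1})). A second, minor, subtlety is that~(\ref{eq:f_i,j_const_in_j}) was stated for $r=0$ (hence $m+r=m$), so I should restrict to $r=0$ throughout, consistent with the theorem statement. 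Once these are pinned down, the computation is the classical one: evaluate $g$ and its leading Krawtchouk coefficient at $0$, plug into~(\ref{eq:general_bound}), and simplify. I expect the final inequality to read $4^k \le 4^n / 4^{d-1} \cdot (\text{constant} \le 1)$ up to the elementary binomial identity, yielding~(\ref{eq:SingleB_degen}).
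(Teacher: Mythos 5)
Your overall strategy --- apply Theorem~\ref{thm:gen_upper_bound} with a separable polynomial $f(x,y)=g(x)\delta_{y,0}$, so that by (\ref{eq:f_i,j_const_in_j}) the coefficients $f_{i,j}=\frac{1}{2^m}g_i$ are independent of $j$, restrict to $r=0$, and read the bound off (\ref{eq:general_bound}) --- is exactly the paper's route, and your observation that (\ref{eq:polin_cond2}) becomes vacuous for such $f$ is correct. The gap is in the concrete choice of $g$. You take $g(x)=\binom{n-x}{d-1}$ (equivalently $\prod_{\ell=0}^{d-2}(n-\ell-x)$), whose roots are $x=n-d+2,\dots,n$; hence $g(x)>0$ for $d\le x\le n-d+1$, so condition (\ref{eq:polin_cond1}) fails precisely in the only nontrivial regime $n\ge 2d-1$ (when $n\le 2d-2$ the Singleton bound asserts $k\le 0$ and there is nothing to prove). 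The admissible polynomial is the ``transposed'' binomial $\binom{n-x}{n-d+1}$, as in (\ref{eq:single_polynom}): it vanishes for all $x\ge d$, its Krawtchouk coefficients are $f_{i,j}=\binom{n-i}{d-1}\big/\binom{n}{d-1}\ge 0$ by (\ref{eq:(n-i,n-j)K_i(x)}), and it is this degree-$(n-d+1)$ choice that produces $f(0,0)/f_{0,0}=4^{n-d+1}2^m$, whence $4^{n-d+1}2^{n-k}\ge 2^{2n}$ gives $k\le n-2(d-1)$. Your candidate cannot deliver the right numerology even formally: taking your own evaluation of $g_0$ at face value, the ratio is $4^n2^m\binom{n}{d-1}\big/\binom{n+1}{d}$, and the condition $\ge 2^{2n}$ only yields $n-k\ge\log_2\frac{n+1}{d}$, nowhere near the Singleton bound.

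Two further slips. First, your extraction $g_0=q^{-n}\sum_x g(x)$ confuses $K_0(x)\equiv 1$ with $K_x(0)=3^x\binom{n}{x}$; by (\ref{eq:f_{i,j}}) and (\ref{eq:orth_Kraw}) the correct formula is $g_0=q^{-n}\sum_x g(x)K_x(0;n,4)$. Second, the claim that you need not control the second entry of the maximum in (\ref{eq:general_bound}) has the logic backwards: the theorem states that the maximum is $\ge 2^{2n}$ for any existing code, so to force $k\le n-2(d-1)$ you must show that otherwise \emph{both} entries are $<2^{2n}$. The paper handles this by verifying $f(0,0)/f_{0,0}>f(l,0)/f_{l,0}$ for $1\le l\le d-1$, which is a short computation for the polynomial (\ref{eq:single_polynom}) but cannot simply be waved away.
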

\proof
We will use the polynomial
\begin{equation}\label{eq:single_polynom}
f(x,y)=\frac{4^{n-d+1}2^m}{{n\choose d-1}} {n-x \choose n-d+1}\delta_{y,0}.
\end{equation}

Using (\ref{eq:f_{i,j}}) and (\ref{eq:(n-i,n-j)K_i(x)}), we obtain
\begin{equation}\label{eq:singleton_f_ij}
f_{i,j}={1\over 4^n2^m} {4^{n-d+1}2^m\over {n\choose d-1}} \sum_{x=0}^n {n-x\choose n-d+1} K_x(i;n,4) \sum_{y=0}^m
\delta_{y,0} K_y(j;m,2) ={ {n-i\choose d-1}\over {n\choose d-1}}\ge 0, \forall i,j.
\end{equation}
It is easy to see that $f_{i,j}\ge 0$ and $f(i,0)=0$ for $i\ge d$. Simple computations show that
 $f(0,0)/f_{0,0}>f(l,0)/f_{l,j}=f(l,0)/f_{l,0}$ for $1\le l\le d-1$. Finally, 
 $f(0,0)/f_{0,0}=4^{n-d+1}2^m$.  \QED

This approach gives us additional information on DS codes achieving the Singleton bound.  If a DS code, say $Q_{\mathrm{DS,MDS}}$,  meets the Singleton bound, then in (\ref{eq:gen_bound_line1}) we must have equality. Noticing that $f_{i,j}=0$ (defined in (\ref{eq:singleton_f_ij})) for $i> n-d+1$, we conclude that $Q_{\mathrm{DS,MDS}}$
must have $B_{i,j}=0$ for $d\le  n-d+1$ and $j\ge 0$. In (\ref{eq:gen_bound_last line}) we always have equality since $f(x,0)=0$ if $x\ge d$. Finally, in order to have
$$
{\sum_{x=0}^{d\tD-1}\sum_{j=0}^m B_{x,j} f(x,0) \over \sum_{i=0}^{d\tD-1}\sum_{j=0}^m f_{i,j}B_{i,j}}
={\sum_{x=0}^{d\tD-1}f(x,0) \sum_{j=0}^m B_{x,j}  \over \sum_{i=0}^{d\tD-1}f_{i,0} \sum_{j=0}^m B_{i,j}}
=f(0,0)/f_{0,0}
$$
in (\ref{eq:max f(j)/fj}),
code $Q_{\mathrm{DS,MDS}}$ must have $\sum_{j=0}^m B_{x,j}=0$ for $1\le x \le d\tD-1$. Thus $B_{x,j}=0$ for $1\le x\le n-d\tD-1$ and $j\ge 0$.
This means that any generator $\bg$ of $Q_{\mathrm{DS,MDS}}$ should have large weight, $\wt(\bg)\ge n-d$. Hence such $Q_{\mathrm{DS,MDS}}$ will have large syndrome measurement error.

Extensive research have been conducted on construction of quantum codes meeting the Singleton bound (see for example \cite{MDS1}, \cite{MDS2}, \cite{MDS3}, \cite{MDS4}, references within, and numerous other papers on this subject). The above result however shows that such codes most likely will not be useful for practical applications due to their large syndrome measurement error probability.


\subsection{Hamming Bound}
Let $C^\perp\ds$ be a non-degenerate DS code with minimum distance $d=2t+1$.
Standard combinatorial arguments (see \cite{Fuji14}) lead to  that $k\le \tilde{k}$, where
$\tilde{k}$ is the largest integer such that
\begin{align}\label{eq:HBnondeg}
 2^{2n} & \leq {4^n2^{n-\tilde{k}}\over \sum_{i=0}^{t} {n\choose i}3^i \sum_{j=0}^{t-i} {n-\tilde{k}\choose j}} .
\end{align}
This is the {\em Hamming Bound  for non-degenerate} DS codes. Below we show that this bounds also holds for degenerate DS codes if $n$ is sufficiently large.

Let $d\tD$ and $\cA$ be defined as in (\ref{eq:cA for codes with min dist d}).
\begin{lemma}\label{lem:Ham_bound}
	For a positive integer $\lambda$,  let $f^{(k)}(x,y)$ be the polynomial defined by the coefficients
	$$
	f^{(k)}_{i,j}=\left(\sum_{a=0}^{t}\sum_{g=0}^{t-\lambda a}  K_a(j;m,2) K_g(i;n,4) \right)^2.
	$$
	Then
	$$\displaystyle 
	f^{(k)}(x,y)=4^n2^m \sum_{a=0}^{t}\sum_{b=0}^{t} \beta(y,a,b) \sum_{g=0}^{t-\lambda a} \sum_{h=0}^{t-\lambda b} \sum_{w=0}^{n-x} \alpha(x,g,h,w),
	$$
	where
	\begin{align}
	\alpha(x,g,h,w)=&{x\choose 2x+2w-g-h}{n-x\choose w}{2x+2w-g-h \choose x+w-h} 2^{g+h-2w-q}3^w, \label{eq:alpha}
	\mbox{ and }\\
	\beta(y,a,b)=&{m-y\choose (a+b-y)/2}{y\choose (a-b+y)/2}. \label{eq:beta}
	\end{align}
\end{lemma}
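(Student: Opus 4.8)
The plan is to convert the given double‑square expression for $f^{(k)}_{i,j}$ into the claimed closed form for $f^{(k)}(x,y)$ by substituting into the Krawtchouk expansion \eqref{eq:KrawExpension} and then evaluating the resulting sums of products of three Krawtchouk polynomials. Concretely, I would start from
$$
f^{(k)}(x,y)=\sum_{i=0}^{n}\sum_{j=0}^{m} f^{(k)}_{i,j}\,K_i(x;n,4)\,K_j(y;m,2),
$$
and expand the square, so that
$$
f^{(k)}_{i,j}=\sum_{a,b=0}^{t}\;\sum_{g=0}^{t-\lambda a}\sum_{h=0}^{t-\lambda b} K_a(j;m,2)K_b(j;m,2)\,K_g(i;n,4)K_h(i;n,4).
$$
Because the $x$-variables and $y$-variables separate, $f^{(k)}(x,y)$ factors as a product of a ``$y$-part'' and an ``$x$-part'': the $y$-part is $\sum_{j} K_a(j;m,2)K_b(j;m,2)K_j(y;m,2)$ summed against $\beta$, and the $x$-part is $\sum_{i} K_g(i;n,4)K_h(i;n,4)K_i(x;n,4)$ summed against $\alpha$ with the extra $\sum_{w=0}^{n-x}$ reflecting the triple-product structure. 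So the whole statement reduces to two independent identities of the same flavor, one over $\mathbb{F}_2^m$ and one over $\mathbb{F}_4^n$.

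The key step is therefore the evaluation of the sum $\sum_{z} K_a(z;N,q)K_b(z;N,q)K_z(v;N,q)$. This is a standard (if tedious) Krawtchouk identity: using the generating‑function / orthogonality relation $\sum_{z} K_a(z)K_z(v)=q^N\delta_{a,v}$ together with the product‑linearization formula $K_a(z)K_b(z)=\sum_{c}c_{a,b}^{(c)}K_c(z)$ (whose coefficients $c_{a,b}^{(c)}$ are counting numbers of the form appearing in \eqref{eq:alpha}–\eqref{eq:beta}), one gets $\sum_z K_a(z)K_b(z)K_z(v;N,q)=q^N\,c_{a,b}^{(v)}$. For $q=2$, $N=m$, working out $c_{a,b}^{(v)}$ at the point $y$ gives exactly $\beta(y,a,b)={m-y\choose(a+b-y)/2}{y\choose(a-b+y)/2}$ (with the convention that a binomial with non‑integer or negative argument is zero, which automatically enforces the parity constraint $a+b\equiv y$ and the range $|a-b|\le y\le a+b$). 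For $q=4$, $N=n$, the analogous linearization coefficient evaluated at $x$, after reindexing the internal summation by a parameter $w$ running over $0\le w\le n-x$, produces the quantity $\sum_{w=0}^{n-x}\alpha(x,g,h,w)$ with $\alpha$ as in \eqref{eq:alpha}; the factor $3^w$ and the powers of $2$ come from the $(q-1)^{\bullet}$ weights in the Krawtchouk definition \eqref{eq:Kraw} specialized to $q=4$. The factor $4^n2^m$ out front is just the product of the two $q^N$ prefactors.

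Thus the proof is: (i) expand the square and separate variables; (ii) invoke the Krawtchouk product‑linearization identity in each alphabet to collapse the sums over $i$ and $j$ against $K_i(x;n,4)$ and $K_j(y;m,2)$; (iii) identify the resulting linearization coefficients with the explicit combinatorial expressions $\alpha$ and $\beta$ by a direct count (or by citing the known formula for Krawtchouk products in \cite{MS77,Lev95,AL99}); (iv) reinstate the ranges $g\le t-\lambda a$, $h\le t-\lambda b$ and the outer sums over $a,b$. The main obstacle I anticipate is step (iii): matching the linearization coefficients to the precise binomial expressions in \eqref{eq:alpha}–\eqref{eq:beta}, including getting the internal index $w$, the exponent $g+h-2w-q$ on the $2$, and all the parity/support conventions exactly right — this is a bookkeeping computation where the $q=4$ case is genuinely more delicate than the $q=2$ case because of the extra $3^w$ weighting and the three‑way binomial ${2x+2w-g-h\choose x+w-h}$ that encodes how the supports of the three Krawtchouk ``test words'' overlap. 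Everything else (separation of variables, orthogonality) is routine.
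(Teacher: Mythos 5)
Your proposal is correct and follows essentially the same route as the paper's proof: expand the square, apply the Krawtchouk product-linearization identities (\ref{eq:KaKb_sumKu2}) and (\ref{eq:KgKh_sumKq4}) — which are exactly the coefficients $\beta$ and $\alpha$, already available from \cite{MRRW77} and \cite{AL99}, so your step (iii) is a citation rather than a fresh computation — and then collapse the sums over $i$ and $j$ against $K_i(x;n,4)$ and $K_j(y;m,2)$ via orthogonality (\ref{eq:orth_Kraw}), producing the prefactor $4^n2^m$. The only loose phrasing is the claim that $f^{(k)}(x,y)$ ``factors'' into an $x$-part and a $y$-part: the ranges $g\le t-\lambda a$, $h\le t-\lambda b$ couple the two alphabets, so the separation holds only termwise, which is how both you and the paper actually use it.
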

A proof can be found in Appendix~\ref{app:Hamming}.

Note that in the above lemma,  $\lambda$ is a parameter over which we will optimize our bound. Next we give an important property of the polynomial $f^{(k)}(x,y)$.
\begin{lemma}\label{lem:f(x,y)=0}
For $x+y\ge d$, we have $f^{(k)}(x,y)=0$.
\end{lemma}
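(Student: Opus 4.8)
\textbf{Proof proposal for Lemma~\ref{lem:f(x,y)=0}.}

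The plan is to exploit the product structure of the Krawtchouk coefficients $f^{(k)}_{i,j}$ together with the classical fact that the polynomial $\psi_t(x) = \left(\sum_{g=0}^{t} K_g(x;n,q)\right)^2$, which underlies the Hamming (sphere-packing) bound, vanishes on the ``outer'' part of the code. Concretely, I would first observe that $f^{(k)}_{i,j}$ factors as $\left(\sum_{a=0}^{t}\sum_{g=0}^{t-\lambda a} K_a(j;m,2)K_g(i;n,4)\right)^2$, and expanding the square one gets a sum of products $K_a(j;m,2)K_{a'}(j;m,2)K_g(i;n,4)K_{g'}(i;n,4)$ over the appropriate ranges. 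The Krawtchouk expansion \eqref{eq:KrawExpension} applied to this coefficient array then produces $f^{(k)}(x,y)$ as a double sum over products of a polynomial in $x$ (coming from the $\mathbb F_4$ coordinate block of length $n$) and a polynomial in $y$ (coming from the $\mathbb F_2$ block of length $m$); this is exactly the structure exhibited in Lemma~\ref{lem:Ham_bound} via the functions $\alpha$ and $\beta$.

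Next I would bring in the convolution/positivity identities for Krawtchouk polynomials listed in Appendix~\ref{app:Kraw}. The key point is that a single factor $\left(\sum_{g=0}^{s} K_g(i;n,q)\right)^2$ has, as a polynomial in the evaluation variable $x$ dual to $i$, Krawtchouk expansion supported on degrees $\le 2s$, and more importantly it can be rewritten (via the standard product-to-sum / three-term convolution for Krawtchouk polynomials) as a nonnegative combination of ``ball'' indicator-type quantities that vanish once $x$ exceeds $2s$; this is precisely why $\alpha(x,g,h,w)$ carries the binomial factor $\binom{x}{2x+2w-g-h}$, which is zero unless $2x+2w-g-h \le x$, i.e. $x \le g+h-2w$. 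Since $g \le t-\lambda a \le t$ and $h \le t - \lambda b \le t$ and $w \ge 0$, every surviving term forces $x \le g+h \le 2t = d-1$. Symmetrically, the factor $\beta(y,a,b) = \binom{m-y}{(a+b-y)/2}\binom{y}{(a-b+y)/2}$ vanishes unless $y \le a+b$ with $a+b-y$ even; combined with $a,b \le t$ this gives $y \le 2t = d-1$ on every surviving term. The two constraints together cannot simultaneously hold with $x+y \ge d = 2t+1$ in a term that also respects the coupling $g \le t - \lambda a$, $h \le t-\lambda b$: if $x+y\ge d$ then the nonzero terms need $x \le g+h$ and $y \le a+b$, hence $d \le x+y \le (g+h)+(a+b)$; but $g+h \le 2t - \lambda(a+b)$, so $d \le 2t - \lambda(a+b) + (a+b) = 2t - (\lambda-1)(a+b)$. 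For $\lambda \ge 1$ the right side is at most $2t = d-1$, a contradiction; hence no term survives and $f^{(k)}(x,y)=0$.

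The main obstacle I anticipate is the bookkeeping in the third step: making the vanishing of $\alpha(x,g,h,w)$ rigorous requires being careful that the binomial coefficient $\binom{x}{2x+2w-g-h}$ is interpreted as zero whenever its lower index is negative or exceeds $x$, and that the ranges of summation in $w$ (from $0$ to $n-x$) and in $g,h$ are the ones that actually appear after the Krawtchouk expansion — i.e. I must make sure the expression from Lemma~\ref{lem:Ham_bound} is used verbatim and that no additional terms with larger $g+h$ leak in. A secondary subtlety is the parity constraint hidden in $\beta$: $(a+b-y)/2$ and $(a-b+y)/2$ must be nonnegative integers, so one should note that terms with $a+b < y$ or with $a+b \not\equiv y \pmod 2$ contribute zero, which is what lets me conclude $y \le a+b$ cleanly. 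Once these range/integrality conventions are pinned down, the inequality chain $d \le x+y \le (g+h)+(a+b) \le 2t - (\lambda-1)(a+b) \le 2t < d$ closes the argument. \QED
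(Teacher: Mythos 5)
Your argument is correct, and it reaches the conclusion by a noticeably more direct route than the paper. Both proofs start from the same place: the explicit representation of $f^{(k)}(x,y)$ in Lemma~\ref{lem:Ham_bound} together with the convention that a binomial coefficient vanishes when its lower index is negative, non-integral, or exceeds the upper index, so that a term survives only if all five binomial factors are nonzero (the powers of $2$ and $3$ are never zero, so they are irrelevant). The paper then assembles \emph{all} of the resulting linear constraints in the seven variables $a,b,w,g,h,x,y$ and certifies infeasibility under $x+y\ge 2t+1$ by Fourier--Motzkin elimination, a mechanical but lengthy computation. You instead isolate the three constraints that matter: ${x\choose 2x+2w-g-h}\neq 0$ forces $x\le g+h-2w\le g+h$, the first factor of $\beta(y,a,b)$ forces $y\le a+b$, and the summation ranges give $g\le t-\lambda a$, $h\le t-\lambda b$; chaining these yields $x+y\le 2t-(\lambda-1)(a+b)\le 2t=d-1$ for every potentially nonzero term, contradicting $x+y\ge d$, so every term vanishes (nonnegativity is not even needed). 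Your handling of the conventions (negative or non-integer lower indices, the parity constraint inside $\beta$, empty inner sums when $t-\lambda a<0$) is consistent with the paper's, and your argument has the added benefit of making transparent exactly where the hypothesis that $\lambda$ is a positive integer enters, which the elimination-based proof obscures. The only cosmetic caveat is that the Krawtchouk product-to-sum motivation in your second paragraph is unnecessary once Lemma~\ref{lem:Ham_bound} is cited; the final inequality chain is the whole proof.
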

 The proof is given in Appendix~\ref{app:f(x,y)=0}.

Thus $f^{(k)}(x,y)$ satisfies constraints (\ref{eq:polin_cond1}) and (\ref{eq:polin_cond2}) and hence we can use it for obtaining a bound on the minimum distance of DS codes.
Choosing $\lambda=1$, we get a polynomial with $f^{(\tilde{k})}(0,0)/f^{(\tilde{k})}_{0,0}$ equal to the right hand side
of (\ref{eq:HBnondeg}). Numerical computations show that for large $n$, the first entry in the set defined in (\ref{eq:general_bound}) dominates. Thus, for
large $n$, this polynomial gives the Hamming bound (\ref{eq:HBnondeg}) for unrestricted (non-degenerate and degenerate) DS codes.  The ``disatvantage"
of this polynomial is that its coefficients $f^{(k)}_{i,j}$ may agresively decrease with $j$, which for certain parameters makes $\min_{1\le j\le m} f(x,j)$ in (\ref{eq:general_bound}) being very small, that results in a loose bound.

If we choose $\lambda=t+1$,   we get $f^{(k)}(x,y)=f(x)\delta_{0,y}$, where $f(x)$ is the polynomial with  
$f_i=(\sum_{g=0}^t K_g(i;n,4) )^2$, that is the polynomial that leads to the Hamming bound for classical codes
over $\mF_4$, see \cite[Chapter 17]{MRRW77}. For this polynomial, the value $f^{({k})}(0,0)/f^{({k})}_{0,0}$ is larger than in the case of $\lambda=1$. However, its advantage is that its coefficients $f^{(k)}_{i,j}$ do not decrease with $j$ (in fact they do not depend on $j$), which often leads to  better bound than with $\lambda=1$.

\begin{theorem}({\bf Hamming bound for unrestricted DS codes}.) For an unrestricted DS code, we have
$k\le \bar{k}$, where
 $\bar{k}$ is the largest integer such that
\begin{equation}\label{eq:HBdeg}
\min_{1\le \lambda \le t+1} \max\left\{ {f^{(\bar{k})}(0,0)\over f^{(\bar{k})}_{0,0}}, \max_{1\le x\le d\tD-1} {f^{(\bar{k})}(x,0)\over \min_{1\leq j\leq m} f^{(\bar{k})}_{x,j}}\right\} \ge 2^{2n}.
\end{equation}
\end{theorem}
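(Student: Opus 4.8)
The plan is to obtain this as a direct specialization of the general linear-programming bound in Theorem~\ref{thm:gen_upper_bound}(2), fed the polynomials constructed in Lemmas~\ref{lem:Ham_bound} and~\ref{lem:f(x,y)=0}. Throughout set $r=0$ and $m=n-k$, and take $d\tD=d$ together with $\cA$ as in~(\ref{eq:cA for codes with min dist d}), so that $\overline{\cA}=\{(i,j):1\le j\le m,\ i+j\ge d\}$. Suppose an $[[n,k,d:0]]$ DS code $C\dsp$ exists; I first record that its split weight enumerators satisfy the hypotheses~(\ref{bound_cond1}) and~(\ref{bound_cond2}) of Theorem~\ref{thm:gen_upper_bound}: condition~(\ref{bound_cond1}) is~(\ref{eq:stabilizer_number}) together with $B_{0,0}^\perp=B_{0,0}=1$ from~(\ref{eq:Bi0=0}) (recall that for $r=0$ there is no codeword of $C\ds$ with zero data part and nonzero syndrome part), and condition~(\ref{bound_cond2}) holds because any codeword of $C\dsp$ with a nonzero syndrome part is not of the form $(\bg,{\bf 0})$ and hence has weight at least $d$, forcing $B_{i,j}^\perp=0$ whenever $j\ge1$ and $i+j\le d-1$.

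Next, for each $\lambda\in\{1,\dots,t+1\}$ I would verify that the polynomial $f^{(k)}(x,y)$ of Lemma~\ref{lem:Ham_bound} is admissible in Theorem~\ref{thm:gen_upper_bound}. Its Krawtchouk coefficients $f^{(k)}_{i,j}=\bigl(\sum_{a}\sum_{g}K_a(j;m,2)K_g(i;n,4)\bigr)^2$ are perfect squares, hence nonnegative. By Lemma~\ref{lem:f(x,y)=0}, $f^{(k)}(x,y)=0$ whenever $x+y\ge d$; in particular $f^{(k)}(x,0)=0$ for $x\ge d=d\tD$, which is~(\ref{eq:polin_cond1}), and $f^{(k)}(i,j)=0$ for every $(i,j)\in\overline{\cA}$, which is~(\ref{eq:polin_cond2}). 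Theorem~\ref{thm:gen_upper_bound}(2) applied with $f=f^{(k)}$ then gives
\[
\max\left\{\frac{f^{(k)}(0,0)}{f^{(k)}_{0,0}},\ \max_{1\le x\le d\tD-1}\frac{f^{(k)}(x,0)}{\min_{1\le j\le m}f^{(k)}_{x,j}}\right\}\ \ge\ 2^{2n}
\]
for each such $\lambda$. Since the left-hand side is $\ge 2^{2n}$ for every $\lambda\in\{1,\dots,t+1\}$, the inequality persists after taking the minimum over $\lambda$, i.e. $k$ itself satisfies~(\ref{eq:HBdeg}). As $\bar k$ is by definition the largest integer for which~(\ref{eq:HBdeg}) holds, membership of $k$ in that solution set immediately yields $k\le\bar k$; note that no monotonicity of the left-hand side in $k$ is needed for this last step.

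So the theorem itself is a short assembly, and I expect no genuine obstacle in it. All the real work has been front-loaded into the two appendix lemmas: Lemma~\ref{lem:Ham_bound} must establish the explicit $\alpha,\beta$ expansion — which simultaneously shows $f^{(k)}$ takes nonnegative values on $\{0,\dots,n\}\times\{0,\dots,m\}$ and identifies its coefficients as the stated squares — and Lemma~\ref{lem:f(x,y)=0} must verify the vanishing pattern $f^{(k)}(x,y)=0$ for $x+y\ge d$; these are where the Krawtchouk identities of Appendix~\ref{app:Kraw} do the heavy lifting. The one point worth double-checking in that development is that for $\lambda=1$ the ratio $f^{(\bar k)}(0,0)/f^{(\bar k)}_{0,0}$ collapses exactly to the right-hand side of~(\ref{eq:HBnondeg}), which is what guarantees~(\ref{eq:HBdeg}) is never weaker than the classical Hamming-type bound for non-degenerate DS codes.
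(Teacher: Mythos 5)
Your proposal is correct and takes essentially the same route as the paper: the theorem is precisely the specialization of Theorem~\ref{thm:gen_upper_bound}(2) to the polynomials $f^{(k)}$ of Lemma~\ref{lem:Ham_bound}, with admissibility coming from the squared Krawtchouk coefficients and the vanishing property of Lemma~\ref{lem:f(x,y)=0}, followed by minimizing over $\lambda$. The only difference is presentational: you spell out the verification of conditions (\ref{bound_cond1})--(\ref{bound_cond2}) for an $[[n,k,d:0]]$ DS code and the final ``largest integer'' step, which the paper leaves implicit.
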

For $d=7$, the Hamming bounds (\ref{eq:HBnondeg}) and (\ref{eq:HBdeg}) are shown in Fig.~\ref{fig:Hamming bound}. For small values of $n$,
bound (\ref{eq:HBdeg}) is only marginally weaker than (\ref{eq:HBnondeg}), and for $n\ge 36$, these bounds coincide.
We observed the same behavior for other values of $d$. So we make the following conjecture.
\begin{conjecture} For any $d$, there exists $n(d)$ such that for $n\ge n(d)$, the Hamming bound (\ref{eq:HBnondeg}) holds for unrestricted
DS codes.
\end{conjecture}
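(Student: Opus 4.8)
The plan is to establish the quantitative fact behind the numerical observation: that for $n$ large relative to $d$, the bound~(\ref{eq:HBdeg}) collapses onto~(\ref{eq:HBnondeg}). By the Hamming bound for unrestricted DS codes, every such code obeys $k\le\bar k$, where $\bar k$ is the largest integer satisfying~(\ref{eq:HBdeg}), and $\tilde k$ is the largest integer satisfying~(\ref{eq:HBnondeg}). So it suffices to prove $\bar k\le\tilde k$ once $n\ge n(d)$, and for that it is enough to prove the implication ``(\ref{eq:HBdeg}) holds at $k$ $\Longrightarrow$ (\ref{eq:HBnondeg}) holds at $k$'' for every $k$ and every $n\ge n(d)$. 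Fix a $k$ for which~(\ref{eq:HBdeg}) holds and take the single value $\lambda=1$ in the outer minimum: since $\min_\lambda(\cdots)\ge 2^{2n}$, in particular $\max\bigl\{\,f^{(k)}(0,0)/f^{(k)}_{0,0},\ \max_{1\le x\le d\tD-1} f^{(k)}(x,0)/\min_{1\le j\le m} f^{(k)}_{x,j}\,\bigr\}\ge 2^{2n}$ for the $\lambda=1$ polynomial. As already noted in the text, for $\lambda=1$ the first entry $f^{(k)}(0,0)/f^{(k)}_{0,0}$ equals the right-hand side of~(\ref{eq:HBnondeg}); if it is $\ge 2^{2n}$ we are done. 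Otherwise $f^{(k)}(0,0)/f^{(k)}_{0,0}<2^{2n}$, and, using only the one-term estimates $f^{(k)}(0,0)\ge c_1\,4^n2^m n^{t}$ and $f^{(k)}_{0,0}\le c_2\,n^{2t}$ ($c_1,c_2>0$ depending only on $t$, where $d=2t+1$), this forces $2^m=O(n^{t})$, i.e.\ $m=n-k=O(\log n)$. The problem then reduces to showing, in this regime, that the second entry $\max_{1\le x\le d\tD-1} f^{(k)}(x,0)/\min_{1\le j\le m} f^{(k)}_{x,j}$ stays below $2^{2n}$.

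Two crude inequalities come for free. Expanding $f^{(k)}(x,0)=\sum_{i,j} f^{(k)}_{i,j} K_i(x;n,4)K_j(0;m,2)$ via~(\ref{eq:KrawExpension}) (with $r=0$), and using that each $f^{(k)}_{i,j}$ is a square (Lemma~\ref{lem:Ham_bound}) hence nonnegative, that $K_j(0;m,2)=\binom{m}{j}\ge 0$, and the standard bound $|K_i(x;n,4)|\le K_i(0;n,4)=3^i\binom{n}{i}$ from Appendix~\ref{app:Kraw}, one gets $|f^{(k)}(x,0)|\le f^{(k)}(0,0)$; the same three facts applied to $\sqrt{f^{(k)}_{x,j}}=\sum_{a=0}^{t}\sum_{g=0}^{t-a}K_a(j;m,2)K_g(x;n,4)$ give $f^{(k)}_{x,j}\le f^{(k)}_{0,0}$ for all $x,j$. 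These by themselves do not close the gap, so the core of the proof is a matched pair of sharper estimates, with $d=2t+1$ fixed, $m=O(\log n)$, and $n\to\infty$.

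For the coefficients: in $\sqrt{f^{(k)}_{x,j}}$ the $a=0$ summand equals $\sum_{g=0}^{t}K_g(x;n,4)$, which does not involve $j$ and equals $\tfrac{3^t}{t!}n^{t}(1+o(1))$ (leading term $K_t(x;n,4)=3^t\binom{n-x}{t}+O(n^{t-1})$), while each summand with $a\ge 1$ is $O\!\bigl(\binom{m}{a}n^{t-a}\bigr)=o(n^{t})$, uniformly in $j$, since $|K_a(j;m,2)|\le\binom{m}{a}=(\log n)^{O(1)}$ and $|K_g(x;n,4)|\le 3^g\binom{n}{g}$. Hence $\min_{1\le j\le m}f^{(k)}_{x,j}=\tfrac{9^t}{(t!)^2}n^{2t}(1+o(1))$, and the identical count gives $f^{(k)}_{0,0}=\tfrac{9^t}{(t!)^2}n^{2t}(1+o(1))$, so $f^{(k)}_{0,0}\big/\min_{1\le j\le m}f^{(k)}_{x,j}\to 1$. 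For the polynomial values one invokes the closed form of Lemma~\ref{lem:Ham_bound}: at $y=0$ only $a=b$ survives $\beta$, and at $x=0$ the two binomial factors of $\alpha$ in~(\ref{eq:alpha}) force $g=h=w$, whence $f^{(k)}(0,0)/(4^n2^m)=\sum_a\binom{m}{a}\sum_{g\le t-a}\alpha(0,g,g,g)=\Theta(n^{t})$; whereas for $x\ge 1$ the factor $\binom{x}{2x+2w-g-h}$ in~(\ref{eq:alpha}) forces $w\le t-a-\lceil x/2\rceil\le t-a-1$, so every surviving term of $f^{(k)}(x,0)/(4^n2^m)$ carries a factor $\binom{n-x}{w}=O(n^{t-a-1})$, giving $|f^{(k)}(x,0)|/(4^n2^m)=O(n^{t-1})$. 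Combining, $|f^{(k)}(x,0)|\big/\min_{1\le j\le m}f^{(k)}_{x,j}=O(1/n)\cdot f^{(k)}(0,0)/f^{(k)}_{0,0}$; in the pending case $f^{(k)}(0,0)/f^{(k)}_{0,0}<2^{2n}$ this is $O(2^{2n}/n)<2^{2n}$ for all large $n$, uniformly over the finitely many $x\in\{1,\dots,d\tD-1\}$. So the second entry in the $\lambda=1$ max cannot reach $2^{2n}$, which forces the first to, i.e.\ (\ref{eq:HBnondeg}) at $k$. This yields $\bar k\le\tilde k$, and $n(d)$ is the threshold past which all the above estimates hold.

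The step I expect to be the main obstacle is the uniformity bookkeeping: all the $o(1)$ and $O(\cdot)$ errors must be controlled uniformly as $n$ grows, keeping in mind that $m=m(n)=O(\log n)$ itself grows with $n$. This does go through, because every error term is a bounded power of $m$ (hence $(\log n)^{O(d)}$) times a strictly smaller power of $n$ than the matching main term; but it is the genuinely technical part and must be written carefully. In particular one must check that the ``$a=0$ summand dominates'' phenomenon in $\sqrt{f^{(k)}_{x,j}}$ is not spoiled for a $j$ near a zero of some binary Krawtchouk polynomial $K_a(\,\cdot\,;m,2)$: this is harmless precisely because the dominant $a=0$ contribution $\sum_{g\le t}K_g(x;n,4)$ carries no $j$, so there is no near-cancellation at such a $j$. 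A minor loose end is the exact constant in the exponent of $2$ inside $\alpha$ in~(\ref{eq:alpha}) (the ``$q$'' there); since only the $n$-degrees of $f^{(k)}(x,0)$ and $f^{(k)}(0,0)$ are used, its value is irrelevant.
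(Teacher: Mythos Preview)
The paper does not prove this statement: it is explicitly stated as a \emph{conjecture}, supported only by the numerical observation (illustrated for $d=7$ in Fig.~\ref{fig:Hamming bound}) that the bounds (\ref{eq:HBnondeg}) and (\ref{eq:HBdeg}) coincide once $n$ is large enough. There is therefore no ``paper's own proof'' to compare against; your proposal is an attempt to \emph{prove} what the authors only conjecture.

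That said, your outline is sound and would, if written out carefully, settle the conjecture. The reduction is correct: by the Hamming bound for unrestricted DS codes one has $k\le\bar k$, so it suffices to show that whenever (\ref{eq:HBdeg}) holds at some $k$ with $\lambda=1$, the first entry $f^{(k)}(0,0)/f^{(k)}_{0,0}$ already attains $2^{2n}$. Your key observations are all valid: (i)~if the first entry is below $2^{2n}$ then $2^m\le C(t)\,n^t$, forcing $m=O_t(\log n)$; (ii)~in that regime the $a=0$ summand dominates $\sqrt{f^{(k)}_{x,j}}$ uniformly in $j$, because it carries the full power $n^t$ while each $a\ge 1$ summand is $O\bigl(\binom{m}{a}n^{t-a}\bigr)=o(n^t)$; (iii)~for $x\ge 1$ the constraint $2x+2w-g-h\le x$ in $\alpha(x,g,h,w)$ forces $w\le t-a-\lceil x/2\rceil$, so the only $n$-dependent factor $\binom{n-x}{w}$ drops to $O(n^{t-a-1})$. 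Combining (ii) and (iii) gives that the second entry is $O\bigl((\log n)^{O(t)}/n\bigr)$ times the first, hence below $2^{2n}$, a contradiction. All implied constants depend only on $t$ and on $x\le 2t$, so the threshold $n(d)$ depends only on $d$, as required.

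Two small clean-ups for the write-up: your stated bound $|f^{(k)}(x,0)|/(4^n2^m)=O(n^{t-1})$ should really carry a $(\log n)^{O(t)}$ factor coming from $\beta(0,a,a)=\binom{m}{a}$; this is harmless since only $o(n^t)$ is needed, and you do note this in the final paragraph. Also, the crude preliminary inequalities $|f^{(k)}(x,0)|\le f^{(k)}(0,0)$ and $f^{(k)}_{x,j}\le f^{(k)}_{0,0}$ are not actually used in the argument and can be dropped.
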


\begin{figure}[h]
\[
\includegraphics[width=7.5cm]{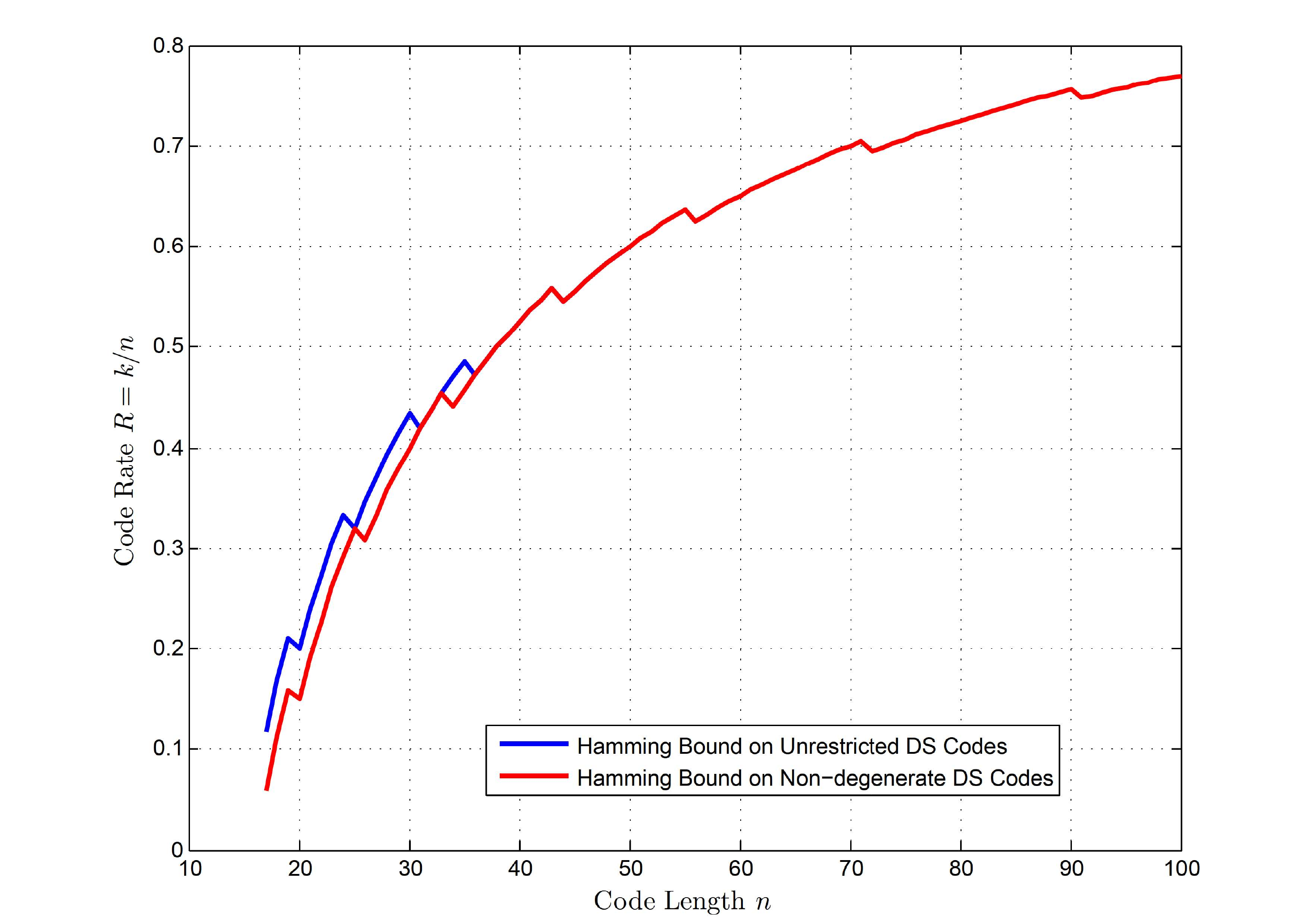}
\]
\caption{Hamming Bounds for nondegenerate and unrestricted DS codes, $d=7$.
 } \label{fig:Hamming bound} 
\end{figure}

In \cite{Fuji14} Fujiwara obtained a {\em hybrid Hamming bound} for nondegenerate DS codes that can correct any
 $t\tD$ data and $t\tS$ syndrome errors:  $k\le \hat{k}$, where $\hat{k}$ is the largest integer  such that
\begin{equation}\label{eq:HybridHamBound}
 2^{2n}\le {2^{2n}2^{n-\hat{k}}\over \sum_{i=0}^{t\tD}\sum_{j=0}^{t\tS} {n \choose i}3^i {n-\hat{k}\choose j}}.
 \end{equation}
We can also derive this hybrid bound using Theorem \ref{thm:gen_upper_bound} with $\cA=\{ (i,j): 0\le i\le 2t\tD \mbox{ and } 1\le j\le 2t\tS\}$, and polynomial
$$
	f^{(k)}(x,y)=4^n2^m \sum_{i=0}^{t\tD}\sum_{j=0}^{t\tD} \sum_{h=0}^{n-x} \alpha(x,i,j,h) \times \sum_{u=0}^{t\tS}\sum_{v=0}^{t\tS}
	\beta(y,u,v).
	$$
Tedious but straightforward computations show that $f^{(k)}_{i,j}\ge 0$, $f^{(k)}(x,y)=0$ if $(x,y)\in \overline{\cA}$,
and that $f^{(\hat{k})}(0,0)/f^{(\hat{k})}_{0,0}$ is equal to the right hand side of~(\ref{eq:HybridHamBound}).

Thus we obtained a different proof of (\ref{eq:HybridHamBound}). We cannot use this polynomial
for degenerate  DS codes, since for some $i\le d\tD-1$, we have  $f^{(k)}_{i,j}=0$. Finding good polynomials for
deriving hybrid bounds on degenerate  DS codes is an open problem.

\subsection{Asymptotic Bounds}
In this subsection, we consider the asymptotic regime in which both the code length $n$ and the number of information qubits $k$ tend to infinity but the code rate $R=k/n$ remains  constant.

It is instructive to consider the Hamming bound (\ref{eq:HBnondeg}) for nondegenerate DS codes in this regime. In order of doing this we have to find the leading term of the denominator of~(\ref{eq:HBnondeg}).

Recall that if $v$ grows linearly with $n$ and $a_{j^*}>a_j,j\not =j^*$, then
\begin{equation}\label{eq:exponent_largest}
{1\over n} \log 2 \sum_{j}^v 2^{n a_j} = a_{j^*}+o(1).
\end{equation}
Let $H(x)=-x\log_2(x)-(1-x)\log_2(1-x)$ be the binary entropy function.
Denoting $\xi=j/n, \tau=t/n$, and $\iota=i/n$ and using (\ref{eq:exponent_largest}), we get for the second sum of the denominator of\,(\ref{eq:HBnondeg})
$$
{1\over n} \log_2 \sum_{j=0}^{t-i} {n-\tilde{k}\choose j}= \frac{1}{n}\max_{0\le \xi \le \tau-\iota} \log_2 2^{(n-\tilde{k})H(\xi n /(n-\tilde{k})}+o(1)=\max_{0\le \xi \le \tau-\iota} (1-R)H\left({\xi\over 1-R}\right)+o(1),
$$
where $o(1)$ is a function that tends to $0$ as $n$ increases
and we have used Stirling's approximation   that ${1\over n} \log_2 {n \choose i} = H(i/n)+o(1)$ \cite{MS77}.
This function achieves its maximum at $\xi= \frac{1}{2} (1-R)$. However, according to the Singleton bound, the relative distance $\delta\triangleq \frac{d}{n}\le \frac{1}{2} (1-R)$ and therefore
$$
\tau=t/n=\delta/2 \le \frac{1}{4}(1-R) \le \frac{1}{2}(1-R).
$$
Thus the maximum is achieved at $\xi= \tau-\iota$. Hence for the denominator of (\ref{eq:HBnondeg}) we have
\begin{align}
{1\over n} \log_2 \sum_{i=0}^{t} {n\choose i}3^i \sum_{j=0}^{t-i} {n-\tilde{k}\choose j} =
\max_{0\le \iota \le \tau} H(\iota)+\iota \log_2 (3) +(1-R)H((\tau-\iota)/(1-R))+o(1).
\end{align}
Taking the derivative and finding its roots, we conclude that the maximum is achieved at
\begin{equation}\label{eq:iota*}
\iota^*=1-{1\over 4}R+{1\over 2}\tau - {1\over 4} \sqrt{16-8R-8\tau+R^2-4R\tau+4\tau^2}.
\end{equation}
It is not difficult to show that $\iota^*$ is always smaller than $\tau$. Thus the exponent of the denominator of (\ref{eq:HBnondeg}) is
$$
H(\iota^*)+\iota^* \log_2 (3) +(1-R)H((\tau-\iota^*)/(1-R))+o(1).
$$
The exponents of the left part and the numerator of (\ref{eq:HBnondeg}) are $2$ and
$
{1\over n} \log_2 4^n 2^{n-\tilde{k}} =3-R,
$
respectively. Combining the above results, we obtain the following theorem.
\begin{theorem}\label{eq:HamBound_asympt_nondegen}
For a given relative distance $\delta$, the code rate $R$ cannot exceed the root, say $R_{Ham,nondeg}(\delta)$, of
\begin{equation}\label{eq:ham_bound_nondegen_asympt}
H(\iota^*) +\iota^* \log_2 (3) +(1-R)H((\delta/2-\iota^*)/(1-R))+R-1= 0.
\end{equation}
\end{theorem}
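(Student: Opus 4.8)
The plan is to turn the finite Hamming bound~(\ref{eq:HBnondeg}) into its asymptotic counterpart by taking $\tfrac1n\log_2$ of both sides, identifying the dominant exponent of the denominator, and then reading the resulting inequality as a constraint on $R$ for fixed $\delta$. Set $\tilde k=Rn+o(n)$ and $t=\delta n/2+o(n)$, and write $\iota=i/n$, $\xi=j/n$, $\tau=\delta/2$.

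First I would dispose of the two easy exponents: the left side of~(\ref{eq:HBnondeg}) has exponent $2$, and the numerator $4^n2^{n-\tilde k}$ has exponent $3-R$. The work is in the denominator $\sum_{i=0}^{t}\binom ni 3^i\sum_{j=0}^{t-i}\binom{n-\tilde k}{j}$. Using Stirling's approximation and the largest-term lemma~(\ref{eq:exponent_largest}), the inner sum has exponent $\max_{0\le\xi\le\tau-\iota}(1-R)H\!\left(\xi/(1-R)\right)$. The unconstrained maximizer of $(1-R)H(\xi/(1-R))$ is $\xi=(1-R)/2$, and the Singleton bound (Theorem~\ref{thm:SingletonBounNonDeg}) gives $\tau=\delta/2\le(1-R)/4<(1-R)/2$; hence on the feasible interval the maximum is attained at the right endpoint $\xi=\tau-\iota$. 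Maximizing $H(\iota)+\iota\log_2 3+(1-R)H\!\left((\tau-\iota)/(1-R)\right)$ over $\iota\in[0,\tau]$, setting the derivative to zero and selecting the relevant root, yields the critical point $\iota^*$ of~(\ref{eq:iota*}); a sign check shows $0\le\iota^*<\tau$, so this is an interior maximum. Collecting exponents, (\ref{eq:HBnondeg}) reads, up to $o(1)$,
\[
2\ \le\ 3-R-\Bigl(H(\iota^*)+\iota^*\log_2 3+(1-R)H\!\left((\delta/2-\iota^*)/(1-R)\right)\Bigr),
\]
i.e. $\Phi(R,\delta)\le 0$, where $\Phi(R,\delta)$ denotes the left side of~(\ref{eq:ham_bound_nondegen_asympt}).

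The final step is to show that $\Phi(\,\cdot\,,\delta)$ is strictly increasing in $R$, so that $\Phi(R,\delta)\le0$ forces $R\le R_{Ham,nondeg}(\delta)$, the (unique) root of $\Phi(\,\cdot\,,\delta)$. Because $\iota^*$ is a stationary point of the quantity being maximized, the envelope theorem lets me differentiate through the max with $\iota$ frozen at $\iota^*$: writing $u=(\delta/2-\iota^*)/(1-R)$ and using $H'(u)=\log_2\frac{1-u}{u}$, a short computation gives $\partial\Phi/\partial R=\log_2(1-u)+1$, which is positive exactly when $u<\tfrac12$. But $u\le\tau/(1-R)\le\tfrac14<\tfrac12$ by the same Singleton-bound inequality used above, so $\Phi$ is strictly increasing and the theorem follows.

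I expect the main obstacle to be the two-dimensional optimization of the denominator exponent: one must argue correctly that the inner ($\xi$) sum is governed by its boundary term while the outer ($\iota$) sum has an interior maximum, which is exactly where the Singleton-bound input enters and where the unwieldy closed form~(\ref{eq:iota*}) for $\iota^*$ originates. The monotonicity step is clean once one notices the envelope-theorem shortcut, but it again leans on $\tau\le(1-R)/4$, so both halves of the argument rest on having the Singleton bound in hand.
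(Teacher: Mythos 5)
Your argument matches the paper's derivation essentially step for step: the same exponent bookkeeping for (\ref{eq:HBnondeg}) (left side $2$, numerator $3-R$), the same use of the Singleton bound to conclude $\tau=\delta/2\le(1-R)/4$ so that the inner maximum sits at the endpoint $\xi=\tau-\iota$, and the same interior maximization over $\iota$ yielding $\iota^*$ of (\ref{eq:iota*}). The only difference is your explicit envelope-theorem check that the left-hand side of (\ref{eq:ham_bound_nondegen_asympt}) is strictly increasing in $R$ (so the inequality really does cap $R$ by the unique root), a step the paper leaves implicit; your computation of it is correct.
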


 In \cite{AL99} the Hamming and so-called first linear programming (LP1) bounds have been derived in asymptotic form for unrestricted (degenerate and non-degenerate) quantum codes: 
\begin{align}
R\le 1-\delta/2 \log_2 (3)-H(\delta/2)+o(1), \mbox{ for } 0\le \delta\le 1/3, \mbox{ (Hamming)} \label{eq:HammingBoundDegener}\\
R\le H(w)+w\log_2 (3)-1+o(1),~w={3\over 4} -{1\over 2}\delta -{1\over 2}\sqrt{3\delta(1-\delta)}, \mbox{ for } 0\le \delta\le 0.3152.
\mbox{ (LP1) } \label{eq:LP1BoundDegener}
\end{align}
The Hamming bound was obtained by applying the polynomial $f_{Ham}(x)$ defined by its coefficients
$f_{i}=K_{\delta/2}(i)^2$. LP1 bound was obtained with the help of the polynomial
$$
f_{LP1}(x)= {1\over a-x} (K_{t+1}(x)K_t(a)-K_t(x)K_{t+1}(a))^2,
$$
where $t={\delta\over 2}n$ and $a$ is a real number located between the first roots $r_{t+1}$ and $r_t$ of $K_{t+1}(x)$ and $K_t(x)$, and chosen so that $K_t(a)/K_{t+1}(a)=-1$.

As we mentioned prior to Section \ref{sec:SinB}, any bound on degenerate quantum codes is also a bound on degenerate DS codes with the corresponding $n$ and $k$. Thus bounds (\ref{eq:HammingBoundDegener})  and (\ref{eq:LP1BoundDegener}) also hold for unrestricted (degenerate or non-degenerate) DS codes. Note that these bounds can be also obtained using Theorem \ref{thm:gen_upper_bound} and polynomials
$
f(x,y)=f_{Ham}(x)\delta_{y,0} \mbox{ and } f(x,y)=f_{LP1}(x)\delta_{y,0}.
$
As we showed in (\ref{eq:f_i,j_const_in_j}) the coefficients $f_{i,j}$ of these polynomials do not depend on $j$.

The bounds (\ref{eq:ham_bound_nondegen_asympt}), (\ref{eq:SingleB_degen}), (\ref{eq:HammingBoundDegener}), and (\ref{eq:LP1BoundDegener}) for unrestricted DS codes are shown in Fig. \ref{fig:aympt bound}. One can see that at certain interval the Hamming bound for non-degenerate quantum codes beats all the bounds for degenerate DS codes.

\begin{figure}[ht]
\[
\includegraphics[scale=0.28]{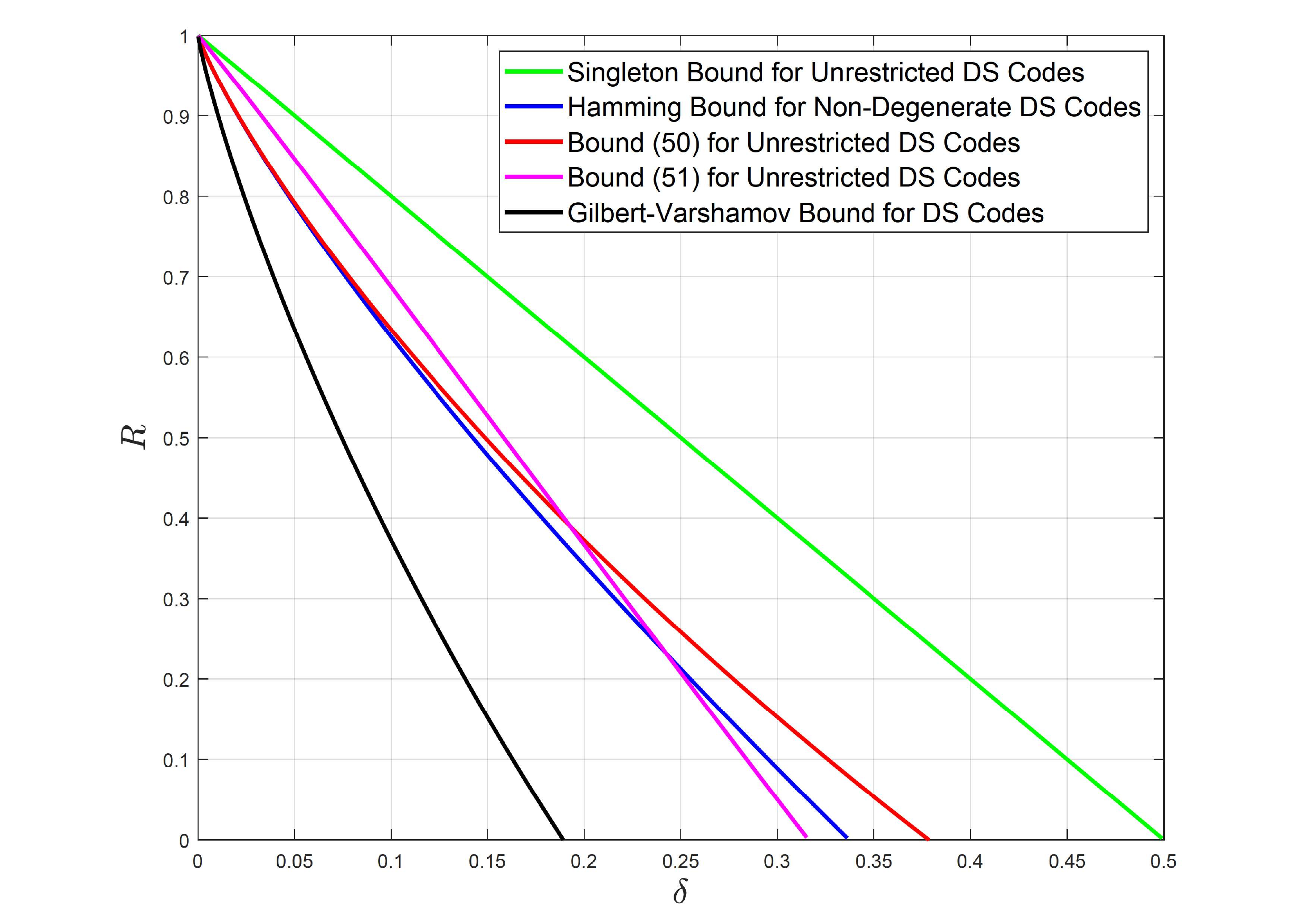}
\]
\caption{Upper Bounds on Unrestricted DS codes and an Achievability Bound on DS Codes}
\label{fig:aympt bound} 
\end{figure}
\cnote{at the end, we  may have to change the caption in the figure as equations are numbered differently}
\anote{I will redo the picture soon}

 It looks natural to try to improve bounds (\ref{eq:HammingBoundDegener}), and (\ref{eq:LP1BoundDegener}) by using polynomials $f(x,y)$ whose coefficients $f_{i,j}$ depend on both indices $i$ and $j$. At this moment we did not find such polynomials and leave this as an interesting open problem.

\section{Random DS Codes} \label{sec:random}
The enumerators $B_{i,j}^\perp$ define the decoding error probability of a DS code in a number of communication/computational scenarios, similar to \cite{A14}. Below we study the behavior of   $B_{i,j}$ and $B_{i,j}^\perp$ of random DS codes.
In particular, we are interested in how the normalized minimum distance   $d(r)/n$ depends on the ratio $r/n$   when $n\rightarrow\infty$.

We will consider the ensemble $\cE_{n,k,r}$ of  $C\ds$  codes defined by matrices of the form (\ref{eq:HImAIr}) with $r\le n-k$ and full rank matrices $A$, i.e., $\mbox{rank}(A)=r$.  We will use this ensemble to show that the minimum distance of random DS codes with a relatively small $r\le n-k$ achieves the Gilbert-Varshamov bound of stabilizer codes \cite{ABKL00b}.

  Let $\cE_{n,k,r}^\perp$ be the ensemble of $C\dsp$ codes that are dual to codes from $\cE_{n,k,r}$. Note that $|\cE_{n,k,r}^\perp|=|\cE_{n,k,r}|$.
Define the average enumerators (weight distribution)  of codes   from $\cE_{n,k,r}$ and $\cE_{n,k,r}^\perp$, respectively, by
\begin{align*}
\overline{B}_{i,j}&={1\over |\cE_{n,k,r}|} \sum_{C\in \cE_{n,k,r}} B_{i,j}(C),\mbox{ and }\overline{B}_{i,j}^\perp={1\over |\cE_{n,k}^\perp|} \sum_{C^\perp\in \cE_{n,k}^\perp} B_{i,j}(C^\perp),
\end{align*}
where $B_{i,j}(C)$ and $B_{ij}(C^\perp)$ are defined in (\ref{def:Bij}) and (\ref{def:Bijp}).
The following theorem finds these weight distributions explicitly.
\begin{theorem}\label{thm:aver_weight_distribution}
For $1\le i\le n \mbox{ and  }1\le j \le m+r$, we have
\begin{align}
\overline{B}_{0,0}=&1,\ 
\overline{B}_{i,0}=0,\notag \\ 
\overline{B}_{0,j}=&{1\over 2^m-1}\left({m+r\choose j}- {m\choose j}-{r\choose j}\right),\label{eq:B0j}\\
\overline{B}_{i,j}=&{1\over (4^n-1)(2^m-1) }{n\choose i}3^i
\left( (2^m-2){r+m\choose j}+ {m\choose j}+{r\choose j}\right),\label{eq:Bij}\\
\overline{B}_{i,0}^\perp=&{1\over (4^n-1)(2^m-1)}{n\choose i}3^i
\left(4^n-2^m+1-{4^n\over 2^m}\right),\label{eq:Bd_i0}\\
\overline{B}_{i,j}^\perp=&{4^n\over (4^n-1)2^{r+m}(2^m-1)}{n\choose i}3^i
\left({m+r\choose j}2^m-{r\choose j}2^m-{m\choose j}2^r\right),\label{eq:Bd_ij}\\
\overline{B}_{0,0}^\perp=&1,\ \overline{B}_{0,j}^\perp=0.\label{eq:Bd_0j}
\end{align}
\end{theorem}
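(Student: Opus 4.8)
The plan is to compute the primal averages $\overline{B}_{i,j}$ directly by a first–moment (message–counting) argument, and then read off the dual averages $\overline{B}_{i,j}^\perp$ from them via the MacWilliams identity of Theorem~\ref{thm:MaWil}. Write a message as $\bu=(\bu_1,\bu_2)\in\mF_2^m\times\mF_2^r$. Since every $H\ds$ of the form~(\ref{eq:HImAIr}) has full rank $m+r$, the map $\bu\mapsto \bu H\ds=(\bu_1 H,\ \bu_1+\bu_2 A^T,\ \bu_2)$ is injective, so for $(i,j)\ne(0,0)$
\[
\overline{B}_{i,j}=\sum_{\bu\ne{\bf 0}}\Pr\big(\wt(\bu_1 H)=i\ \text{and}\ \wt(\bu_1+\bu_2 A^T)+\wt(\bu_2)=j\big),
\]
with $\overline{B}_{0,0}=1$, the probability being over a random pair $(H,A)$ drawn independently from the ensemble ($H$ uniform among $\mF_2$–rank–$m$, $*$–self–orthogonal $m\times n$ matrices over $\mF_4$; $A$ uniform among binary $m\times r$ matrices of rank $r$). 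By independence the summand equals $\Pr(\wt(\bu_1 H)=i)\cdot\Pr(\wt(\bu_1+\bu_2 A^T)=j-\wt(\bu_2))$.

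Two distributional facts drive everything. First, for fixed $\bu_1\ne{\bf 0}$ the vector $\bu_1 H$ is uniform on $\mF_4^n\setminus\{{\bf 0}\}$: the admissible set of $H$ is invariant under $H\mapsto QH$ for $Q\in\mathrm{GL}_m(\mF_2)$ and, viewing each row in $\mF_4^n\cong\mF_2^{2n}$ equipped with the nondegenerate alternating form induced by $*$, under applying any isometry of that form to all rows simultaneously; since $\mathrm{GL}_m(\mF_2)$ carries $\bu_1$ to $e_1$ (so $\bu_1 H$ has the law of the first row $\bg_1$) and the symplectic group is transitive on the nonzero vectors of $\mF_2^{2n}$, the number of admissible $H$ with a prescribed nonzero first row does not depend on that row. (Equivalently, one checks that the greedy construction of an admissible $H$ produces every ordered basis of every admissible code with the same probability, and that $\bg_1$ ranges uniformly over the $4^n-1$ nonzero vectors.) Second, and more elementary, for fixed $\bu_2\ne{\bf 0}$ the vector $\bu_2 A^T$ is uniform on $\mF_2^m\setminus\{{\bf 0}\}$, by invariance of the rank–$r$ ensemble under $A\mapsto QAP$. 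Hence $\Pr(\wt(\bu_1 H)=i)=\binom{n}{i}3^i/(4^n-1)$ for $i\ge1$; for $\bu_2\ne{\bf 0}$ the vector $\bu_1+\bu_2 A^T$ is uniform on $\mF_2^m\setminus\{\bu_1\}$; and for $\bu_2={\bf 0}$ its weight is the constant $\wt(\bu_1)$.

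Substituting these into the message sum and summing over $\bu_1,\bu_2$ grouped by Hamming weight reduces the computation to Vandermonde's identity $\sum_w\binom{r}{w}\binom{m}{j-w}=\binom{m+r}{j}$, where peeling off the boundary terms $w=0$ and $w=j$ is exactly what produces the corrective summands $\binom{m}{j}$ and $\binom{r}{j}$ in~(\ref{eq:B0j}) and~(\ref{eq:Bij}); the trivial values $\overline{B}_{0,0}=1$ and $\overline{B}_{i,0}=0$ also fall out (if $\wt(\bu_1 H)=0$ then $\bu_1={\bf 0}$, and then the $\mF_4$–block cannot have weight $i\ge1$). For the dual quantities, since $\cE_{n,k,r}^\perp$ is by definition the family of duals of the members of $\cE_{n,k,r}$ and $|C\ds|=2^{m+r}$ is constant on the ensemble, averaging the MacWilliams relation of Theorem~\ref{thm:MaWil} in the form $B_{x,y}^\perp=|C\ds|^{-1}\sum_{i,j}B_{i,j}K_x(i;n,4)K_y(j;m+r,2)$ gives
\[
\overline{B}_{x,y}^\perp=\frac{1}{2^{m+r}}\sum_{i=0}^{n}\sum_{j=0}^{m+r}\overline{B}_{i,j}\,K_x(i;n,4)\,K_y(j;m+r,2).
\]
Inserting the formulas just obtained, the double sum collapses via two standard identities: the Krawtchouk orthogonality $\sum_{x=0}^n K_i(x;n,4)\binom{n}{x}3^x=4^n\delta_{i,0}$, and the transform $\sum_{y=0}^{m+r}\binom{m}{y}K_j(y;m+r,2)=2^m\binom{r}{j}$ (read off from the generating function $(1+z)^{m+r-y}(1-z)^y$), together with the same with $m$ and $r$ interchanged. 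This yields~(\ref{eq:Bd_i0})--(\ref{eq:Bd_0j}).

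I expect the substantive point to be the first distributional fact: deciding exactly what ``random admissible $H$'' means and proving that $\bu_1 H$ is \emph{exactly} uniform on $\mF_4^n\setminus\{{\bf 0}\}$; it is worth remarking that the phase condition $-I^{\otimes n}\notin\cS$ plays no role here, as it constrains only the sign lift of the stabilizer group and not the $\mF_4$–image $C\ds$ whose weight distribution is being counted. Everything after that is bookkeeping with binomial and Krawtchouk identities, the only recurring care being the degenerate ranges (e.g.\ $r=0$, or $j$ at the ends of $\{0,\dots,m+r\}$), where symbols such as $\binom{r}{j}$ are to be read with the convention $\binom{r}{j}=0$ for $j>r$.
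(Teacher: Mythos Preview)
Your proposal is correct and follows essentially the same route as the paper: a first–moment computation of $\overline{B}_{i,j}$ (the paper phrases it as counting, for each vector type $(\ba,\bb,\bc)$, the number $F(\ba,\bb,\bc)$ of codes in $\cE_{n,k,r}$ containing it, then dividing by $|\cE_{n,k,r}|$), followed by the MacWilliams identity of Theorem~\ref{thm:MaWil} together with~(\ref{eq:(n,j)K_i(x)}) and Lemma~\ref{lem:sum_binomial_times_Krawtchouk} to obtain $\overline{B}_{i,j}^\perp$. Your symmetry/invariance argument that $\bu_1 H$ (resp.\ $\bu_2 A^T$) is uniform on the nonzero vectors is exactly the content of the paper's Lemmas~\ref{lem:Number of Snkr codes} and~\ref{lemma:Fmr} (which cite an explicit count from~\cite{A14}), so the two arguments differ only in packaging, not in substance.
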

A combinatorial proof of this result can be found in Appendix~\ref{app:avg}.

Let us now consider the asymptotic case when the code length $n\rightarrow \infty$.
Again, denote
$
\iota=i/n,~\xi=j/n,\mbox{ and }\rho=r/n.
$
For a DS code with $B_{i,j}$ and $B_{i,j}^\perp$, we define
$
b_{\iota,\xi}={1\over n}\log_2  {B}_{\lfloor \iota n\rfloor,\lfloor \xi n\rfloor}, b_{\iota,\xi}^\perp={1\over n}\log_2  {B}_{\lfloor \iota n\rfloor,\lfloor \xi n\rfloor}^\perp, \mbox{ and } \delta=d/n,
$
where the minimum distance $d$ is defined by (\ref{eq:d_min_def_via_Bij1}) and (\ref{eq:d_min_def_via_Bij2}).
Denote by $d_Q$ the minimum distance of a generic quantum  code.
In \cite{ABKL00b}, it was shown that there are quantum codes, and quantum stabilizer codes in particular, whose normalized minimum distance $\delta_Q=d_Q/n$  is at least as large as the quantum Gilbert-Varshamov (GV) bound $\delta_{GV}(R)$. This bound  is defined by the equation
$$
 H(\delta_{GV}(R))+\delta_{GV}(R)\log_2(3)=1-R.
$$
In the next theorem, we prove that there exist DS codes whose weight distributions $B_{i,j}$ and $B_{i,j}^\perp$ are upper bounded by the analytical expressions presented in the theorem for all $i$ and $j$, and present a GV bound $\delta_{DS,GV}$ for such codes.
\begin{theorem}\label{thm:asympt_aver_weight_distribution}
For $r\le m$, there exist DS codes with rate $R$ and $b_{\iota,\xi}\le \overline{b}_{\iota,\xi}$ and $b_{\iota,\xi}^\perp \le \overline{b}_{\iota,\xi}^\perp$ and
\begin{align}
\overline{b}_{0,\xi}&=(1-R+\rho)H\left({\xi\over 1-R+\rho}\right)-1+R\label{eq:b_0j}+ o(1),\\
\overline{b}_{\iota,\xi}&=H(\iota)+\iota\log_2(3)+(1-R+\rho)H\left({\xi\over 1-R+\rho}\right)-2+ o(1),\label{eq:b_ij}\\
\overline{b}_{\iota,\xi}^\perp&=H(\iota)+\iota\log_2(3)+(1-R+\rho)H\left({\xi\over 1-R+\rho}\right)-(1-R+\rho)+ o(1),\label{eq:b^perp_ij}\\
\overline{b}_{\iota,0}^\perp&=H(\iota)+\iota\log_2(3)-1+R+ o(1),\label{eq:b^perp_i0}
\end{align}
and the normolized minimum distance $\delta_{\mathrm{DS}}\ge \delta_{\mathrm{DS,GV}}(R,\rho)$, where
\begin{equation}\label{eq:d_DS_GV}
\delta_{DS,GV}(R,\rho)=\min \left\{ d_{GV}(R), \min_{\iota} \iota+H^{-1}\left(1-H\left({\iota+\iota\log_2(3)\over 1-R+\rho}\right)\right)\right\}.
\end{equation}

\end{theorem}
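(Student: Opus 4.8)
The plan is to derive Theorem~\ref{thm:asympt_aver_weight_distribution} as the asymptotic shadow of the exact averages in Theorem~\ref{thm:aver_weight_distribution}, combined with a first-moment (union-bound) argument over the ensemble $\cE_{n,k,r}$. First I would take the explicit formulas \eqref{eq:B0j}--\eqref{eq:Bd_0j}, substitute $i=\lfloor\iota n\rfloor$, $j=\lfloor\xi n\rfloor$, $r=\lfloor\rho n\rfloor$, $k=Rn$, $m=n-k=(1-R)n$, take $\frac1n\log_2$, and apply Stirling's approximation $\frac1n\log_2\binom{n}{i}=H(\iota)+o(1)$ together with \eqref{eq:exponent_largest} to identify the dominant exponential term. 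For instance in \eqref{eq:Bd_ij} the bracket $\binom{m+r}{j}2^m-\binom{r}{j}2^m-\binom{m}{j}2^r$ is dominated by $\binom{m+r}{j}2^m$ once $\xi<1-R+\rho$ and $r\le m$, giving exponent $(1-R+\rho)H(\xi/(1-R+\rho))+(1-R)$; dividing by the $2^{r+m}(2^m-1)/4^n\sim 2^{2n-\rho n-(1-R)n}\cdot 2^{-(1-R)n}$ prefactor (and the negligible $4^n/(4^n-1)$) yields $\overline b^\perp_{\iota,\xi}=H(\iota)+\iota\log_2 3+(1-R+\rho)H(\xi/(1-R+\rho))-(1-R+\rho)+o(1)$, which is \eqref{eq:b^perp_ij}. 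The other four identities \eqref{eq:b_0j}, \eqref{eq:b_ij}, \eqref{eq:b^perp_i0} follow by the same bookkeeping; \eqref{eq:b^perp_i0} comes from \eqref{eq:Bd_i0}, where the bracket $4^n-2^m+1-4^n/2^m$ has exponent $2n$ (from $4^n$) and the prefactor $\frac{1}{(4^n-1)(2^m-1)}\binom ni 3^i$ contributes $-(2n)-(1-R)n + n H(\iota)+\iota n\log_2 3$, for a net $H(\iota)+\iota\log_2 3 -(1-R)+o(1)=H(\iota)+\iota\log_2 3 -1+R+o(1)$.

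Next I would pass from these ensemble \emph{averages} to the existence of a \emph{single} code beating all of them. Since the $B_{i,j}$, $B_{i,j}^\perp$ are nonnegative integers and there are only polynomially many pairs $(i,j)$ (namely $(n+1)(m+r+1)$ of them), a standard counting argument shows that for every $\epsilon>0$ there is a code $C\ds\in\cE_{n,k,r}$ with $B_{i,j}(C\ds)\le (n+1)(m+r+1)\,\overline B_{i,j}$ and simultaneously $B^\perp_{i,j}(C\dsp)\le (n+1)(m+r+1)\,\overline B^\perp_{i,j}$ for all $i,j$: indeed the fraction of codes violating any one such inequality is at most $1/\big((n+1)(m+r+1)\big)$ by Markov, and one multiplies (or rather applies a union bound over the two families, which costs only a factor $2$ absorbed into the polynomial). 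Taking $\frac1n\log_2$ kills the polynomial factor, so this single code satisfies $b_{\iota,\xi}\le\overline b_{\iota,\xi}+o(1)$ and $b^\perp_{\iota,\xi}\le\overline b^\perp_{\iota,\xi}+o(1)$, which is the first assertion of the theorem.

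For the minimum-distance bound I would use the characterization \eqref{eq:d_min_def_via_Bij1}--\eqref{eq:d_min_def_via_Bij2}: $d=\min_{0\le i\le m+r} d(i)+i$, where $d(0)$ is the smallest $x$ with $B^\perp_{x,0}>\sum_{j\ge1}B_{x,j}$ and $d(i)$ ($i\ge1$) the smallest $x$ with $B^\perp_{x,i}>0$. For the chosen code, a weight-$(x,0)$ vector in $C\dsp\setminus\{(\bg,\mathbf 0):\bg\in C\}$ exists only if $\overline b^\perp_{\iota,0}>\overline b_{\iota,0}$; comparing \eqref{eq:b^perp_i0} with the $\xi\to0$ limit of \eqref{eq:b_ij} shows this forces $H(\iota)+\iota\log_2 3>1-R$, i.e.\ $\iota>\delta_{GV}(R)$, giving the first term $d_{GV}(R)$ in \eqref{eq:d_DS_GV} (here we really mean $\delta_{GV}$). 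For $i\ge1$, $B^\perp_{x,i}>0$ requires $\overline b^\perp_{\iota,\xi}\ge0$ with $\xi=i/n$, i.e.\ $H(\iota)+\iota\log_2 3+(1-R+\rho)H(\xi/(1-R+\rho))\ge(1-R+\rho)$; solving for the threshold $\xi$ as a function of $\iota$ gives $\xi\ge H^{-1}\!\big(1-H(\frac{\iota+\iota\log_2 3}{1-R+\rho})\big)\cdot(1-R+\rho)$ — wait, more precisely the total weight $\iota+\xi$ at the threshold is $\iota+H^{-1}\big(1-H(\frac{\iota+\iota\log_2 3}{1-R+\rho})\big)$ after the appropriate normalization, and minimizing over $\iota$ yields the second term of \eqref{eq:d_DS_GV}. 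The main obstacle I anticipate is \emph{not} any single estimate but getting the bookkeeping of the normalizations uniformly right — in particular handling the regime where $\xi$ is close to $1-R+\rho$ (where $\binom{m+r}{j}$ stops dominating), confirming that the chosen $f$-free argument via \eqref{eq:d_min_def_via_Bij2} correctly produces the "$\min$" structure, and checking that the condition $r\le m$ (equivalently $\rho\le 1-R$) is exactly what keeps $H(\xi/(1-R+\rho))$ well-defined and the dominant term correctly identified throughout; the inverse-entropy manipulation to get \eqref{eq:d_DS_GV} from \eqref{eq:b^perp_ij} also needs care since $H$ is only invertible on $[0,1/2]$, so one restricts to the relevant branch.
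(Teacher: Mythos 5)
Your proposal is essentially the paper's own proof: take the exact averages of Theorem~\ref{thm:aver_weight_distribution}, extract the exponents via Stirling with the $\binom{m+r}{j}2^m$ term dominating precisely because $\rho\le 1-R$, select a single code by Markov's inequality plus a union bound over the polynomially many pairs $(i,j)$, and then use integrality of that code's enumerators to force $B^\perp_{i,j}=0$ wherever the average exponent is negative, after which (\ref{eq:d_min_def_via_Bij1})--(\ref{eq:d_min_def_via_Bij2}) and minimization over $\iota$ give (\ref{eq:d_DS_GV}). The one step I would flag is your justification of the $\delta_{GV}(R)$ term: the comparison ``$\overline{b}^\perp_{\iota,0}>\overline{b}_{\iota,0}$'' is vacuous --- with (\ref{eq:b^perp_i0}) and the $\xi\to 0$ limit of (\ref{eq:b_ij}) it reduces to $R-1>-2$, true for every $\iota$ --- so it cannot ``force'' $H(\iota)+\iota\log_2 3>1-R$; the paper (and your own treatment of the $j\ge 1$ case) simply uses that $\overline{b}^\perp_{\iota,0}<0$ for $\iota<\delta_{GV}(R)$ makes the nonnegative integer $B^\perp_{i,0}$ of the selected code vanish, which rules out $d(0)\le i$ regardless of the harmless $(\bg,{\bf 0})$ count, since $B^\perp_{i,0}>\sum_j B_{i,j}$ is then impossible. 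On the normalization you hesitated over: setting (\ref{eq:b^perp_ij}) to zero gives $\xi(\iota)=(1-R+\rho)\,H^{-1}\bigl(1-\frac{H(\iota)+\iota\log_2 3}{1-R+\rho}\bigr)$, so your first expression (with the $(1-R+\rho)$ prefactor and $H(\iota)$ in the numerator) is the correct threshold, while the form printed in (\ref{eq:d_DS_GV}) and reproduced in the paper's appendix omits that prefactor and garbles the inner argument; your caution there reflects an imprecision in the paper rather than a gap in your argument.
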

A proof can be found in Appendix~\ref{app:dGV}.

It is instructive to compare the bounds $\delta_{DS,GV}(R,\rho)$ and $\delta_{GV}(R)$. In the left part of Fig.~\ref{fig:gv_bound}, we plot these bounds for the case $\rho=0$.
One can see that $\delta_{DS,GV}(R,0)<\delta_{GV}(R)$, especially for low rate quantum codes. This  means that DS codes with $\rho=0$ have inferior performance compared to stabilizer codes (in which only qubits are valnurable  to errors). However, we can
improve DS codes by taking nonzero $\rho$.
It is not difficult to see that $\delta_{DS,GV}(R,\rho)$ grows with $\rho$. So, for each $R$ we can choose $\rho^*(R)$ so that $\delta_{DS,GV}(R,\rho^*(R))=\delta_{GV}(R)$. It happened that  $\rho^*(R)<1-R$ for any $R$ (that is the corresponding $r^*(R)=\rho^*(R)n<n-k$, what we assumed for ensemble $\cE_{n,k,r}$).
In the right part of Fig.~\ref{fig:gv_bound} we plot the normalized length of syndrome $\mu=m/n=1-R$ for stabilizer codes and $\mu+\rho^*(R)=1-R+\rho^*(R)$ for DS codes. One can observe that $\rho^*(R)$ is not very large even for low rate quantum codes. This means that relatively small number of additional generator measurements are needed for achieving the quantum GV bound by DS codes.
\begin{figure}[h]
\[
\includegraphics[width=15cm]{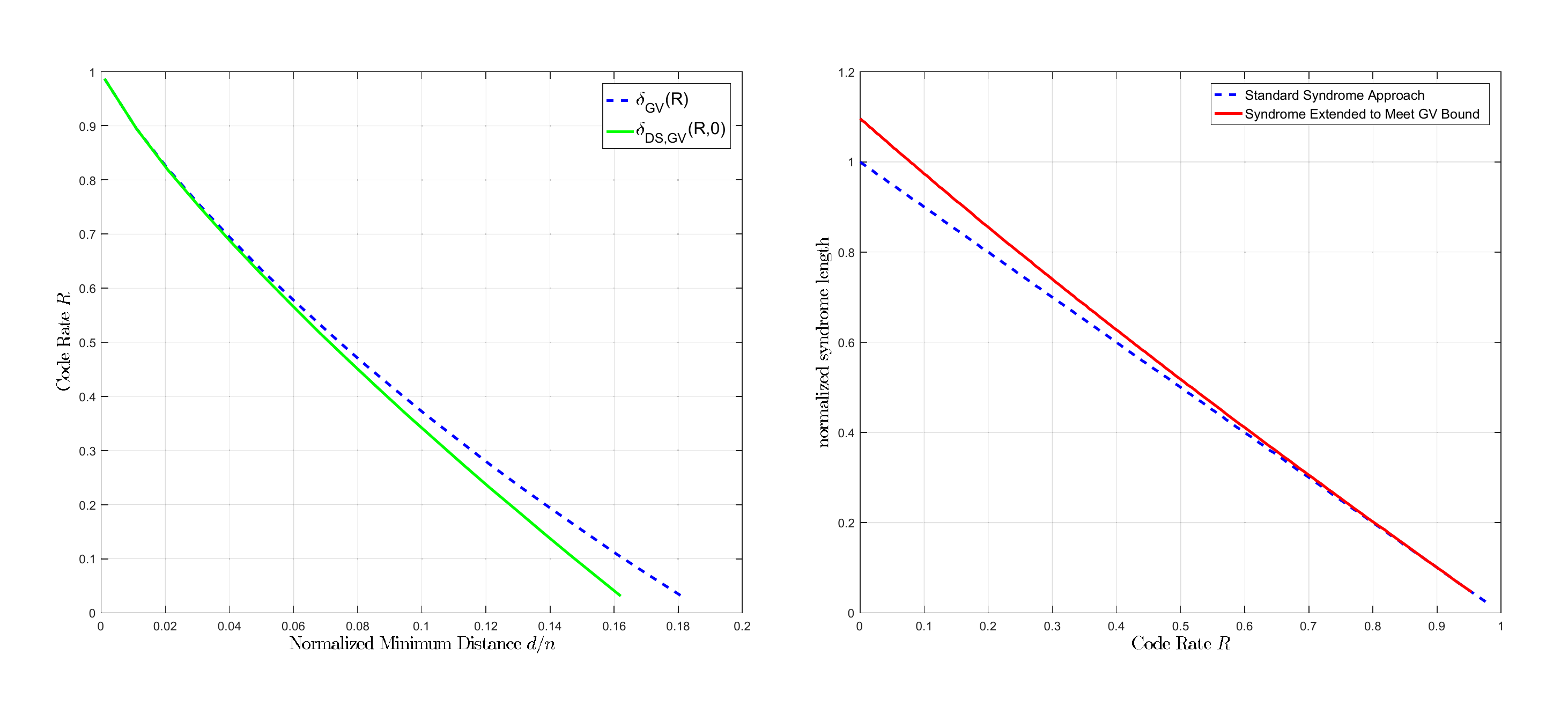}
\]
\caption{Left Figure: Gilbert-Varshamov bounds for stabilizer and DS codes with $\rho=0$.
Right Figure: normalized number of syndrome bits  $(n-k)/n$ for standard stabilizer codes and extended syndrome bits $(n-k+r)/n$ for DS codes, where $r$ is chosen such that  $\delta_{GV,DS}(R,r/n)=\delta_{GV}(R)$.}  \label{fig:gv_bound}
\end{figure}

%
%

\section{CSS-type Quantum DS Codes}\label{sec:CSS}


In this section we discuss CSS-type DS codes with $r\geq 0$. Suppose that  $
H_{\mathrm{CSS}}=\left(\begin{array}{c|c}{H}&  0 \\
0 &{H} \end{array}\right)
$ defines  an $[[n,k,d]]$ CSS code, 
where $H$ is a binary $(\frac{n-k}{2}) \times n$ matrix and $HH^T={\bf 0}$.  
Let
\begin{align}
H\ds=\left(\begin{array}{cc|cc}{H'}& {I}_{(m+r)/2}  & 0 & 0\\0 & 0 &{H'} &{I}_{(m+r)/2} \end{array}\right), \label{eq:modified check matrix}
\end{align}
where
$
H'^T=\left(\begin{array}{cccc}
H^T&
\f_1^T&
\cdots&
\f_{r/2}^T
\end{array}\right),
$
and vectors $\f_j$ are obtained as linear combinations of rows of $H$.
The matrix $\tilde{H}=[H'\ {I_{(m+r)/2}}]$ defines a classical $[n',k',d']$ code.
The minimum distance of the  corresponding DS code is $d'\leq d$ and therefore we obtain an $[[n,k,d': r ]]$ quantum DS code.
 Below we discuss how to extend $H$ to $H'$ so that the minimum distance of the DS code would not decrease and remain equal to $d$.
 

For a vector $\by\in \mathbb{F}_2^{n+(m+r)/2}$ we define the extended syndrome as $\bs=(s_1,\ldots,s_{(m+r)/2})=H'\by$. 
One can see that these syndromes belong to the column space of $H'$.
Hence if any nonzero vector $\bw$ from the column space of $H'$ has weight $\wt(\bw)\ge d$, then for any two extended syndromes, say $\bs$ and $\bs'$, we have $\dist(\bs,\bs')\ge d$ and hence the DS code can correct any  $\lfloor \frac{d-1}{2}\rfloor$ syndrome bit errors. If the CSS code defined by $H_{\mathrm{CSS}}$  also has minimum distance $d$ or larger  then the DS code can correct any combination of qubit and syndrome errors whose total number does not exceed
 $\lfloor \frac{d-1}{2}\rfloor$. This leads us to the following result. 

\begin{theorem} \label{thm:cyclic code}
If there exists an $[n,k,d]$ classical dual-containing  cyclic code $C$ with $2k>n$,
then there exists an $[[n,2k-n,d: r]]$ quantum DS code with $r\leq 2k$. 
\end{theorem}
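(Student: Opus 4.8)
The plan is to follow the recipe set up just before the theorem: extend the parity-check matrix $H$ of the CSS code to a larger matrix $H'$ whose extra rows are products of existing stabilizers and whose \emph{column space} has minimum weight at least $d$; the discussion preceding the theorem then immediately delivers an $[[n,2k-n,d:r]]$ DS code, and only the count of $r$ remains. First I would fix notation: let $H$ be an $(n-k)\times n$ parity-check matrix of $C$, so its rows span $C^\perp$. Since $C$ is dual-containing, $C^\perp\subseteq C=(C^\perp)^\perp$, hence $C^\perp$ is self-orthogonal; in particular $HH^T=\mathbf 0$, and every nonzero codeword of $C^\perp$ also lies in $C$ and therefore has weight at least $d(C)=d$. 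Plugging $H$ into the CSS construction gives an $[[n,2k-n]]$ quantum code whose $X$- and $Z$-type logical operators are the vectors of $C\setminus C^\perp$, so its distance is at least $d$.

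The choice of $H'$ is forced by the cyclic structure. Let $g^\perp(x)$ be the generator polynomial of the cyclic code $C^\perp$, and take $H'$ to be the $n\times n$ circulant matrix whose rows are the $n$ cyclic shifts of the coefficient vector of $g^\perp$. Every row of $H'$ is a codeword of $C^\perp$ (hence a product of the original stabilizers), and the $n$ shifts of $g^\perp$ generate the ideal $\langle g^\perp\rangle=C^\perp$, so $\operatorname{rank} H'=\dim C^\perp=n-k$ and $H'$ is a (redundant) parity-check matrix of $C$. Using $H'$ in the CSS-DS construction~(\ref{eq:modified check matrix}) forces $(m+r)/2=n$, i.e.\ $m=2(n-k)$ ordinary stabilizers and $r=2k$ extra ones, so $r\le 2k$ as claimed.

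It then remains to check that every nonzero vector in the column space of $H'$ has Hamming weight at least $d$. Because $H'$ is circulant, so is $(H')^T$, and the column space of $H'$ is the row space of $(H')^T$, namely the cyclic code generated by the reciprocal polynomial $x^{n-1}g^\perp(x^{-1})$. The coordinate-reversal map $(c_0,\dots,c_{n-1})\mapsto(c_{n-1},\dots,c_0)$ is an $\mathbb F_2$-linear, weight-preserving bijection that carries cyclic codes to cyclic codes and sends $C^\perp$ exactly onto this generated code; hence the column space of $H'$ is a coordinate-reversal of $C^\perp$ and has minimum weight $d(C^\perp)\ge d$. By the pre-theorem discussion (the extended syndromes all lie in this column space and so are pairwise at distance $\ge d$) together with the fact that the underlying CSS code has distance $\ge d$, the resulting DS code corrects every combination of $t\tD$ data and $t\tS$ syndrome errors with $t\tD+t\tS<d/2$; equivalently (Section~\ref{sec:swe}) its minimum distance is at least $d$, and since a DS code's distance cannot exceed that of its underlying stabilizer code, we obtain an $[[n,2k-n,d:r]]$ DS code with $r=2k$.

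The one genuinely non-routine step is the column-space estimate. For an arbitrary dual-containing code it is unclear how to enlarge $H$ by a controlled number of rows so that the column space acquires large minimum weight; the cyclic hypothesis is exactly what makes this automatic, since taking $H'$ to be the full circulant pins the column space down to (a coordinate-reversal of) $C^\perp$, which inherits the distance of $C$. The remaining ingredients — commutativity of the enlarged stabilizer set, which is just $H'(H')^T=\mathbf 0$ from self-orthogonality of $C^\perp$; the rank and dimension count; and the reduction to the column-space criterion already carried out before the theorem — are all routine bookkeeping.
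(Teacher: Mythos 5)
Your proposal is correct and takes essentially the same route as the paper: the paper likewise forms $H'$ from the $n$ cyclic shifts of a codeword generating $C^\perp$ (the first $n-k$ shifts giving $H$, the remaining $k$ giving the redundant rows, so $r=2k$), notes that the row and column spaces of this circulant coincide with $C^\perp$ and hence have minimum weight at least $d$ since $C^\perp\subseteq C$, and then invokes the discussion preceding the theorem. Your extra care with the column space via the reciprocal polynomial and coordinate reversal is harmless (with the paper's circulant convention the matrix is in fact symmetric, so column space equals row space outright), and the rest matches the paper's argument.
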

\proof
	Suppose that $H$ is an $n\times (n-k)$ parity-check matrix of $C$. Since $C$ contains its dual code $C^\perp$, we have that $HH^T=0$. Hence $H$ can be used for construction of a CSS code according to $H_{\mathrm{CSS}}$. Let $\bc=(c_0\ c_1\ \cdots\ c_{n-1})\in C^\perp$. Since $C^\perp$ is also cyclic, any cyclic shift of $\bc$ is also a codeword of $C^\perp$. Hence  $n-k$ cyclic shifts of $\bc$ can be used to construct $H$ and additional $k$ cyclic shifts can be used to construct $H'$.   Thus, we can construct $H'$ as follows
\begin{align*}
H'=\begin{pmatrix}
c_0& c_1 & \cdots & c_{n-1}\\
c_1& c_2 & \cdots & c_{0}\\
\vdots&\vdots &\ddots &\vdots \\
c_{n-1}& c_0 & \cdots & c_{n-2}\\
\end{pmatrix}.
\end{align*}
Clearly, the column space and the row spaces of $H'$ are the same and they generate code $C^\perp$. Since $C^\perp \subseteq C$ we have $d(C^\perp)\ge d$.
Now from the  arguments preceding this theorem, it follows that $H'$ defines an $[[n,2k-n,d: 2k]]$ DS code.
\QED

To demonstrate an application of the above theorem, we consider quadratic-residue (QR) codes.
QR codes are cyclic codes and they are dual-containing for certain parameters~\cite{MS77,LL11}, and therefore can be used for construction of CSS quantum codes.
In particular, they lead to $[[p,1,d]]$   CSS codes with $d^2 - d + 1 \geq p$ for $p=8j-1$.
\begin{theorem} \label{thm:QR_double_even}
For $p=8j-1$,
there exist  $[[p,1,d]]$  CSS codes with $d^2 - d + 1 \geq p$.
Moreover, there are $[[p,1,d:r]]$ quantum DS codes with $r\leq p+1$.
\end{theorem}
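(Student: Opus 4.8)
The statement has two parts. The first part — that for $p=8j-1$ there is a $[[p,1,d]]$ CSS code with $d^2-d+1\ge p$ — is the classical fact that the quadratic-residue code of prime length $p\equiv -1 \pmod 8$ is self-orthogonal (equivalently, its dual, the ``augmented'' QR code, contains it), together with the square-root bound $d^2-d+1\ge p$ on the minimum distance of a QR code. So first I would recall: for $p\equiv -1\pmod 8$, $2$ is a quadratic residue mod $p$, the idempotent generating the QR code is fixed under the multiplier $\mu_2$, and this forces the QR code $C$ to satisfy $C^\perp\subseteq C$ (dual-containing); cite \cite{MS77,LL11}. Applying the CSS construction $H_{\mathrm{CSS}}$ to a parity-check matrix $H$ of $C$ gives a $[[p, 2k-p, d]]$ code; since the QR code of length $p$ has dimension $k=(p+1)/2$, we get $2k-p=1$, i.e.\ a $[[p,1,d]]$ code, and $d\ge$ (square-root bound) gives $d^2-d+1\ge p$.

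For the second part, the plan is simply to invoke Theorem~\ref{thm:cyclic code}. The QR code $C$ of length $p$ is cyclic, dual-containing, with $k=(p+1)/2$, so $2k=p+1>p$, and $2k-n = p+1-p = 1$. Theorem~\ref{thm:cyclic code} then yields an $[[n,2k-n,d:r]]$ quantum DS code with $r\le 2k$, that is an $[[p,1,d:r]]$ DS code with $r\le p+1$. This matches exactly the claimed bound $r\le p+1$.

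The one genuine subtlety — and the step I would flag as the main obstacle — is lining up the parameters so that the minimum distance of the DS code is \emph{still} $d$ rather than something smaller. Theorem~\ref{thm:cyclic code} already handles this: its proof builds $H'$ from $n-k$ cyclic shifts of a fixed codeword $\bc\in C^\perp$ for the ``$H$'' block and $k$ further cyclic shifts for the extra rows, so that the column space of $H'$ equals $C^\perp$, whose minimum distance is $\ge d(C)=d$ because $C^\perp\subseteq C$; then the argument preceding Theorem~\ref{thm:cyclic code} (any two extended syndromes differ by a weight-$\ge d$ vector, and the CSS code itself has distance $\ge d$) gives combined qubit/syndrome correction up to $\lfloor (d-1)/2\rfloor$. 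So once I have verified that the augmented QR code is dual-containing and cyclic with the stated dimension, the DS statement follows with no additional work. I would therefore keep the proof short: establish dual-containment and the square-root bound for the QR code, then apply Theorem~\ref{thm:cyclic code} verbatim.
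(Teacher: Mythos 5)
Your proposal is correct and follows essentially the same route as the paper: the paper obtains this theorem precisely by citing the standard facts that binary QR codes of prime length $p=8j-1$ are cyclic, dual-containing of dimension $(p+1)/2$, and satisfy the square-root bound $d^2-d+1\ge p$, and then applying Theorem~\ref{thm:cyclic code} with $2k=p+1$ to get $r\le p+1$. Your flagged subtlety (preserving distance $d$ in the DS code) is indeed already absorbed into the proof of Theorem~\ref{thm:cyclic code}, just as you say.
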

\begin{example}
Consider the QR code with $p=23$.
Suppose $H'$ is cyclicly generated by $11+r$ cyclic shifts of a code vector of the dual code.
Table \ref{tb:QR23} provides the distances of the corresponding DS codes with different values of $r$.
Table \ref{tb:QR23} shows that there exists $[[23,1,7:18]]$ quantum DS code and therefore we need only $18$ additional redundant stabilizers, instead of $24$ by Theorem\,\ref{thm:cyclic code}.
\begin{table}[h]
\[
\begin{tabular}{|c|c|c|c|c|c|c|c|c|c|c|c|}
  \hline
  $r$ & 0 & 1 &2 & 3 & 4 & 5 & 6 & 7 & 8 & 9 \\
\hline
  $d$ & 3 & 4 &4 & 4 & 5 & 5 & 5 & 6 & 6 & 7 \\
  \hline
\end{tabular}
\]  \caption{The distance of $\tilde{H}$ corresponding to  different number of rows. } \label{tb:QR23}
\vspace{-0.7cm}
\end{table}
\end{example}

The family of quantum QR codes in Theorem \ref{thm:QR_double_even}, which includes the Steane code and the quantum Golay code, are
important in the theory of fault-tolerant quantum computation. In particular they are used for finding error thresholds \cite{PR12}.
Here we have shown that these codes also induce nontrivial quantum DS codes.

\section*{ACKNOWLEDGEMENT}
CYL was was financially supported from the Young Scholar Fellowship Program by Ministry of Science and Technology (MOST) in Taiwan, under
Grant MOST107-2636-E-009-005.

\appendix 

\subsection{Properties of Krawchuk polynomial}\label{app:Kraw}
The following equalities holds (see \cite[Chapter 5]{MS77})
\begin{align}
&K_0(x;n,q)=1, \label{eq:K0=1}\\
&K_j(0;n,q)=(q-1)^j {n \choose j}, \label{eq:K_j(0)}\\
&\sum_{i=0}^n K_r(i;n,c)K_i(s;n,c)=c^n\delta_{rs},c=2\mbox{ or }4, \label{eq:orth_Kraw}\\
&\sum_{j=0}^n {n\choose j}(q-1)^j K_i(j;n,q)=q^n \delta_{i,0},  \label{eq:(n,j)K_i(x)}\\
&\sum_{i=0}^n {n-i \choose n-j}K_i(x;n,q)=q^j{ n-x \choose j}. \label{eq:(n-i,n-j)K_i(x)}
\end{align}
In \cite[eq. A.19]{MRRW77} and \cite[Lemma 2]{AL99}, it is shown that
\begin{align}
K_a(j;m,2)K_b(j;m,2)&=\sum_{u=0}^m \beta(u,a,b)K_u(j;m,2),\label{eq:KaKb_sumKu2}\\
K_g(i; n,4)K_h(i; n,4)&=\sum_{q=0}^n \sum_{w=0}^{n-q} \alpha(q,g,h,w) K_q(i; n,4),\label{eq:KgKh_sumKq4}
\end{align}
where
$\beta(u,a,b)$ and $\alpha(q,g,h,w)$ are defined in (\ref{eq:beta}) and (\ref{eq:alpha}) respectively.

\begin{thebibliography}{10}
 	\providecommand{\url}[1]{#1}
 	\csname url@samestyle\endcsname
 	\providecommand{\newblock}{\relax}
 	\providecommand{\bibinfo}[2]{#2}
 	\providecommand{\BIBentrySTDinterwordspacing}{\spaceskip=0pt\relax}
 	\providecommand{\BIBentryALTinterwordstretchfactor}{4}
 	\providecommand{\BIBentryALTinterwordspacing}{\spaceskip=\fontdimen2\font plus
 		\BIBentryALTinterwordstretchfactor\fontdimen3\font minus
 		\fontdimen4\font\relax}
 	\providecommand{\BIBforeignlanguage}[2]{{%
 			\expandafter\ifx\csname l@#1\endcsname\relax
 			\typeout{** WARNING: IEEEtran.bst: No hyphenation pattern has been}%
 			\typeout{** loaded for the language `#1'. Using the pattern for}%
 			\typeout{** the default language instead.}%
 			\else
 			\language=\csname l@#1\endcsname
 			\fi
 			#2}}
 	\providecommand{\BIBdecl}{\relax}
 	\BIBdecl
 	

 	
 	
 	\bibitem{LB13}
 	D.~A. Lidar and T.~A. Brun, Eds., \emph{Quantum Error Correction}.\hskip 1em
 	plus 0.5em minus 0.4em\relax Cambridge University Press, October 2013.
 	
 	
 	\bibitem{Shor96}
 	P.~W. Shor, ``Fault-tolerant quantum computation,'' in \emph{Proceedings of the
 		37th Annual Symposium on the Theory of Computer Science}.\hskip 1em plus
 	0.5em minus 0.4em\relax Los Alamitos: IEEE Press, 1996, pp. 56--65.
 	

 	\bibitem{Bom15}
 	\BIBentryALTinterwordspacing
 	H.~Bomb\'{\i}n, ``Single-shot fault-tolerant quantum error correction,''
 	\emph{Phys. Rev. X}, vol.~5, p.\,031043, Sep 2015. 
 	
 	\bibitem{BDMT17}
 	N.~P. Breuckmann, K.~Duivenvoorden, D.~Michels, and B.~M. Terhal, ``Local
 	decoders for the {2D} and {4D} toric code,'' \emph{Quant. Inf. Comput.},
 	vol.\,17, no.\,3-4, p.\,181, 2017.
 	
 	\bibitem{BNB16}
 	B. Brown, N. Nickerson, and D. Browne, ``Fault-tolerant error
 	correction with the gauge color code,'' \emph{Nat. Commun.}, vol.\,7, p.\,12302, Sep 2016.
 	
 	\bibitem{Cam19}
 	E.~T. Campbell, ``A theory of single-shot error correction for adversarial
 	noise,'' \emph{Quant. Sci. Tech.}, vol.~4.

  	\bibitem{ALB14}
 A.~Ashikhmin, C.-Y. Lai, and T.~A. Brun, ``Robust quantum error syndrome
 extraction by classical coding,'' in \emph{Proc. IEEE Int. Symp. Inf.
 	Theory}, 2014, pp. 546--550.

 \bibitem{ALB16}
 A.~Ashikhmin, C.-Y. Lai, and T.~A. Brun, , ``Correction of data and syndrome errors by stabilizer codes,'' in
 \emph{Proc. IEEE Int. Symp. Inf. Theory}, 2016, pp. 2274 -- 2278.
 	
 		\bibitem{Fuji14}
 	Y.~Fujiwara, ``Ability of stabilizer quantum error correction to protect itself
 	from its own imperfection,'' \emph{Phys. Rev. A}, vol.~90, p. 062304, Dec
 	2014.
 	

 	

 	
 	\bibitem{ZAWP19}
 	W.~Zeng, A.~Ashikhmin, M.~Woolls, and L.\,Pryadko, ``Quantum convolutional
 	data-syndrome codes,'' to appear in \emph{Proc. IEEE Int. Workshop on Signal Processing Advances in Wireless Communications\,(SPAWC)},  Cannes,\,France,
 		July, 2019.
 	
 	
 		
 	
 		\bibitem{AL99}
 	A.~Ashikhmin and S.~Litsyn, ``Upper bounds on the size of quantum codes,''
 	\emph{IEEE Trans. Inf. Theory}, vol.\,45, no.\,4, pp.\,1206--1215, 1999.
 	
 	 	\bibitem{CS96}
 	A.  Calderbank and P.  Shor, ``Good quantum error-correcting codes exist,''
 	\emph{Phys. Rev. A}, vol.\,54, no.\,2, pp.\,1098--1105,\,1996.
 	
	\bibitem{Ste96a}
A.~M. Steane, ``Error correcting codes in quantum theory,'' \emph{Phys. Rev.
	Lett.}, vol.~77, no.~5, pp. 793--797, 1996.
 	
 	\bibitem{PR12}
A.~Paetznick and B.~W. Reichardt, ``Fault-tolerant ancilla preparation and
noise threshold lower bounds for the 23-qubit golay code,'' \emph{Quant. Inf.
	Comput.}, vol.~12, pp. 1034--1080, 2012.

\bibitem{AGP06}
P.~Aliferis, D.~Gottesman, and J.~Preskill, ``Quantum accuracy threshold for
concatenated distance-3 codes,'' \emph{Quant. Inf. Comput.}, vol.~6, no.~2,
pp. 97--165, Mar. 2006.

 	

 	

 	
 	\bibitem{CRSS98}
 	A.~R. Calderbank, E.~M. Rains, P.~W. Shor, and N.~J.~A. Sloane, ``Quantum error
 	correction via codes over ${GF}(4)$,'' \emph{IEEE Trans. Inf. Theory},
 	vol.~44, no.~4, pp. 1369--1387, 1998.
 	
 	\bibitem{BDSW96}
 	C.~H. Bennett, D.~P. DiVincenzo, J.~A. Smolin, and W.~K. Wootters, ``Mixed
 	state entanglement and quantum error correction,'' \emph{Phys. Rev. A},
 	vol.~54, no.~5, pp. 3824--3851, 1996.
 	
 	\bibitem{KL97}
 	E.~Knill and R.~Laflamme, ``A theory of quantum error-correcting codes,''
 	\emph{Phys. Rev. A}, vol.~55, no.~2, pp. 900--911, 1997.
 	
 	\bibitem{MS77}
 	F.~J. MacWilliams and N.~J.~A. Sloane, \emph{The Theory of Error-Correcting Codes}. Amsterdam, The Netherlands:
 	North-Holland, 1977.
 	
 	\bibitem{Lev95}
 	V.~Levenshtein, ``Krawtchouk polynomials and universal bounds for codes and
 	designs in {H}amming spaces,'' \emph{IEEE Trans. Inf. Theory}, vol.~41,
 	no.~5, pp. 1303--1321, Sep 1995.
 	
 	\bibitem{LBW13}
 	C.-Y. Lai, T.~A. Brun, and M.~M. Wilde, ``Duality in entanglement-assisted
 	quantum error correction,'' \emph{IEEE Trans. Inf. Theory}, vol.~59, no.~6,
 	pp.~4020--4024, 2013.

\bibitem{Sin64}
R.\,C. Singleton, ``Maximum distance q-nary codes,'' \emph{IEEE Trans. Inf. Theory}, vol.~10, no.~2,
pp.~116--118, 1964.


\bibitem{MDS1} B.~Chen, S.~Ling, G.~Zhang, ``Application of constacyclic codes to quantum MDS codes,''
  \emph{IEEE Trans. Inform. Theory}, vol.~61, no.~3, pp.~1474--1484, 2015.

\bibitem{MDS2} X.~He, L.~Xu, H.~Chen, ``New $q$-ary quantum MDS codes with distance bigger than $q/2$,'' \emph{Quantum Inf. Process}, vol.~15, pp.~2745--2758, 2016.

\bibitem{MDS3} W.~Fang, F.~Fu, ``Some new constructions of quantum MDS codes,'' https://arxiv.org/pdf/1804.08213, 2018.

\bibitem{MDS4}
X.~Shi, Q.~Yue, Y.~Wu, ``New quantum MDS codes with large minimum distance and short length from generalized Reed-Solomon codes,'' \emph{Discrete Mathematics,} vol.~342, no.~7, pp.~1989--2001, 2019.

 	
 	\bibitem{LA18}
 	C.-Y. Lai and A.~Ashikhmin, ``Linear programming bounds for
 	entanglement-assisted quantum error-correcting codes by split weight
 	enumerators,'' \emph{IEEE Trans. Inf. Theory}, vol.~64, no.~1, pp. 622--639,
 	Jan 2018.
 	
 	\bibitem{A14}
 	A.~{Ashikhmin}, ``Fidelity lower bounds for stabilizer and css quantum codes,''
 	\emph{IEEE Trans. Inf. Theory}, vol.~60, no.~6, pp. 3104--3116, June 2014.
 	
 	\bibitem{ABKL00a}
 	A. Ashikhmin, A. Barg, E.~Knill, and S.  Litsyn, ``Quantum error
 	detection {.I.} statement of the problem,'' \emph{IEEE Trans. Inf. Theory},
 	vol.~46, no.~3, pp. 778--788, May 2000.

 \bibitem{ABKL00b}
 A.~Ashikhmin, A. Barg, E. Knill, and S. Litsyn, ``Quantum
 Error Detection II: Lower and Upper Bounds,''
 {\em IEEE Trans.~on Information Theory,} {vol.~46}, pp.~789--801, 2000.
 	
 	\bibitem{LL11}
 	C.-Y. Lai and C.-C. Lu, ``A construction of quantum stabilizer codes based on
 	syndrome assignment by classical parity-check matrices,'' \emph{IEEE Trans.
 		Inf. Theory}, vol.~57, no.~3, pp. 7163 -- 7179, 2011.
 	
 	\bibitem{LHL14}
 	C.-Y. Lai, M.-H. Hsieh, and H.~f.~Lu, ``On the {M}acwilliams identity for
 	classical and quantum convolutional codes,'' \emph{IEEE Trans. Commun.},
 	vol.~64, no.~8, pp. 3148--3159, Aug 2016.
 	
 	\bibitem{MRRW77}
 	R.~McEliece, E.~R. Rodemich, H.~Rumsey, and L.~Welch, ``New upper bounds on the
 	rate of a code via the {D}elsarte-{M}acwilliams inequalities,'' \emph{IEEE
 		Trans. Inf. Theory}, vol.~23, no.~2, pp. 157--166, Mar 1977.
 	
 	\bibitem{Sch98}
 	A.~Schrijver, \emph{Theory of linear and integer programming}.\hskip 1em plus
 	0.5em minus 0.4em\relax John Wiley \& Sons, 1998.
 	

 	

 	
 \end{thebibliography}
\begin{lemma}\label{lem:sum_binomial_times_Krawtchouk}
	\begin{equation}\label{eq:sum_binomial_times_Krawtchouk}
	\sum_{j=0}^m {m\choose j} K_u(j;n,2)=2^m{n-m \choose u}.
	\end{equation}
\end{lemma}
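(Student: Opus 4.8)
The plan is to expand the Krawtchouk polynomial $K_u(j;n,2)$ using its definition $K_u(j;n,2)=\sum_{\ell=0}^u (-1)^\ell \binom{j}{\ell}\binom{n-j}{u-\ell}$, substitute into the left-hand sum, and interchange the order of summation so that the outer sum is over $\ell$ and the inner sum is over $j$. This converts the claim into a statement about $\sum_{j=0}^m \binom{m}{j}\binom{j}{\ell}\binom{n-j}{u-\ell}$, which I would attack by a standard hypergeometric/Vandermonde-type identity.

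First I would pull $\binom{m}{j}\binom{j}{\ell}=\binom{m}{\ell}\binom{m-\ell}{j-\ell}$ out of the sum, so that after reindexing $j'=j-\ell$ the inner sum becomes $\binom{m}{\ell}\sum_{j'=0}^{m-\ell}\binom{m-\ell}{j'}\binom{n-\ell-j'}{u-\ell}$. The remaining sum over $j'$ is a Vandermonde-type convolution: $\sum_{j'}\binom{m-\ell}{j'}\binom{(n-\ell)-j'}{u-\ell}=\binom{n-\ell+1}{\,m-\ell + (u-\ell) - (m-\ell) +1\,}$... — more carefully, I would use the known identity $\sum_{s}\binom{A}{s}\binom{B-s}{C}=\binom{A+B+1}{A+C+1}$ (or the cleaner form obtained via generating functions, equating coefficients in $(1+x)^{m-\ell}\cdot x^{?}(1-x)^{-(u-\ell+1)}$-type expansions), to collapse it to a single binomial coefficient in $n-\ell$, $m-\ell$, $u-\ell$. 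After this collapse, the outer sum over $\ell$ (with the alternating sign $(-1)^\ell$ and the factor $\binom{m}{\ell}$) should telescope or simplify via the binomial theorem / another Vandermonde application down to $2^m\binom{n-m}{u}$.

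Alternatively, and perhaps more cleanly, I would give a generating-function proof: use the well-known identity $\sum_{u\ge 0} K_u(j;n,2)\,z^u=(1-z)^j(1+z)^{n-j}$. Multiply by $\binom{m}{j}$ and sum over $j$ from $0$ to $m$; if one first notes the full sum $\sum_{j=0}^n\binom{n}{j}(1-z)^j(1+z)^{n-j}=2^n$, the truncated version requires more care, but restricting to $\sum_{j=0}^m\binom{m}{j}(1-z)^j(1+z)^{n-j}=(1+z)^{n-m}\sum_{j=0}^m\binom{m}{j}(1-z)^j(1+z)^{m-j}=(1+z)^{n-m}\cdot 2^m$ by the binomial theorem. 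Then extracting the coefficient of $z^u$ from $2^m(1+z)^{n-m}$ yields exactly $2^m\binom{n-m}{u}$, which is the right-hand side. This second route is shorter and avoids nested binomial manipulations, so I would present it as the main argument.

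The main obstacle is a bookkeeping one rather than a conceptual one: one must be careful that the generating function identity $\sum_u K_u(j;n,2)z^u=(1-z)^j(1+z)^{n-j}$ is being applied with the correct parameter $n$ (the top argument of the Krawtchouk polynomial is $n$, not $m$, even though the summation index $j$ only runs up to $m$), and that the split $(1+z)^{n-j}=(1+z)^{n-m}(1+z)^{m-j}$ is legitimate since $j\le m\le n$. Once that is handled, comparing coefficients of $z^u$ on both sides is immediate and finishes the proof. I would also remark that the case $u>n-m$ is automatically consistent since then $\binom{n-m}{u}=0$ and the coefficient of $z^u$ in $2^m(1+z)^{n-m}$ indeed vanishes.
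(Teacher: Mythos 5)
Your main (generating-function) argument is correct and is exactly the paper's proof: multiply the Krawtchouk generating function $(1-z)^j(1+z)^{n-j}$ by $\binom{m}{j}$, sum over $j\le m$ after splitting off $(1+z)^{n-m}$, apply the binomial theorem to get $2^m(1+z)^{n-m}$, and compare coefficients of $z^u$. Since you correctly flag the one subtlety (the top parameter stays $n$ while $j$ runs only to $m$), nothing further is needed; the alternative Vandermonde sketch can simply be dropped.
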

\proof
The generating function of the binary Krawtchouk
polynomials (see \cite[Sec.\,5.7]{MS77}) is
$
(1+x)^{n-j}(1-x)^j=\sum_{u=0}^n K_u(j;n,2)x^u.
$
Using this equauion, we obuain
$
\sum_{j=0}^m {m\choose j} (1+x)^{n-j}(1-x)^j=\sum_{u=0}^n x^u \sum_{j=0}^m {m\choose j} K_u(j;n,2).
$
At the same time
\begin{align*}
&\sum_{j=0}^m {m\choose j} (1+x)^{n-j}(1-x)^j=\sum_{j=0}^m {m\choose j} (1+x)^{n-m}(1+x)^{m-j}(1-x)^j\\
=&(1+x)^{n-m}\sum_{j=0}^m {m\choose j} (1+x)^{m-j}(1-x)^j=(1+x)^{n-m} 2^m=2^m \sum_{u=0}^{n-m} {n-m\choose u}x^u.
\end{align*}
Comparing these two expressions, we finish the proof. \hfill\QED

\subsection{Proof of Theorem~\ref{thm:MaWil}}\label{app:MaWil}
\proof
	We can  use the techniques in~\cite{LHL14} as follows. We define a Fourier transform operator with respect to the inner product (\ref{eq:inner_prod}) and find a MacWilliams identity that relates the two split weight enumerators. Then Theorem~\ref{thm:MaWil} follows directly.


\subsection{Proof of Lemma \ref{lem:Ham_bound}}\label{app:Hamming}

Using (\ref{eq:KaKb_sumKu2}) and (\ref{eq:KgKh_sumKq4}), we obtain
\begin{align*}
f_{i,j}^{(k)}=&\sum_{a=0}^t\sum_{b=0}^t K_a(j;m,2) K_b(j;m,2) \sum_{g=0}^{t-\lambda a}  \sum_{h=0}^{t-\lambda b} K_g(i;n,4)K_h(i;n,4)\\
=&
\sum_{a=0}^t\sum_{b=0}^t \sum_{u=0}^m \beta(u,a,b) K_u(j;m,2)
+ \sum_{g=0}^{t-\lambda a} \sum_{h=0}^{t-\lambda b} \sum_{q=0}^n \sum_{w=0}^{n-q}
\alpha(q,g,h,w) K_q(i; n,4).
\end{align*}
Now, using (\ref{eq:KrawExpension}), we obtain
\begin{align*}
&f^{(k)}(l,r)=\sum_{i=0}^n \sum_{j=0}^m   f_{i,j}^{(k)} K_i(l;n,4) K_j(r;m,2)\\
 =&\sum_{i=0}^n \sum_{j=0}^m \left[\sum_{a=0}^t\sum_{b=0}^t \sum_{u=0}^m \beta(u,a,b) K_u(j;m,2) K_j(r;m,2)\right.+\left.\sum_{g=0}^n \sum_{h=0}^{t-\lambda a} \sum_{q=0}^{t-\lambda b} \sum_{w=0}^{n-q} \alpha(q,g,h,w)  K_q(i; n,4) K_i(l;n,4)  \right]\\
=&
\sum_{a=0}^t\sum_{b=0}^t \sum_{u=0}^m  \beta(u,a,b) \sum_{j=0}^m K_u(j;m,2) K_j(r;m,2) +\sum_{g=0}^{t-\lambda a} \sum_{h=0}^{t-\lambda b} \sum_{q=0}^n \sum_{w=0}^{n-q} \alpha(q,g,h,w) \sum_{i=0}^n  K_q(i; n,4) K_i(l;n,4) \\
=& 4^n 2^m \sum_{a=0}^t\sum_{b=0}^t \beta(r,a,b) \sum_{g=0}^{t-\lambda a} \sum_{h=0}^{t-\lambda b} \sum_{w=0}^{l-q} \alpha(l,g,h,w),
\end{align*}
where in the last step we used the orthogonality property of Krawtchouk polynomials (\ref{eq:orth_Kraw}).
\hfill\QED

\subsection{Proof of~Lemma \ref{lem:f(x,y)=0}} \label{app:f(x,y)=0}

A binomial coefficient ${i\choose j}$ is assumed to be zero if:
 1) $i<j$, 2) $j<0$, or 3) $j$ is not an integer.
The polynomial $f^{(k)}(x,y)$ is a sum of non negative terms:
$$
{m-y\choose (a+b-y)/2}{y\choose (a-b+y)/2}{x\choose 2x+2w-g-h}{n-x\choose w}{2x+2w-g-h \choose x+w-h} 2^{g+h-2w-q}3^w.
$$
A particular term is not zero if all the five binomial coefficients are not zeros. Each of those binomial coefficients is not zero if and only if neither of the above conditions hold, e.g, the first binomial coefficient is not zero if and only if
$a+b\ge y,~m-y\ge (a+b-y)/2,\mbox{ and } (a+b-y)/2 \mbox{ is an integer}.$
In the following discussion, we drop condition 3 since it is not needed for our purpose. From the above arguments it follows that $f^{(k)}(x,y)>0$ for $x+y\ge 2t+1$ if only if there is a solution to the system of linear inequalities
$A (a, b, w, g, h, x, y)^T\le {\bf b},$
with
{\small
\begin{align*}
A^T=&\left(\begin{array}{rrrrrrrrrrrrrrrrrrr}
-1 & 1 & -1 & 1 &   &   &   &   &  -1 & 1  &   &   &   &  1 &   &   &   &   &   \\
-1 & 1 & 1  & -1&   &   &   &   &     &    & -1& 1 &   &    &   & 1 &   &   &   \\
&   &    &   & -2& 2 & -1& -1&     &    &   &   &   &    &   &   & -1& 1 &   \\
&   &    &   & 1 & -1&   & 1 &     &    &   &   & -1&  1 &   &   &   &   &   \\
&   &    &   & 1 & -1&  1&   &     &    &   &   &   &    & -1& 1 &   &   &   \\
&   &    &   & -2& 1 & -1& -1&     &    &   &   &   &    &   &   &   & 1 & -1\\
1 & 1 & -1 & -1&   &   &   &   &     &    &   &   &   &    &   &   &   &   & -1
\end{array}\right), \\
{\bf b}=&\left(\begin{array}{rrrrrrrrrrrrrrrrrrr}
\phantom{....} & 2m  & \phantom{-1}  &  \phantom{-1}  &   \phantom{-2}  &  \phantom{-1}   &  \phantom{-1}   &  \phantom{-1}   &   \phantom{-1}    &  t &   \phantom{-1}  & t & \phantom{-1}  & t  &  \phantom{-1}   &  t&  \phantom{-1}   & n  & -2t-1
\end{array}\right).
\end{align*}
}
Conducting the Fourier--Motzkin elimination~\cite{Sch98} in the order of $h, y, g, a, b, w, x$ (any other order can  also  be used, but this one requires  computations that are not too long),
we come to the incompatible condition $0\le -1/2$. This completes the proof. \hfill\QED
\subsection{Proof of Theorem \ref{thm:aver_weight_distribution}} \label{app:avg}
We would like to analyze the weight distribution of a random DS code from $\cE_{n,k,r}$ with a generator matrix of the form
$\left[\begin{array}{ccc}
H & I_{m} & 0 \\
0 & A & I_r
\end{array}\right],
$
where $m=n-k$.
Let $\cE_{n,m}$ be the set of DS codes with a generator matrix of the form   $[H \ I_{m}]$, and $\cF_{m,r}$ be the set of binary codes with a generator matrix of the form $[A\ I_r]$, where $A$ has rank $r$.
A code from $\cE_{n,k,r}$ can be considered as a combination of codes from $\cE_{n,m}$  and  $\cF_{m,r}$.
\begin{lemma}\label{lem:Number of Snkr codes} The size  of the ensemble  $\cE_{n,m}$  is
	\begin{align}
	|\cE_{n,m}|=|\cE_{n,m}^\perp|=\prod_{u=0}^{m-1} {(2^{2(n-u)}-1)(2^{m}-2^u)\over 2^{u+1}-1},
	\end{align}
	and any vector $\bw=(\ba,\bb)$ with $\ba\in \mF_4^n\setminus {\bf 0}$ and $\bb\in \mF_2^{m}\setminus {\bf 0}$ is contained
	in
	\begin{align}
	L=\prod_{u=1}^{m-1} {(2^{2(n-u)}-1)(2^{m}-2^u)\over 2^{u}-1}
	\end{align}
	codes from $\cE_{n,m}$.
\end{lemma}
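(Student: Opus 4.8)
\textbf{Proof proposal for Lemma~\ref{lem:Number of Snkr codes}.}

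The plan is to count in two ways the pairs $(C, \bw)$ where $C \in \cE_{n,m}$ is a DS code with generator matrix $[H\ I_m]$ (so $H$ is an $m\times n$ matrix over $\mF_4$ whose rows span a self-orthogonal code $C_H$, i.e.\ $HH^\dagger = 0$ with respect to the trace inner product) and $\bw = (\ba,\bb) \in C$ with $\ba \neq \mathbf 0$, $\bb \neq \mathbf 0$. First I would establish the size $|\cE_{n,m}|$ directly by building the matrix $H$ row by row. The $j$th row $\bg_j$ must be a vector in $\mF_4^n$ orthogonal (under $*$) to $\bg_1,\dots,\bg_{j-1}$ and self-orthogonal, and it must be linearly independent from the previous rows so that $[H\ I_m]$ has rank $m$. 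Counting the number of valid choices for each successive row, and then dividing by the number of ordered bases of a fixed $m$-dimensional self-orthogonal code (to pass from ordered generating sets to codes — equivalently, fixing that the second block is exactly $I_m$ already rigidifies the generating set, so one counts generator matrices directly), yields the stated product $\prod_{u=0}^{m-1}\frac{(2^{2(n-u)}-1)(2^m - 2^u)}{2^{u+1}-1}$. The factor $2^{2(n-u)}-1$ counts nonzero vectors in $\mF_4^{n}$ modulo the constraint from already-chosen rows; the factor $2^m - 2^u$ accounts for the freedom in the $\mF_2^m$-block being consistent with rank; and the denominator $2^{u+1}-1$ removes overcounting. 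The equality $|\cE_{n,m}| = |\cE_{n,m}^\perp|$ is immediate since duality is a bijection on the ensemble.

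Next, for the second claim, I would fix a nonzero vector $\bw = (\ba,\bb)$ with $\ba \in \mF_4^n \setminus \{\mathbf 0\}$, $\bb \in \mF_2^m \setminus \{\mathbf 0\}$, and count the DS codes $C \in \cE_{n,m}$ containing $\bw$. Since the second block of the generator matrix is $I_m$, membership $\bw \in C$ forces the coefficient vector to be exactly $\bb$: that is, $\bw = \bb\,[H\ I_m]$, hence $\ba = \bb H$. So I am counting the matrices $H$ with $\bb H = \ba$, subject to $HH^\dagger = 0$ and $\operatorname{rank}[H\ I_m] = m$ (the latter is automatic). Because $\bb \neq \mathbf 0$, I can change basis in $\mF_2^m$ so that $\bb = e_1$; then the first row of $H$ is pinned to $\ba$, and $\ba$ being a row of a self-orthogonal code forces $\ba * \ba = 0$ — one checks this holds for the relevant vectors, or restricts attention to such $\ba$. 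The remaining $m-1$ rows are then chosen exactly as in the first count but with one fewer degree of freedom at each stage (they must additionally be orthogonal to $\ba$), which shifts the index and produces $L = \prod_{u=1}^{m-1}\frac{(2^{2(n-u)}-1)(2^m - 2^u)}{2^u - 1}$. The shift of the denominator from $2^{u+1}-1$ to $2^u-1$ reflects that fixing the first row removes one layer of basis-change overcounting.

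The main obstacle I anticipate is bookkeeping the exact form of the multiplicative factors — in particular getting the denominators $2^{u+1}-1$ versus $2^u-1$ correct, which hinges on a careful treatment of how much freedom remains in choosing each successive row of $H$ once we insist both on self-orthogonality \emph{and} on the rank condition forcing the $\mF_2^m$-block to remain $I_m$ after row operations. A secondary subtlety is verifying that the count is genuinely independent of which nonzero $(\ba,\bb)$ is chosen; this should follow from a transitivity/symmetry argument (the group of allowed coordinate changes acts transitively on the relevant set of pairs), but it needs to be stated cleanly. Once the recursive row-by-row count is set up correctly, both formulas drop out by taking a telescoping product, and the double-counting identity $|\cE_{n,m}| \cdot (\text{fraction of codes containing }\bw) = (\text{number of valid }\bw) \cdot L / (\text{something})$ can be used as an independent consistency check on the arithmetic.
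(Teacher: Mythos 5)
Your route is sound and genuinely different from the paper's. The paper does not build $H$ row by row: it quotes from \cite{A14} the number $S=\prod_{u=0}^{m-1}\frac{2^{2(n-u)}-1}{2^{u+1}-1}$ of $[n,m]$ additive self-orthogonal codes and the number $P=\prod_{u=1}^{m-1}\frac{2^{2(n-u)}-1}{2^{u}-1}$ of such codes containing a fixed nonzero $\ba$, and then multiplies by the number of ordered bases of the data code ($T=\prod_{u=0}^{m-1}(2^m-2^u)$, respectively the number $R=\prod_{u=1}^{m-1}(2^m-2^u)$ of bases compatible with $\bb H=\ba$), using that a code in $\cE_{n,m}$ is the same thing as a matrix $[H\ I_m]$. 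Your construction is more self-contained: at stage $u$ the next row must lie in the trace-dual of the span of the rows already chosen (an $\mF_2$-space of dimension $2n-u$) and outside that span, giving $2^{2n-u}-2^u$ choices, so $|\cE_{n,m}|=\prod_{u=0}^{m-1}(2^{2n-u}-2^u)=2^{m(m-1)/2}\prod_{u=0}^{m-1}(2^{2(n-u)}-1)$, which equals the stated product because $\prod_{u=0}^{m-1}\frac{2^m-2^u}{2^{u+1}-1}=2^{m(m-1)/2}$; likewise the pinned count $\prod_{u=1}^{m-1}(2^{2n-u}-2^u)$ equals $L$ because $\prod_{u=1}^{m-1}\frac{2^m-2^u}{2^{u}-1}=2^{m(m-1)/2}$. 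Your observations that $(\ba,\bb)\in C$ forces $\ba=\bb H$, that $[H\ I_m]\mapsto$ code is injective, and the reduction to $\bb=e_1$ via the $GL_m(\mF_2)$ action on the syndrome coordinates (which maps $\cE_{n,m}$ to itself, since row-reducing the transformed generator matrix back to the form $[H'\ I_m]$ leaves the $\mF_2$-row space of the data block unchanged) are all correct, and arguably cleaner than the paper's terse ``first basis vector'' phrasing.

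Two points must be nailed down for your writeup to close. First, the issue you flagged about $\ba*\ba=0$ is not optional: the lemma asserts the count $L$ for \emph{every} nonzero $\ba$, so you must note that the trace inner product is alternating --- $\ba*\ba=\Tr^{\mF_4}_{\mF_2}\bigl(\sum_i a_i\overline{a}_i\bigr)$, where $\sum_i a_i\overline{a}_i\in\{0,1\}$ and $\Tr^{\mF_4}_{\mF_2}(1)=1+1=0$ --- hence every vector of $\mF_4^n$ is isotropic, no restriction on $\ba$ is needed, and the row-by-row count is manifestly independent of $\ba$ (and of $\bb$, by the symmetry above). Second, your attribution of the individual factors is not a correct literal count: once the second block is fixed to $I_m$ there is no residual freedom in the $\mF_2^m$-block and nothing to divide out, so the factors $2^m-2^u$ and the denominators $2^{u+1}-1$ (respectively $2^u-1$) do not arise as separate counting steps in your construction; they appear only after the algebraic regrouping above. (In the paper's decomposition they do carry counting meaning: bases of the self-orthogonal data code, and denominators inherited from the count of self-orthogonal codes.) If you follow the row-by-row plan, state the per-row factor as $2^{2n-u}-2^u$ and verify the two product identities; taken literally, the ``remove overcounting'' narrative would have you divide where no division is warranted.
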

\proof
It is proved that the number of $[n,m]$ additive self-orthogonal codes over $\mathbb{F}_4$
is
$
S\triangleq\prod_{u=0}^{m-1} {(2^{2(n-u)}-1)\over 2^{u+1}-1}
$ \cite{A14}.
For any $[n,m]$ additive self-orthogonal code, we can choose $m$ generators (rows of matrix $H$) in
$
T\triangleq\prod_{u=0}^{m-1} (2^{ m}-2^u)
$
ways. Hence, using any $[n,m]$ self-orthogonal code, we can form $T$ different matrices $[H\ I_{m}]$. Thus,
$|\cE_{n,m}|=ST$. 

It is shown  that any nonzero vector $\ba\in \mF_4^n\setminus{\bf 0}$  is contained in
$P=\prod_{u=1}^{m-1} {2^{2(n-u)}-1\over 2^{u}-1}$~~
$[n,m]$ self-orthogonal codes \cite{A14}. We can use any of those codes for building a code from $\cE_{n,m}$ with vector $(\ba,\bb)$ as its first basis vector. The other $(m-1)$ basis vectors  can be
chosen in
$
R=\prod_{u=1}^{m-1} (2^{m}-2^u)
$
ways. Hence any $(\ba,\bb)$ is contained in $PR$ codes from $\cE_{n,m}$. 
\hfill\QED

\begin{lemma}  \label{lemma:Fmr}The size  of the ensemble  $\cF_{m,r}$  is
	\begin{align}
	|\cF_{m,r}|=\left[{m\atop r}\right]\prod_{u=0}^{r-1} (2^r-2^u)=\prod_{u=0}^{r-1} (2^m-2^u),
	\end{align}
	where
$\left[{m\atop r}\right]=\frac{(2^m-1)(2^{m-1}-1)\cdots (2^{m-r+1}-1)}{(2^r-1)(2^{r-1}-1)\cdots (2-1)}$
 is the \emph{binary Gaussian binomial coefficient}, and any vector $(\bb,\bc)$ for $\bb\in \mF_2^{m}\setminus {\bf 0}$ and $\bc\in \mF_2^{r}\setminus {\bf 0}$ is contained in \begin{align}
	\left[{m-1\atop r-1}\right]\prod_{u=1}^{r-1} (2^r-2^u)=\prod_{u=1}^{r-1} (2^m-2^u)\end{align} codes from $\cF_{m,r}$.
\end{lemma}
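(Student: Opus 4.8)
The plan is to count the ensemble $\cF_{m,r}$ directly by counting, equivalently, the rank-$r$ binary $r\times m$ matrices $[A\ I_r]$, i.e. the full-rank matrices $A$ of size $r \times m$ — wait, more precisely, matrices whose $r$ rows are linearly independent. First I would observe that a generator matrix $[A\ I_r]$ is determined by the $r\times m$ matrix $A$, and that the condition $\mbox{rank}(A) = r$ is automatically satisfied whenever the full matrix $[A\ I_r]$ has rank $r$, which it always does because of the $I_r$ block. So I need to recount: the constraint "rank$(A)=r$" in the ensemble definition is what matters. I would carefully reconcile this with the claimed formula. The right-hand expression $\prod_{u=0}^{r-1}(2^m - 2^u)$ is exactly the number of ways to pick an ordered $r$-tuple of linearly independent vectors in $\mF_2^m$: pick the first nonzero vector in $2^m-1$ ways, the second outside its span in $2^m - 2$ ways, and so on. Hence $|\cF_{m,r}| = \prod_{u=0}^{r-1}(2^m - 2^u)$, and the factorization through the Gaussian binomial coefficient follows from the standard identity $\left[{m\atop r}\right]\prod_{u=0}^{r-1}(2^r - 2^u) = \prod_{u=0}^{r-1}(2^m - 2^u)$, since $\prod_{u=0}^{r-1}(2^r-2^u)$ counts ordered bases of a fixed $r$-dimensional subspace and $\left[{m\atop r}\right]$ counts the subspaces themselves.

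For the second part, I would count pairs consisting of a code $D \in \cF_{m,r}$ and a distinguished property, namely that a fixed vector $(\bb,\bc)$ with $\bb \neq \bf 0$, $\bc \neq \bf 0$ lies in $D$. A vector of $D$ has the form $\bu[A\ I_r] = (\bu A, \bu)$ for $\bu \in \mF_2^r$; so $(\bb,\bc) \in D$ forces $\bu = \bc$ and then $\bc A = \bb$. Since $\bc \neq \bf 0$, the equation $\bc A = \bb$ is a consistent linear constraint on the rows of $A$: writing $A$ by its rows $\ba_1,\dots,\ba_r$, it reads $\sum_i c_i \ba_i = \bb$. I would then count the number of full-rank $A$ satisfying this. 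Pick an index $i_0$ with $c_{i_0}=1$; the remaining $r-1$ rows $\{\ba_i : i \neq i_0\}$ can be chosen to be any linearly independent $(r-1)$-tuple together with $\ba_{i_0} = \bb - \sum_{i\neq i_0} c_i \ba_i$, and one checks the resulting $r$-tuple is automatically independent (using $\bb\ne\bf 0$ is not even needed for independence, but it is needed for the two blocks of $(\bb,\bc)$ to be nonzero as in the statement — I'd double-check whether the independence goes through unconditionally or whether $\bb\ne\bf 0$ is genuinely used, and state it precisely). This gives $\prod_{u=1}^{r-1}(2^m - 2^u)$ choices (pick $\ba_i$ for $i\neq i_0$ outside the span of the previously chosen ones, the span starting at dimension $0$... actually at dimension growing from $1$ because $\ba_{i_0}$ is already fixed and nonzero — this is the subtle indexing point, and is why the product starts at $u=1$).

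The main obstacle I expect is precisely this last bookkeeping: showing that fixing one row to the value $\ba_{i_0} = \bb - \sum_{i\neq i_0}c_i\ba_i$ and choosing the other $r-1$ rows generically yields exactly $\prod_{u=1}^{r-1}(2^m-2^u)$ full-rank matrices, with the span dimensions correctly offset by one because of the already-fixed nonzero row. I would handle this by a clean induction or by the change of basis that sends $\bc$ to the first standard basis vector, reducing to: "$A$ has first row $\bb$, remaining $r-1$ rows arbitrary subject to $[A\ I_r]$ full rank," and then the count is transparent. The factorization through $\left[{m-1\atop r-1}\right]$ then mirrors the first part: $\left[{m-1\atop r-1}\right]$ counts the $(r-1)$-dimensional subspaces available "transverse" to the fixed data, and $\prod_{u=1}^{r-1}(2^r-2^u)$... — again I would invoke the standard Gaussian-binomial identity $\left[{m-1\atop r-1}\right]\prod_{u=1}^{r-1}(2^r-2^u) = \prod_{u=1}^{r-1}(2^m - 2^u)$ to match the two stated forms, completing the proof. \hfill\QED
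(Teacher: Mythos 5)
Your proposal is correct, and it is in substance the counting argument the paper intends: the paper omits this proof as ``similar to the previous one'', i.e.\ a direct enumeration of systematic generator matrices, and your matrix-first count (full-rank $r\times m$ matrices $A$, then those with $\bc A=\bb$) is the same enumeration organized differently from the stated form, which is subspace-first: $\left[{m\atop r}\right]$ counts the possible row spaces of $A$ and $\prod_{u=0}^{r-1}(2^r-2^u)$ the ordered bases of a fixed row space (likewise $\left[{m-1\atop r-1}\right]$ counts the row spaces containing $\bb$); your appeal to the Gaussian-binomial identity reconciles the two forms, and the implicit bijection between codes in $\cF_{m,r}$ and rank-$r$ matrices $A$ is legitimate because the systematic generator matrix $[A\ I_r]$ of such a code is unique. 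One point you left hedged should be settled the way your second route does it, not the first: after fixing $i_0$ with $c_{i_0}=1$, choosing the other $r-1$ rows independent and setting $\ba_{i_0}=\bb-\sum_{i\neq i_0}c_i\ba_i$ does \emph{not} automatically give a full-rank matrix -- it is full rank iff $\bb\notin\mathrm{span}\{\ba_i: i\neq i_0\}$ -- and $\bb\neq{\bf 0}$ is genuinely needed, since for $\bb={\bf 0}$ and $\bc\neq{\bf 0}$ no full-rank $A$ satisfies $\bc A={\bf 0}$ and the count would be zero rather than $\prod_{u=1}^{r-1}(2^m-2^u)$. Your change-of-basis reduction closes this cleanly: taking $T$ invertible with first row $\bc$, the map $A\mapsto TA$ is a bijection from $\{A:\ \mathrm{rank}\,A=r,\ \bc A=\bb\}$ onto the full-rank matrices with first row $\bb$, which are counted by $(2^m-2)(2^m-4)\cdots(2^m-2^{r-1})=\prod_{u=1}^{r-1}(2^m-2^u)$, exactly as claimed.
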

The proof of this lemma is similar to the previous one and is omitted.


\begin{lemma}
(1) Any vector $(\ba, \bb,\bc)$, where  $\ba\in \mF_4^n,\setminus{\bf 0}$, $\bb\in\mF_2^{m},\setminus{\bf 0}$, $\bc \in \mF_2^r\setminus{\bf 0}$, is contained in
\begin{equation}\label{eq:abc}
F(\ba, \bb,\bc)=L(2^m-2)\prod_{u=1}^{r-1}(2^m-2^u)
\end{equation}
codes from $\cE_{n,k,r}$.
(2) Any  $(\ba, \bb,{\bf 0})$, where $\ba\in \mF_4^n\setminus{\bf 0}$, $\bb\in\mF_2^{m}\setminus{\bf 0}$, ${\bf 0}\in \mF_2^r$, is contained in
\begin{equation}\label{eq:ab0}
F(\ba, \bb,{\bf 0})=L\prod_{u=0}^{r-1} (2^m-2^u)
\end{equation}
codes from $\cE_{n,k,r}$. 
(3) Any  $({\bf 0}, \bb,\bc)$,  where ${\bf 0}\in \mF_4^n$, $\bb\in\mF_2^{m}\setminus{\bf 0}$, $\bc\in \mF_2^r\setminus{\bf 0}$,  is contained in
\begin{equation}\label{eq:0bc}
F({\bf 0}, \bb,\bc)=ST \prod_{u=1}^{r-1} (2^m-2^u)
\end{equation}
codes from $\cE_{n,k,r}$. 
(4) Any   $(\ba,{\bf 0},\bc)$,  where  $\ba\in \mF_4^n,\setminus{\bf 0}$, ${\bf 0}\in\mF_2^{m}$, $\bc \in \mF_2^r\setminus{\bf 0}$, is contained in
\begin{equation}\label{eq:a0c}
F(\ba,{\bf 0},\bc)=L\prod_{u=0}^{r-1} (2^m-2^u)
\end{equation}
codes from $\cE_{n,k,r}$.

\end{lemma}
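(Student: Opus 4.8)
The plan is to exploit the product structure of the ensemble. Every code in $\cE_{n,k,r}$ is specified by a pair $(E,F)$, where $E\in\cE_{n,m}$ is the code with generator matrix $[H\ I_m]$ (the top block of the generator matrix of $C\ds$) and $F\in\cF_{m,r}$ is the code with generator matrix $[A\ I_r]$ (its bottom block); conversely each such pair yields a code of the ensemble, so $\cE_{n,k,r}$ is in bijection with $\cE_{n,m}\times\cF_{m,r}$. A generic codeword of the code $(E,F)$ is $(\bu_1 H,\ \bu_1+\bu_2 A,\ \bu_2)$ with $\bu_1\in\mF_2^m$ and $\bu_2\in\mF_2^r$. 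Hence, for a fixed target vector $(\ba,\bb,\bc)$, the membership $(\ba,\bb,\bc)\in C\ds$ is equivalent to $\bu_2=\bc$, $\bu_1=\bb+\bc A$ (arithmetic in characteristic $2$), and $(\bb+\bc A)H=\ba$; since $H$ has full row rank $m$, the vectors $\bu_1$ and $\bu_2$ are \emph{forced}, and the only surviving condition is $(\bb+\bc A)H=\ba$.

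First I would record two reductions of this condition. (i) If $\ba\neq{\bf 0}$, then $(\bb+\bc A)H=\ba$ forces the already-determined vector $\bu_1=\bb+\bc A$ to be nonzero, and by Lemma~\ref{lem:Number of Snkr codes} every fixed nonzero $\bu_1$ lies, together with $\ba$, in exactly $L$ codes of $\cE_{n,m}$; consequently the number of admissible pairs is $L\cdot\#\{F\in\cF_{m,r}:\bb+\bc A\neq{\bf 0}\}$. (ii) If $\ba={\bf 0}$, then $(\bb+\bc A)H={\bf 0}$ forces $\bb+\bc A={\bf 0}$ (as $H$ has trivial left kernel), so the condition collapses to the $F$-only requirement $(\bb,\bc)\in F$ with every $E$ admissible, giving the count $|\cE_{n,m}|\cdot\#\{F\in\cF_{m,r}:(\bb,\bc)\in F\}$.

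Then I would dispatch the four cases, evaluating the relevant count of $F$'s from Lemma~\ref{lemma:Fmr}. Case (2), $(\ba,\bb,{\bf 0})$ with $\ba,\bb\neq{\bf 0}$: here $\bb+\bc A=\bb\neq{\bf 0}$ for \emph{every} $F$, so the count is $L\,|\cF_{m,r}|=L\prod_{u=0}^{r-1}(2^m-2^u)$, which is (\ref{eq:ab0}). Case (4), $(\ba,{\bf 0},\bc)$ with $\ba,\bc\neq{\bf 0}$: since $\mathrm{rank}(A)=r$ we have $\bc A\neq{\bf 0}$ for every $F$, so again the count is $L\prod_{u=0}^{r-1}(2^m-2^u)$, i.e.\ (\ref{eq:a0c}). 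Case (3), $({\bf 0},\bb,\bc)$ with $\bb,\bc\neq{\bf 0}$: reduction (ii) together with Lemma~\ref{lemma:Fmr} gives $|\cE_{n,m}|\prod_{u=1}^{r-1}(2^m-2^u)$, which matches (\ref{eq:0bc}) since $|\cE_{n,m}|=ST$. Case (1), $(\ba,\bb,\bc)$ all nonzero: $\bb+\bc A={\bf 0}$ iff $\bc A=\bb$ iff $(\bb,\bc)\in F$, so $\#\{F:\bb+\bc A\neq{\bf 0}\}=|\cF_{m,r}|-\prod_{u=1}^{r-1}(2^m-2^u)=\prod_{u=0}^{r-1}(2^m-2^u)-\prod_{u=1}^{r-1}(2^m-2^u)=(2^m-2)\prod_{u=1}^{r-1}(2^m-2^u)$, whence the count is $L(2^m-2)\prod_{u=1}^{r-1}(2^m-2^u)$, i.e.\ (\ref{eq:abc}).

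The point that needs the most care---the crux rather than a genuine obstacle---is the bijective identification $C\ds\leftrightarrow(E,F)$ together with the observation that, for a fixed $F$, the number of admissible $E$ depends on $F$ only through whether the forced vector $\bu_1=\bb+\bc A$ vanishes, and equals the \emph{constant} $L$ of Lemma~\ref{lem:Number of Snkr codes} whenever it does not (independently of which nonzero vector it happens to be). Granting this, each of the four cases reduces to a one-line substitution, and everything else is bookkeeping with the products $\prod_u(2^m-2^u)$ already computed in Lemmas~\ref{lem:Number of Snkr codes} and~\ref{lemma:Fmr}.
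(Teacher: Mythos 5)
Your proof is correct and takes essentially the same route as the paper: the same identification of a code in $\cE_{n,k,r}$ with a pair from $\cE_{n,m}\times\cF_{m,r}$ and the same two counting lemmas, with only the bookkeeping organized differently (you fix $F$, observe the forced coefficients $\bu_2=\bc$ and $\bu_1=\bb+\bc A$, and count the admissible $F$ by complementation, whereas the paper sums over the $2^m-2$ admissible intermediate vectors $\x=\bu_1$). The paper writes out only case (1) and declares the others similar, so your explicit handling of cases (2)--(4), including the use of $\mathrm{rank}(A)=r$ in case (4), matches and slightly fills out the printed argument.
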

\begin{proof}
	We prove the first one and the other three follow similarly. 
	
	A vector $(\ba, \bb,\bc)$ can be obtained only as the sum of a vector $(\ba,\x,{\bf 0})$ (where $\x\in \mF_2^m\setminus{\bf 0},\x\not=\bb,~{\bf 0}\in \mF_2^r$, and $(\ba,\x)$ is a code vector of a code from $\cE_{n,m}$)
	and a code vector $({\bf 0},\bb+\x,\bc)$ (where $(\bb+\x,\bc)$ is a code vector of of a code from $\cF_{m,r}$). Any given $(\bb+\x,\bc)$ is contained in $ \prod_{u=1}^{r-1} (2^m-2^u)$ codes from $\cF_{m,r}$ by Lemma~\ref{lemma:Fmr}. Since $(\ba,\x)$
	is contained in $L$ codes from $\cE_{n,m}$ and
	vector $\x$ can be chosen in $2^m-2$ ways, we have $(\ba, \bb,\bc)$ contained in $L(2^m-2) \prod_{u=1}^{r-1} (2^m-2^u)$ codes from $\cE_{n,k,r}$, which gives  (\ref{eq:abc}).
\end{proof}


In addition, we note that the total number of codes in $\cE_{n,k,r}$ is
\begin{align}
N=|S_{n,k,r}|=ST \left[{m\atop r}\right] \prod_{u=0}^{r-1} (2^r-2^u)= ST \prod_{u=0}^{r-1} (2^m-2^u).
\end{align}

Now we have all the ingredients needed for finding $\overline{B}_{i,j}$. It will be convenient to assume that ${a\choose b}=0$ if $a<b$ or $b<0$. Let consider the case of $i>0$ and $j>0$. Then
\begin{align*}
\overline{B}_{i,j}
=&{1\over N} \left(\sum_{{(\ba,\bb,{\bf 0}):\atop \mbox{\tiny wt}(\ba)=i,\mbox{\tiny wt}(\bb)=j}} F(\ba,\bb,{\bf 0}) + \sum_{{(\ba,{\bf 0},\bc):\atop \mbox{\tiny wt}(\ba)=i,\mbox{\tiny wt}(\bc)=j}} F(\ba,{\bf 0},\bc) +
\sum_{{(\ba,\bb,\bc):\atop \mbox{\tiny wt}(\ba)=i,\mbox{\tiny wt}(\bb)+\mbox{\tiny wt}(\bc)=j}} F(\ba,\bb,\bc)\right)\\
=&{1\over N} {n\choose i} 3^i\left( {m\choose j} F(\ba,\bb,{\bf 0}) + {r\choose j} F(\ba,{\bf 0},\bc) + F(\ba,\bb,\bc) \sum_{u=1}^{j-1} {m\choose u}{r\choose j-u} \right).
\end{align*}
Taking into account that $\sum_{u=0}^{j} {m\choose u}{r\choose j-u}={m+r \choose j}$, after some computations, we obtain (\ref{eq:Bij}). Equation\,(\ref{eq:B0j}) is obtained in a similar way.

To derive $\overline{B}_{i,j}^\perp$ we  use MacWilliams identities (\ref{eq:MacW}), which also hold for average weight enumerators $\overline{B}_{i,j}$ and $\overline{B}_{i,j}^\perp$. Changing the role of codes $C$ and $C^\perp$, we get, similar to (\ref{eq:MacW}):
\begin{align*}
\overline{B}_{i,j}^\perp
=&{1\over 2^{m+r} (4^n - 1)(2^m - 1)}\sum_{l=1}^n {n\choose l} l^3 K_i(l;n,4) \times  \sum_{t=1}^{m+r}  ((2^m - 2)
{r + m\choose t}+{m \choose t}+{r\choose t}) K_j(t;m+r,2) \\
&+  {1\over 2^{m+r} }  K_i(0;n,4)  \sum_{t=1}^{m+r} \left({m + r\choose t} -{m\choose t}-{r\choose t}\right) K_j(t;m+r,2)+{1\over 2^{m+r} }K_i(0;n,4) K_j(0;m+r,2).
\end{align*}

Using (\ref{eq:K_j(0)}), (\ref{eq:(n,j)K_i(x)}), and (\ref{eq:sum_binomial_times_Krawtchouk}), after long manipulations, we obtain (\ref{eq:Bd_i0}), (\ref{eq:Bd_ij}), and (\ref{eq:Bd_0j}).

\hfill\QED

\subsection{Proof of Theorem \ref{thm:asympt_aver_weight_distribution}} \label{app:dGV}
According to Markov's inequality for a given pair $i$ and $j$, we have
$$
\Pr\left(B_{i,j}(C^\perp)\ge  ((n+1)(m+r+1))^{1+\epsilon} \overline{B}_{i,j}^\perp \right)\le {1\over ((n+1)(m+r+1))^{1+\epsilon}},
$$
for any $\epsilon>0$.
Applying the union bound, we obtain
$$
\Pr(B_{i,j}(C^\perp)\ge ((n+1)(m+r+1))^{1+\epsilon} \overline{B}_{i,j}^\perp \mbox{ for at least one pair } i,j)\le {1\over ((n+1)(m+r+1))^{\epsilon}},
$$
and further
$
\Pr\left(B_{i,j}(C^\perp)< ((n+1)(m+r+1))^{1+\epsilon} \overline{B}_{i,j}^\perp \mbox{ for all } i,j\right)\ge 1-{1\over ((n+1)(m+r+1))^{\epsilon}}.
$
Hence there exists a code $C^\perp\in {\cE}_{n,m}^\perp$ such that
\begin{equation}\label{eq:Bij_expur}
B_{i,j}(C^\perp)\le ((n+1)(m+r+1))^{1+\epsilon} \overline{B}_{i,j}^\perp, \forall i,j.
\end{equation}
Now we  consider codes of growing lengths, i.e., $n\rightarrow \infty$.
Note that $m/n=(n-k)/n=1-R$. Recall that ${1\over n} \log_2 {n \choose i} = H(i/n)+o(1)$ \cite{MS77}. So    the three terms of the last factor of (\ref{eq:Bd_ij})  are that
\begin{align}
{1\over n} \log_2 {m+r \choose j}2^m&=(1-R+\rho)H\left({\xi\over 1-R+\rho}\right)+1-R,\label{eq:term1}\\
 {1\over n} \log_2 {r \choose j}2^m&=\rho H\left({\xi\over \rho}\right)+1-R,\label{eq:term2}\\
 {1\over n} \log_2 {m \choose j}2^r&=(1-R)H\left({\xi\over 1-R}\right)+\rho.\label{eq:term3}
\end{align}
Simple analysis shows that for $\rho\le 1-R$ (which is the same as $r\le n-k$),  we have that (\ref{eq:term1})  is always larger than (\ref{eq:term2}) and (\ref{eq:term3}). Hence
\begin{align*}
\overline{b}_{\iota,\xi}^\perp\triangleq & {1\over n} \log_2 ((n+1)(m+r+1))^{1+\epsilon} \overline{B}_{i,j}^\perp={1\over n} \log_2 \overline{B}_{i,j}^\perp + o(1)\\
=&H(\iota)+\iota\log_2(3)+(1-R+\rho)H\left({\xi\over 1-R+\rho}\right)-(1-R+\rho)+ o(1).
\end{align*}
The equation (\ref{eq:b_ij}) is obtained in a similar way.


Let us have $C^\perp$ that satisfies (\ref{eq:Bij_expur}).  Since $C^\perp$ is linear, all ${B}_{i,j}(C^\perp)$ are integers. Hence if $\iota^*$ and $\xi^*$ are such that $\overline{b}_{\iota,\xi}\le 0$ for
$\iota\le \iota^*$ and $\xi\le \xi^*$, then $B_{i,j}(C^\perp)=0$ for $1\le i \le (\iota^*-\epsilon)n,~1\le j \le (\xi^*-\omega)n$ for any $\epsilon,\omega>0$ and sufficiently large $n$. It is not difficult to see that if $\iota\le \delta_{GV}(R)$, then
$\overline{b}^\perp_{\iota,0}\le 0$. Similarly, if $\xi(\iota)=H^{-1}\left(1-H\left({\iota+\iota\log_2(3)\over 1-R+\rho}\right)\right)$, then $\overline{b}^\perp_{\iota,\xi(\iota)}=0$. Thus $B_{i,0}(C^\perp)=0$ for all $i\le (\delta_{GV}(R)-
\epsilon)n$ and $B_{i,j}(C^\perp)=0$ if $i+j\le (\iota+\xi(\iota)-\omega)n$. Hence  (\ref{eq:d_DS_GV}) follows. \hfill\QED

\end{document}